\documentclass[12pt,draftcls,onecolumn]{IEEEtran}
\usepackage{amsmath}
\usepackage{setspace}
\usepackage{color}
\usepackage{amssymb}
\usepackage{amsthm}
\usepackage{cite}
\usepackage{tikz}
\usepackage{caption}

\ifCLASSINFOpdf
\else
\fi
%
%


\begin{document}
%


\newtheorem{lemma}{Lemma}
\newtheorem{property}{Property}
\newtheorem{theorem}{Theorem}
\newtheorem{definition}{Definition}
\newtheorem{terminology}{Terminology}
\newtheorem{corollary}{Corollary}
\newtheorem{mytheorem}{Property}
\newtheorem{remark}{Remark}
\title{Uncertain Price Competition in a Duopoly with Heterogeneous Availability}

\author{Mohammad Hassan Lotfi,
        Saswati Sarkar,~\IEEEmembership{Senior Member,~IEEE}
\thanks{ M. H. Lotfi and Saswati Sarkar are with the Department of Electrical and Systems Engineering at University of Pennsylvania, Philadelphia,
PA, U.S.A. Their email addresses are lotfm@seas.upenn.edu and
swati@seas.upenn.edu respectively.}%
\thanks{Part of this work was presented in Allerton'12 \cite{Allerton}.}}
\maketitle

\vspace{-12mm}
\begin{abstract}
We study the price competition in a duopoly with an arbitrary number of buyers. Each seller can offer multiple units of a commodity depending on the availability of the commodity which is random and may be different for different sellers. Sellers seek to select a price that will be attractive to the buyers and also fetch adequate profits.  The selection will in general depend on the number of units available with the seller and also that of its competitor - the seller may only know the statistics of the latter. The setting captures a secondary spectrum access network, a non-neutral Internet, or a microgrid network in which unused spectrum bands, resources of ISPs, and  excess power units constitute the respective commodities of sale. We analyze this price competition as a game, and identify a set of necessary and sufficient properties for the Nash Equilibrium (NE). The properties reveal that sellers randomize their price using probability distributions whose support sets are mutually disjoint and in decreasing order of the number of availability. We prove the uniqueness of a symmetric NE in a symmetric market, and explicitly compute the price distribution in the symmetric NE.  

\end{abstract}

\begin{IEEEkeywords}
pricing, game theory, micro-grid networks, cognitive radio networks, secondary spectrum networks, network neutrality
\end{IEEEkeywords}

\IEEEpeerreviewmaketitle

\section{Introduction}


\subsection*{The Research Challenges and Goals}
\IEEEPARstart{W}{e} consider a market with two sellers, where each seller offers multiple commodities for sale. The commodities that are available for sale are randomly generated. In other words, sellers do not control the amount supplied or they may obtain the commodities from a residual supply. We investigate the price selection strategy for sellers in presence of uncertainty in competition using Game Theory~\cite{MWG}. Customers shop around for the lowest available prices. Therefore sellers seek to set prices that will ensure that their commodities are sold and also fetch adequate profit. In our model, a seller is not aware of the number of units available to her competitor before quoting her price. Thus, the competition that each seller faces is uncertain, and different sellers have different number of goods available (heterogeneous availability). Each seller selects the price per unit depending on the number of units she has available for sale, the statistics of the availability process for her competitor, and the demand. In general, each seller chooses her price randomly using different probability distributions for different availability levels.  Thus, the strategy of each player is a vector of probability distributions. For instance, if a seller can potentially offer up to three units of commodity, her vector of strategies would be $(\Phi_1(.),\Phi_2(.),\Phi_3(.))$, where $\Phi_i(.)$ is the price selection probability distribution when the seller offers $i$ units.



Due to uncertainty in competition, quoting a high price by a seller enhances the risk of not being able to sell the commodity offered by that seller. On the other hand, although selecting a low price increases the chance of winning the competition, it also decreases the profit earned by the seller. Therefore, pricing in presence of uncertainty in competition is a risk-reward tradeoff. It is not a priori clear that how offering multiple number of units affects the price selection by sellers. For instance, a seller with a large number of available units may be motivated to quote a low price, since in the event of winning the competition, a small amount of profit per unit would result in a large total profit.  On the other hand, a seller maybe enticed to select a high price when the availability is high to significantly increase her overall  profit, even at the risk of not being able to sell the available units.  We focus on investigating the impact of heterogeneous availability and uncertain competition on the aforementioned risk-reward tradeoff.

Uncertainty in competition is an integral feature of diverse sets of applications. In Section~\ref{application}, we outline the connection between the decision problem we considered and three different emerging  application domains: primary/secondary market, a non-neutral Internet, and microgrid networks. 

\vspace{-1mm}
\subsection*{Contributions}
We start by positioning our work in the context of the existing literature. We next model the price selection problem as a one-shot non-cooperative game (Section~\ref{system_model}). The sellers are allowed to have different probability distributions for different availability levels (asymmetric market). In Section~\ref{section:asymmetric_duopoly}, we identify key properties that every NE pricing strategy should satisfy when demand is greater than the maximum possible availability level. The properties reveal that the sellers randomize their price using probability distributions whose support sets are mutually disjoint and in decreasing order of the number of availability. In the context of the aforementioned risk-reward tradeoff,  sellers opt for low-risk pricing when they have high availability. In Section~\ref{section:duopolysufficiency}, we prove that any strategy profile that satisfies  the properties listed in Section~\ref{section:asymmetric_duopoly} constitutes an NE regardless of the relation between the demand and the number of available units. This sufficiency result naturally leads to an algorithm (Appendix~\ref{section:duopolyalgorithm_asym_framework}) for computing the strategies that satisfy the properties in Section~\ref{section:asymmetric_duopoly}. 

In Section~\ref{section:duopolyalgorithm}, we consider a symmetric market and prove that these properties are also necessary conditions for a NE regardless of the relation between the demand and the number of available units. We prove that the symmetric NE uniquely exists, and obtain an algorithm for explicitly computing it. Note that the uniqueness is specific to the symmetric market- our analysis in  Appendices~\ref{section:duopolyalgorithm_asym_example} and \ref{section:examples} reveals that an asymmetric market allows for multiple Nash equilibria. Results are generalized to the case of random demand in Section~\ref{section:random}.  
The asymptotic behavior of the symmetric NE (when $m\rightarrow \infty$) is investigated through numerical simulations  in Section~\ref{section:simulations}.



\vspace{-1mm}
\subsection*{Related Literature}\label{section:literatue}

Price competition among different entities has been extensively studied in \cite{commpricing,enerpricing,elecauction,swider2007bidding,Niyato1,Niyato2_MM,Zhou_TRUST,qual, fairelec, arnob}. In economics literature as also in the context of specific applications, uncertainty in competition has been investigated when the availability level is either zero or one \cite{J&R,kimmel,Gaurav_SPinCRN,Gaurav_Spectrum_White,Gaurav_microgrid}.  The strategy profile of each seller consists of only one probability distribution since sellers need to select a price only when they have one unit available for sale.  We, however, characterize the Nash equilibrium pricing strategies when sellers have arbitrary and potentially different number of available units for sale (not merely zero or one). In this case, different price selection strategies may be required for different number of available units. Thus, the pricing strategy profile of each seller is a collection of probability distributions, one for each availability value. Therefore both results and proofs are substantially different from previous works. 

Another genre of work allows sellers to control the amount of commodities they would generate for sale \cite{Couchman,Klemperer,Wang,federico2003bidding,holmberg2009supply,Anderson,underpricing,kastl}. In these works, sellers (e.g. power generators) bid a supply function \footnote{A supply function is a function that maps the price of the commodity under sale to the amount a producer will produce for sale.} to a central auctioneer. Given the demand and the bids submitted, the auctioneer solves an optimization problem to determine the number of units needed to be generated by the sellers and subsequently the price that should be paid to them. In \cite{Couchman,Klemperer, Wang}, the setting is  a uniform-price procurement auction in which the price is equal for different sellers, i.e. the clearing price. However, in \cite{federico2003bidding,holmberg2009supply,Anderson,underpricing,kastl}, authors investigate a pay-as-bid (discriminatory) procurement auction which is similar to our work in the sense that the price can be different for the bidder (sellers in our case). In these works, central entity accepts the offers submitted by the sellers and pays the accepted offers based on the bid submitted. 
For instance in \cite{Anderson}, authors  provide a characterization of mixed equilibria over increasing supply curves. In other words, in their characterization, the price per infinitesimal unit of the commodity  is increasing, i.e., the higher the number of units produced, the higher the price per unit. Note that in \cite{federico2003bidding,holmberg2009supply,Anderson} authors consider divisible goods, i.e. continuous amount of goods for sale. However, in \cite{underpricing} and \cite{kastl}, the number of units is effectively discrete. In this sense, they have a closer model to ours. Nonetheless, the main distinction of our work with this entire genre of work is that we consider scenarios where sellers do not control the amount of commodities they produce. This distinction in the setup, lead to major differences in the formulation, analyses, and results.




\vspace{-1mm}

\section{Market Model and Problem Formulation}\label{system_model}

\subsection{Market Model}

First, we define some preliminary notation. Then sellers' decision and information are described.

\subsubsection{Preliminary notation}

We consider a market with two sellers in which each seller owns multiple number of the same commodity and  quotes a price per unit. The total demand of the market is $d$ units. For simplicity, the demand is assumed to be deterministic. The generalization to random $d$ is straightforward, and is presented in Section~\ref{section:random}.

Buyers prefer the seller who quotes a lower price per
unit, and they are equally likely to buy a unit from sellers who select equal prices. Thus, if sellers have $a$, $b$ units to sell respectively and quote prices of $x$, $y$ per unit, where $x < y$, then they respectively sell $\min\{a,d\}$, $\min\{b, (d-a)^+\}$ units, where $z^+$ denotes $\max\{z, 0\}$. The cost of each transaction is $c$. Therefore, a seller earns a profit of $i(x-c)$ when she sells $i$ units with price $x$ per unit. Because of regulatory restrictions or because of valuations that buyers associate with purchase of each unit, the price selected by each seller should be bounded by some constant $v > c$, i.e. $x\leq v$. The availability of each seller is random:

\begin{terminology}
We denote $m_k$ as the maximum possible number of available units of seller $k$. Let $q_{kj}\in [0, 1]$ be the probability that seller $k$ has $j\in\{0,\dots,m_k\}$ units available, and $\vec{q}_k=(q_{k0},\dots,q_{km_k})$.
\end{terminology}

The availability of sellers may for example follow binomial distributions 
$\mathcal{B}(m_1,p_1)$ and $\mathcal{B}(m_2,p_2)$. Specifically, if $p_1=0.5$, $p_2=0.3$, $m_1=3$, and $m_2=2$, then $\vec{q}_1=(\frac{1}{8},\frac{3}{8},\frac{3}{8},\frac{1}{8})$ and  $\vec{q}_2=(\frac{49}{100},\frac{42}{100},\frac{9}{100})$.

We assume that sellers have zero unit available for sale with positive probability, i.e., $q_{k0}>0$ for $k\in\{1,2\}$, and the competition is uncertain, i.e., $q_{\bar{k}i}<1$ for $i\in\{0,1,\dots,m_k\}$  for at least one seller $k$. \footnote{Note that if this exists $i,j\in\{0,\dots,m_k\}$ such that $q_{1i}=1$ and $q_{2j}=1$, then both sellers know the exact number of available units with the other seller. Thus the competition is deterministic.} Note that if competition is deterministic for both sellers, then the problem is trivial.


\begin{terminology}
For each seller $k$, let $\bar{k}$ denote the other seller, i.e., if $k=1$ (respectively, $k=2$), then $\bar{k}=2$ (respectively $\bar{k}=1$).
\end{terminology}

\subsubsection{Sellers' decisions and information}

Sellers select their price based on the number of units they offer in the market.  Before choosing her price, a seller does not know the number of units of the commodity that her competitor has available
 for sale and the price per unit her competitor selects. She is however aware of the demand and the distributions for the above quantities. A seller may select her price randomly.

\begin{terminology}
Let $\Phi_{kj}(.)$ be the probability distribution that the seller $k\in \{1,2\}$ uses for selecting price per unit when she offers $j$ units. Let $\tilde{p}_{kj}$ and $\tilde{v}_{kj}$ be the infimum and the supremum of the \emph{support set}\footnote{The support set of a probability distribution is the smallest closed set such that its complement has probability zero
under the distribution function.} of $\Phi_{kj}(.)$. 
The strategy profile of seller $k$ is $\Theta_k(.)=(\Phi_{k1}(.),\dots,\Phi_{km_k}(.))$.
\end{terminology}

An example of probability distributions, support sets, and their infimums and supremums is presented in Figure~\ref{figure:dist}. In this figure, the infimums ($\tilde{p}_{kj}$'s) are illustrated explicitly, and $\tilde{v}_{kj}=\tilde{p}_{k,j-1}$ (For instance, $\tilde{v}_{13}=\tilde{p}_{12}$).  Note that, Figure~\ref{figure:dist} presents the distributions which are strictly increasing between the infimum and the supremum of their support sets. However, the probability distributions in general may consist of strictly increasing and flat parts. For example, a probability distribution that is strictly increasing over intervals $[a,b]$ and $[c,d]$, and flat over interval $[b,c]$.  Unlike the previous example, the support set of this probability distribution ($[a,d]\cup [c,d]$) is not connected.


\vspace{-11pt}
\subsection{Problem Formulation}
Note that in general, the number of units a seller sells and her profit can be random.

\begin{terminology}
Let  $u_k (\Theta_k(.), \Theta_{\bar{k}}(.))$ denotes  the expected profit of seller $k$ when she adopts strategy profile $\Theta_k(.)$ and her competitor adopts $\Theta_{\bar{k}}(.)$. 
\end{terminology}

 In a Bayesian game (where players are modeled as risk-neutral), rational players are seeking to maximize their expected payoff, given their beliefs about the other players (

\begin{definition}
 A \emph{Nash equilibrium} (NE) \footnote{Clearly, our game is a Bayesian game with the number of available units for sale being the type of a player. For the sake of notational convenience, we use Nash equilibrium as an alternative for Bayesian Nash equilibrium.} is a strategy profile such that no seller can improve her expected profit by unilaterally deviating from her strategy. Therefore, $(\Theta_1^\star(.),\Theta_2^\star(.))$ is a NE if for each seller $k$:
\begin{equation}
u_k(\Theta^*_{k}(\cdot),\Theta^*_{\bar{k}}(\cdot))\geq u_k(\tilde{\Theta}_{k}(\cdot),{\Theta}^*_{\bar{k}}(\cdot)),\quad \forall \quad \tilde{\Theta}_k(\cdot).
\nonumber
\end{equation}
\end{definition}

\begin{terminology}
With slight abuse of notation,
we denote $u_{kl}(x)$ as the expected profit  that seller $k$ earns, and
  $B_{kl}(x)$ as the expected number of units that seller $k$ sells, when she offers $l$ units for sale with price $x$ per unit, respectively (the dependence on the competitor's strategy is implicit in this simplified notation).
\end{terminology}
\vspace{-3mm}
\begin{equation}\mbox{Clearly, } u_{kl}(x) = B_{kl}(x) (x-c). \label{equ:utility} \end{equation}


Note that $\frac{u_{kl}}{l}$ is the expected utility per unit of availability. Thus, $A_{k,l,j}(x) = \frac{1}{l}u_{kl}(x)-\frac{1}{j}u_{kj}(x)$ is the difference  between the utility per availability for availability levels $l$ and $j$. We will see that $A_{k,l,j}(x)$ plays an important role throughout in the proofs, which motivates the following terminology:

\begin{terminology}\label{terminology:Aklj}
Let $A_{k,l,j}(x) = \frac{1}{l}u_{kl}(x)-\frac{1}{j}u_{kj}(x)=(x-c)B_{k,l,j}(x) $, where $B_{k,l,j}(x) =\frac{1}{l} B_{kl}(x) - \frac{1}{j} B_{kj}(x)$.
\end{terminology}

\begin{terminology}
 Let $e_k=(d-m_{\bar{k}})^+$.
\end{terminology}

Note that for all $x \leq v$,
\begin{equation}
B_{kl}(x)=l\quad \quad l=1,\dots,e_k \label{equ:sellall}
\end{equation}
as $k$ will sell all she offers in this case given that the total offering is less than the demand. We would later obtain the expression for $B_{kl}(x)$ under  the NE strategy profiles when $ l >  e_k.$


\begin{definition}
A price $x$ is said to be a \emph{best-response price} for seller $k$ when she offers $j$ units if $u_{kj}(x) \geq u_{kj}(a)$ for all $a \in [0, v].$
\end{definition}

Note that a NE-strategy profile selects with positive probability only amongst the best-response prices. Thus, all the elements of support sets are best responses except potentially those on the boundaries (elements of boundaries may not be best responses) if there is a discontinuity in the utility at those points.

We seek to determine the Nash equilibrium strategy profile of sellers. If $m_1+m_2\leq d$, since there is no competition between sellers, both sellers offer their units with the monopoly price, $v$
at the NE. We therefore assume that $m_1+m_2>d$.



\section{Properties of a NE when $d > \max\{m_1,m_2\}$}\label{characduopoly_asym} \label{section:asymmetric_duopoly}

Note that from Corollary 5.2 in \cite{existence}, a mixed strategy NE exists for our model. In this section, we investigate the necessary conditions for a strategy to be an NE when $d > \max\{m_1,m_2\}$ (Theorem \ref{theorem:necessary}). We will explicitly point out whenever we use the assumption that  $d > \max\{m_1,m_2\}$. 

\begin{theorem}\label{theorem:necessary}
A NE must satisfy the following properties when $d> \max\{m_1,m_2\}$,
\begin{enumerate}
\item \label{prop:v} For each $k$, there exists a threshold  such that seller $k$ offers price $v$ with probability one if she has the availability level less than or equal to this threshold. This threshold, denoted as $l_k$ henceforth, is such that:
\begin{enumerate}
\item \label{prop:v1} $l_k\in\{e_k,\dots,m_k-1\}$
\item \label{prop:v2}$l_1+l_2=d-1$ or $l_1+l_2=d$
\end{enumerate}
\item \label{prop:2} When seller $k$ has $l_k+1$ \footnote{The same $l_k$ as the one in part 1.} units, she uses distribution $\Phi_{k,l_k+1}(.)$
\begin{enumerate}
\item \label{prop:2intermsupportset} whose support set is $[\tilde{p}_{k,l_k+1}, v]$,
\item  \label{prop:2intermcont} which is continuous throughout except possibly at $v$, and
\item \label{prop:2jumpv} has a jump at $v$ for at most one value of  $k\in\{1, 2\}$, and size
of such a jump is less than $1$
\end{enumerate}
\item \label{prop:3} When the availability level is $i\in\{l_k+2,\dots,m_k\}$\footnote{The same $l_k$ as the one in part 1}, seller $k$ uses distribution $\Phi_{ki}(.)$
\begin{enumerate}
\item \label{prop:2supportset} whose support set is $[\tilde{p}_{k,i}, \tilde{p}_{k, i-1}]$,
\item  \label{prop:2cont} which is continuous throughout
\item \label{prop:2last} $\tilde{p}_{1, m_k} = \tilde{p}_{2, m_k}$
\end{enumerate}
\item \label{prop:equal} The utility of seller $k$ when she offers $i$ units is equal for all  prices in the support set of $\Phi_{ki}(.)$, except possibly at price $v$ (if $v$ belongs to her support set).
\end{enumerate}
\end{theorem}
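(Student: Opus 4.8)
The plan is to establish the four properties through a chain of structural lemmas, working outward from the utility functions $u_{ki}(\cdot)$. The starting point is~\eqref{equ:utility}: since $u_{ki}(x)=B_{ki}(x)(x-c)$ and $B_{ki}(x)$ depends on the opponent only through the opponent's availability-weighted pricing behaviour, I would first write $B_{ki}(x)$ explicitly. When seller $k$ offers $i>e_k$ units at price $x$, she sells all $i$ units unless the opponent both prices below $x$ and holds enough units $j$ that $d-j<i$; tabulating these events expresses $B_{ki}(x)$ as a piecewise function of the opponent's distributions $\Phi_{\bar k j}(\cdot)$ and masses $q_{\bar k j}$. This makes $u_{ki}$ continuous away from atoms and isolates the candidate discontinuity at $v$.

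\textbf{Indifference and atoms.} Property~\ref{prop:equal} is then the standard mixed-NE indifference statement: every price in the support of $\Phi_{ki}$ is a best response and hence yields the common maximal value $u_{ki}^\star$, the only possible exception being a boundary point at which $u_{ki}$ is discontinuous. I would next show such discontinuities can occur only at $v$ and for at most one seller (Properties~\ref{prop:2intermcont}, \ref{prop:2jumpv}, \ref{prop:2cont}) by the classical undercutting argument: if seller $\bar k$ placed an atom at a price $x_0$, then seller $k$ strictly prefers pricing just below $x_0$ to pricing at or just above it, so $k$ vacates a right-neighbourhood of $x_0$; if $x_0<v$ this creates a gap into which $\bar k$'s own atom could be shifted upward at no loss of sales, a contradiction, while two coincident atoms at $v$ let either seller gain discretely by an $\varepsilon$-undercut. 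Hence atoms survive only at $v$, for a single seller, with mass below $1$, and all $\Phi_{ki}$ are continuous below $v$.

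\textbf{Threshold and stacking.} For $i\le e_k$ we have $B_{ki}(x)=i$ by~\eqref{equ:sellall}, so $u_{ki}(x)=i(x-c)$ is strictly increasing and $v$ is the unique best response; I would then show the set of availability levels pricing at $v$ with probability one is a down-set, yielding the threshold $l_k$ of Property~\ref{prop:v}, with $l_k\le m_k-1$ because $m_1+m_2>d$ forces the highest levels to compete rather than all quote $v$. The heart of the proof is the ordering of the randomising levels, for which $A_{k,l,j}(x)$ of Terminology~\ref{terminology:Aklj} is the right tool. Combining the two indifference inequalities $u_{kl}(x_1)\ge u_{kl}(x_2)$ and $u_{kj}(x_2)\ge u_{kj}(x_1)$ for $x_1>x_2$ in the respective supports gives $A_{k,l,j}(x_1)\ge A_{k,l,j}(x_2)$; I would prove separately, from the explicit $B_{ki}$, that $A_{k,l,j}(\cdot)$ is \emph{strictly decreasing} for $l>j$ (per-unit sales fall as availability rises). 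Strict monotonicity then forbids both overlap (which would force $A_{k,l,j}$ constant on an interval) and the wrong order, so the supports are disjoint intervals in decreasing order of availability. A no-gap argument (a seller atop one interval could otherwise raise her price up to the next occupied price without losing a sale) forces consecutive supports to abut, giving $\tilde v_{k,i}=\tilde p_{k,i-1}$ and hence Properties~\ref{prop:2intermsupportset} and~\ref{prop:2supportset}, while a common-lower-bound argument gives $\tilde p_{1,m_1}=\tilde p_{2,m_2}$.

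\textbf{Main obstacle.} Two difficulties stand out. First, the monotonicity lemma for $A_{k,l,j}$ is partly circular: computing $B_{ki}(x)$ precisely presumes the opponent's supports are intervals free of interior atoms, so the continuity/atom results and the ordering must be bootstrapped together rather than proved in strict sequence. Second, pinning down $l_1+l_2\in\{d-1,d\}$ in Property~\ref{prop:v2} is a delicate counting step: it comes from writing the indifference condition for the marginal (threshold-plus-one) level at the price $v$ for both sellers at once and checking that consistency of these two equations, together with the single-atom restriction of Property~\ref{prop:2jumpv}, admits only the two stated values of the sum. I expect this counting argument, and the careful treatment of the atom at $v$ feeding it, to be where most of the work lies.
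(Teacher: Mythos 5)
Your proposal follows essentially the same architecture as the paper's proof: an undercutting lemma to rule out atoms below $v$ and to handle coincident atoms at $v$, dominance of $v$ for availability at most $e_k$, equal lower bounds, a no-gap argument for the union of supports, and---as the central tool---monotonicity of $A_{k,l,j}$ combined with the paired indifference inequalities to order and separate the supports. The one place where your sketch has a genuine hole is the \emph{strictness} of the decrease of $A_{k,l,j}$. Your stated justification (``per-unit sales fall as availability rises'') only yields that $B_{k,l,j}(x)$ is non-increasing and non-positive, hence that $A_{k,l,j}(x)=(x-c)B_{k,l,j}(x)$ is non-increasing; but weak monotonicity is perfectly compatible with two availability levels of the same seller sharing an interval of prices (on which $A_{k,l,j}$ would simply be constant), so neither disjointness nor the ordering follows from it. Strictness requires showing that at least one opponent distribution entering $B_{k,l,j}$ with a strictly negative coefficient is strictly positive throughout $(\tilde{p},v)$; the paper's Lemma~\ref{lemma:decreasing_random_asym} does this by proving $\Phi_{\bar{k}m_{\bar{k}}}(x)>0$ for $x>\tilde{p}$ via a separate contradiction argument that itself leans on the no-gap property. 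This is also exactly where the hypothesis $d>\max\{m_1,m_2\}$ enters (it makes the coefficients $d-i$ strictly positive), and your sketch never identifies where that hypothesis is used---a signal that this step is missing rather than merely compressed.

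On part~\ref{prop:v2}, your ``consistency of the two indifference conditions at $v$'' is vague but points in the right direction: the clean formulation is the paper's Lemma~\ref{property:structure}, namely that any price $x\in[\tilde{p},v)$ must lie in supports $\Phi_{kl}(.)$ and $\Phi_{\bar{k}j}(.)$ with $l+j>d$ (otherwise seller $k$'s sales are locally constant in price and she strictly prefers $x+\epsilon$). Applied just below $v$, where the active levels are $l_1+1$ and $l_2+1$, this gives $l_1+l_2\geq d-1$; the atom-clash argument at $v$ gives $l_1+l_2\leq d$. You would need to isolate and prove that lemma explicitly rather than appeal to consistency of two equations.
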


In Appendix~\ref{section:duopolyalgorithm_asym}, we will present an algorithm to explicitly compute the NE strategies satisfying properties in Theorem~\ref{theorem:necessary}.  Using this algorithm, in Figure~\ref{figure:dist}, we plot an NE probability distribution of price when the vector of availability  distributions are $\vec{q_1}=[0.3,0.2,0.2,0.3]$ and $\vec{q_2}=[0.4,0.2,0.2,0.2]$, the demand , i.e. $d$, is $3$, $v=10$, and $c=6$. Note that in this case $l_1=l_2=1$, and $l_1+l_2=d-1$ (part~\ref{prop:v} at Theorem~\ref{theorem:necessary}). This means that both sellers offer price $v$ with probability one if they have one unit of commodity available. When sellers have availability $l_1+1=2$ and $l_2+1=2$ units available for sale, they use probability distributions $\Phi_{12}(.)$ and $\Phi_{22}(.)$, respectively, whose support sets are $[\tilde{p}_{12},v]$ and $[\tilde{p}_{22},v]$, respectively (part \ref{prop:2intermsupportset} of the Theorem). In addition, these distributions are continuous throughout except possibly at $v$ (part~\ref{prop:2intermcont}). Furthermore, only the probability distribution $\Phi_{22}(.)$ has a jump at price $v$ and the size of this jump is less than one (part~\ref{prop:2jumpv} of Theorem~\ref{theorem:necessary}). When sellers have availability level $l_1+2=l_2+2=3$, they use probability distributions $\Phi_{13}(.)$ and $\Phi_{23}(.)$, respectively, whose support sets are $[\tilde{p}_{13},\tilde{p}_{12}]$ and $[\tilde{p}_{23},\tilde{p}_{22}]$, respectively (part \ref{prop:2supportset} of Theorem~\ref{theorem:necessary}). In addition, these probability distributions are continuous throughout (part~\ref{prop:2cont}). Note that $\tilde{p}_{13}=\tilde{p}_{23}=\tilde{p}$ (part~\ref{prop:2last} of the Theorem). More numerical examples are presented in Appendix~\ref{section:duopolyalgorithm_asym_example}. 


\begin{figure}[t]
\begin{center}
\includegraphics[scale=0.43]{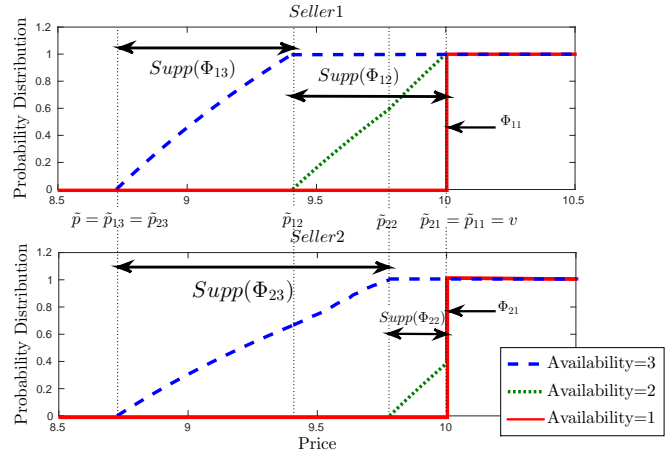}
\end{center}
\caption{An example of an NE pricing strategy, when $d=3$, $Supp=Support\ Set$. Note that $\Phi_{11}$ and $\Phi_{21}$ have a jump of magnitude one, and $\Phi_{22}$ has a jump of size $0.6$ at $v$.}
\label{figure:dist}
\end{figure}

We prove Theorem~\ref{theorem:necessary} using the following results which we first state and prove later.


\begin{enumerate}
\item \label{sum1}  The probability distribution of price, $\Phi_{ki}(x)$ for $i\in\{1,\dots,m_k\}$, is continuous for $x<v$ (Section~\ref{continuity}, Property~\ref{property:continuous_as}).
\item \label{sum2} The lower bound of prices are equal for both sellers (Section~\ref{lowerbound}, Property~\ref{property:equal_lower_bound}).
\item \label{sum3} There is no gap between support sets (Section~\ref{gap}, Property~\ref{property:contiguous_assym}).
\item \label{sum4} Support sets are disjoint barring common boundary points, and are in decreasing order of the number of available units for sale (Section~\ref{supportset}, Property \ref{property:supportsets}).
\item \label{sum5} The structure of NE at price $v$: A seller selects $v$ with probability one, if and only if the number of available units with her is less than or equal to a threshold $l_k\in\{0,1,\dots,m_k-1\}$, where $l_1+l_2=d$ or $l_1+l_2=d-1$ (Section~\ref{subsection:difference}, Property~\ref{property:difference}).
\end{enumerate}

Note that in Figure~\ref{figure:dist}, the distributions are continuous and the lower bound of prices are equal. In addition, every element of the set $[\tilde{p},v]$ belongs to a support set, i.e. there is no gap between support sets. The support sets of seller one when she offers $3$, $2$, and $1$ unit are $[\tilde{p},\tilde{p}_{12}]$, $[\tilde{p}_{12},v]$, and $\{v\}$, respectively. This illustrates the result $4$. The result $5$ is the same as  part~\ref{prop:v} in Theorem~\ref{theorem:necessary}, and is previously connected to Figure~\ref{figure:dist}.

Henceforth in this section, we focus on proving the necessary results and properties  needed to prove Theorem~\ref{theorem:necessary}. 





\subsection{Results that we use throughout}

\begin{property}\label{property:c_asy}
For each $i$ and $k$, $\Phi_{ki}(c) = 0$.
\end{property}

This result follows directly since prices less than cost $c$ are not chosen by sellers. Property~\ref{property:c_asy} therefore rules out jumps at prices $x\leq c$.

\begin{proof}
Note that for each $i$, $u_{ki}(x) \leq 0$ for $x \leq c$. But, since $B_{ki}(x)\geq iq_{\bar{k}0} > 0$ for all $x  \in [0, v]$, $u_{ki}(x) > 0$  for all $x \in (c, v]$. Thus, no price in $[0, c]$ is a best response for a seller.
\end{proof}

Lemma~\ref{lemma:jump_asy}, which we use throughout the paper,  rules out jumps at prices higher than $c$. 

\begin{lemma}\label{lemma:jump_asy}
Let the strategy profile of player $k$ be $\Theta_k(.)=(\Phi_{k1}(.),\dots,\Phi_{km_k}(.))$, and $\Phi_{ki}(.)$ have a jump at $x > c$. Then for $l$ such that $l + i > d$, $u_{\bar{k}l}(x -\epsilon') > u_{\bar{k}l}(a)$, $\forall a \in[x, \min\{x + \epsilon, v\}]$, and for all sufficiently small but positive $\epsilon$ and $\epsilon'$. 
\end{lemma}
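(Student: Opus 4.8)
The plan is to formalize the standard undercutting argument: if seller $k$ places an atom at price $x$ when she has $i$ units, then her competitor $\bar{k}$ (holding $l$ units with $l+i>d$) can capture a discrete jump in expected sales by shading her price just below $x$, and this discrete gain dominates the infinitesimal loss in margin. Let $\Delta := \Phi_{ki}(x) - \Phi_{ki}(x^-) > 0$ be the size of the jump, and note $q_{ki} > 0$ (otherwise $\Phi_{ki}$ is payoff-irrelevant). The starting observation is that $B_{\bar{k}l}(\cdot)$ is non-increasing in the posted price: for every realization of the competitor's units and price, raising $\bar{k}$'s own price can only make her lose a contest she would otherwise win, so her sales weakly decrease, and taking expectations preserves monotonicity. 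Consequently, for any $a \ge x$ and small $\epsilon' > 0$, $B_{\bar{k}l}(x-\epsilon') \ge B_{\bar{k}l}(x^-) \ge B_{\bar{k}l}(x) \ge B_{\bar{k}l}(a)$, where $B_{\bar{k}l}(x^-)$ is the left limit (which exists by monotonicity). The content of the lemma is that the first inequality is strict by a fixed amount.

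Next I would quantify the downward jump of $B_{\bar{k}l}$ at $x$. Decomposing over seller $k$'s availability, $B_{\bar{k}l}(y) = \sum_j q_{kj}\, g_j(y)$, where $g_j(y)$ is $\bar{k}$'s expected sales against a type-$j$ opponent posting a price drawn from $\Phi_{kj}$. Only the term $j = i$ matters for the jump at $x$: the mass $\Delta$ sitting exactly at $x$ contributes the winning quantity $\min\{l,d\}$ to $g_i(x^-)$ but only the tie quantity $T_{il}$ (the units $\bar{k}$ sells when both post $x$, under equal-probability tie-breaking) to $g_i(x)$. Hence
\[
B_{\bar{k}l}(x^-) - B_{\bar{k}l}(x) \ \ge\ q_{ki}\bigl(\min\{l,d\} - T_{il}\bigr)\Delta \ =:\ J.
\]
The crucial point, and the place where the hypothesis $l + i > d$ enters, is that $J > 0$: when the two tying offerings total $l + i > d$ units but demand is only $d$, seller $\bar{k}$ cannot sell all of her share in the tie, so $T_{il} < \min\{l,d\}$ strictly. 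Were $l + i \le d$, both sellers would clear and there would be no jump, which is exactly why the condition is needed.

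Finally I would combine these estimates. Using $u_{\bar{k}l}(x) = B_{\bar{k}l}(x)(x-c)$ and $B_{\bar{k}l}(x-\epsilon') \ge B_{\bar{k}l}(a) + J$ for all $a \ge x$, I get for any $a \in [x, \min\{x+\epsilon, v\}]$,
\[
u_{\bar{k}l}(x-\epsilon') - u_{\bar{k}l}(a) \ \ge\ \bigl(B_{\bar{k}l}(a)+J\bigr)(x-\epsilon'-c) - B_{\bar{k}l}(a)(a-c) = B_{\bar{k}l}(a)(x-\epsilon'-a) + J(x-\epsilon'-c).
\]
Since $B_{\bar{k}l}(a) \le l$ and $x - \epsilon' - a \ge -(\epsilon+\epsilon')$, the first term is at least $-l(\epsilon+\epsilon')$, a loss of order $\epsilon+\epsilon'$; and since $x > c$, the second term is at least $J(x-c)/2$ for $\epsilon'$ small, a fixed positive gain. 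Choosing $\epsilon,\epsilon'$ small enough that $l(\epsilon+\epsilon') < J(x-c)/2$ makes the right-hand side strictly positive, proving the claim.

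I expect the main obstacle to be the rigorous justification that $J$ is strictly positive, i.e. pinning down that the tie allocation $T_{il}$ lies strictly below the winning quantity precisely when $l+i>d$, which requires care with the equal-likelihood tie-breaking rule for multiple units; this is coupled with the bookkeeping that isolates the single relevant atom at $x$ from the possibly many other discontinuities of the various $\Phi_{kj}$, handled cleanly here by invoking monotonicity of $B_{\bar{k}l}$ rather than a continuity or total-variation estimate over a shrinking interval.
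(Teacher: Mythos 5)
Your proposal is correct and follows essentially the same route as the paper: isolate the atom of $\Phi_{ki}$ at $x$, use the equal-likelihood tie rule to show the tying seller sells strictly fewer units than by undercutting (the paper computes the tie quantity explicitly as $l\frac{d}{i+l}<l$, which is exactly where $l+i>d$ enters, settling the point you flagged as the main obstacle), obtain a drop in $B_{\bar{k}l}$ bounded below by a constant independent of $\epsilon,\epsilon'$, and conclude via the same margin-versus-quantity estimate. The only cosmetic difference is that the paper conditions on the single event ``competitor's (availability, price) pair equals $\langle i,x\rangle$'' rather than decomposing over all availability levels $j$.
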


We provide the intuition behind the result and defer the proof to Appendix~\ref{appendix:lemma1}.
Note that offering a lower price increases the expected number of units sold by a seller, but decreases the revenue per unit sold. Suppose that a seller $k$ offers $i$ units with price $x$ with a positive probability. Let her competitor $\bar{k}$ have $l$ units available where $l+i>d$; $\bar{k}$ can sell  a strictly larger  number of units in an expected sense  by choosing a price
 in the left neighborhood of $x$ (eg, $x-\epsilon$)  rather than $x$ or in its right neighborhood. In addition the difference is bounded away from zero even as the size of the left neighborhood approaches zero. On the other hand, the difference in the revenue per unit  approaches zero as the size of the left neighborhood approaches zero.  Therefore, prices in the left neighborhood of $x$  constitute better responses for  the seller than $x$ or those in its right neighborhood.

The following property fully characterizes the NE when seller $k$ offers $i\in\{1,\dots,e_k\}$ units.

\begin{property}\label{property:e_asy}
$\Phi_{ki}(x)$ selects $v$ with probability $1$ and any other prices with probability $0$ when
$i = 1,\dots,e_k$ for each $k$.
\end{property}

The proof relies on the fact that if a seller offers less than or equal to $e_k$ units of commodity, she can sell all units regardless of the price she quotes. Therefore $v$ strictly dominates all other prices.

\begin{proof}
This statement holds by vacuity if $\max\{m_1,m_2\}\geq d$. Now consider $d > \max\{m_1,m_2\}$. If the seller $k$ offers $i\leq e_k$ units, the total offerings from both sellers are at most $d$, since the other seller offers at most $m_{\bar{k}}$ units. Thus, the seller $k$ can sell everything it offers with any price $x$ in interval $[0,v]$. Therefore for all $x \in [0, v)$,  $u_{ki}(x) =i (x-c) < i(v-c) = u_{ki}(v)$. Thus, no price in $[0, v)$ is a best response. The result follows.
\end{proof}

\subsection{Continuity of Price Distribution for Price $x<v$}
\label{continuity}
Utilizing Lemma~\ref{lemma:jump_asy}, we can prove that the distribution of price is continuous for prices less than $v$,
\begin{property}\label{property:continuous_as}
$\Phi_{ki}(x)$ is continuous for $x<v$.
\end{property}

Note that in Fig~\ref{figure:dist}, there is no jump in the distributions for prices less than $v$.  


\begin{proof}
If $i\leq e_k$, the property follows from Property \ref{property:e_asy}. Now let $i>e_k$. If $x\leq c$, the property follows from Property \ref{property:c_asy}. Now consider $x \in (c, v)$. We use contradiction argument. Suppose $\Phi_{ki}(.)$ has a jump at price $x< v$. Since $i>e_k$, there exists $l \leq m_{\bar{k}}$ such that $l + i > d$. Using lemma \ref{lemma:jump_asy}, we can say that if $\Phi_{ki}(.)$ has a jump at $x$, for each $l$ such that $l + i > d$, $u_{\bar{k}l}(x -\epsilon')> u_{\bar{k}l}(a)$, where $a\in [x, \min\{x + \epsilon, v\}]$, and for all sufficiently small but positive $\epsilon$ and $\epsilon'$. Therefore no price in this interval is a best response for the seller $\bar{k}$ when she offers $l$ units. Therefore $\Phi_{\bar{k}l}(x + \epsilon) = \Phi_{\bar{k}l}(x)$ for all sufficiently small but positive $\epsilon$ and all $l$ such that $l > d-i$, i.e. the other seller does not choose any price in $[x, x + \epsilon)$ whenever she offers $l$ units. Knowing this we can say that $B_{ki}(a) = B_{ki}(x)$ for all $a \in [x, x + \epsilon)$ for some $\epsilon>0$ such that $x + \epsilon \leq v$. Therefore,

\begin{equation}
u_{ki}(x)=(x-c)B_{ki}(x)<(x+\frac{\epsilon}{2}-c)B_{ki}(x+\frac{\epsilon}{2})=u_{ki}(x+\frac{\epsilon}{2})
\end{equation}

\normalsize
Thus, $x$ is not a best response for a seller who offers $i$ units. Hence $x$ is chosen with probability zero, which rules out a jump at $x$ for $\Phi_{ki}(.)$. The property follows.
\end{proof}

Based on this property, the distribution of price is continuous for $x<v$. We will later show that the price distribution has a jump at $v$ for some availabilities.

Based on the above continuity result, the expression for the expected number of units sold for all $x \in [0, v)$ and $l=e_k+1,\dots,m_k$ is,

\small
\begin{equation}\label{equation:utility}
\begin{aligned}
B_{kl}(x)=l\sum_{i=0}^{d-l}&{q_{\bar{k}i}}+l\sum_{i=d-l+1}^{m_{\bar{k}}}{\big{(}1-\Phi_{\bar{k}i}(x)\big{)}q_{\bar{k}i}}\\
&+\sum_{i=d-l+1}^{m_{\bar{k}}}{\Phi_{\bar{k}i}(x)q_{\bar{k}i}(d-i)}
\end{aligned}
\end{equation}

\normalsize
Note that we assumed $d\geq \max\{m_1,m_2\}$ in \eqref{equation:utility}. The first term in the left hand side  corresponds to the situation in which the other seller offers
at most $d-l$ units. In this case, seller $k$ will sell all $l$ units she offered in the market. The second and the third terms are corresponding to the situation in which the other seller offers more than $d-l$ units with a price higher than and less than $x$, respectively. If the other seller offers with price higher than $x$, seller $k$ is able to sell the entire $l$ units. On the other hand, if $\bar{k}$ offers with a price less than $x$, $k$ will sell $d-l$ units of commodity.

We can now obtain an expression for $u_{kl}(x)$ for $x < v$ from \eqref{equ:utility}, \eqref{equ:sellall}, and \eqref{equation:utility}.

\subsection{Sellers Have Equal Lower Bound of Prices}
\label{lowerbound}

Note that the example NE distributions presented in Figure~\ref{figure:dist} have equal lower bounds ($\tilde{p}=\tilde{p}_{13}=\tilde{p}_{23}$). We now prove that all NE distributions must satisfy this property:
 
\begin{property}\label{property:equal_lower_bound}
The minimum of lower end points of support sets are equal for both sellers. Mathematically,
$$
\tilde{p}_1=\tilde{p}_2
$$
where, $\tilde{p}_k=\min\{\tilde{p}_{ki}: i=1,\dots,m_k\}$. Furthermore, $\tilde{p}_1=\tilde{p}_2<v$ if $d<m_1+m_2$.
\end{property}

If the lower bound of prices for seller $k$, i.e. $\tilde{p}_k$, is lower than that for the other seller, $\tilde{p}_{\bar{k}}$, then  $k$ sells equal number of units in an expected sense by choosing $\tilde{p}_k$ as any other price in $(\tilde{p}_k, \tilde{p}_{\bar{k}})$. Using continuity of distributions for prices less than $v$, we can say that $\tilde{p}_{\bar{k}}$ is a better response than $\tilde{p}_k$ for $k$, which is a contradiction. The formal proof follows:

\begin{proof}
Suppose not. Without loss of generality suppose $\tilde{p}_1<\tilde{p}_2\leq v$. Therefore there exists $j$ such that $\tilde{p}_1$ belongs to the support set of $\Phi_{1j}(.)$. Since player 2 does not offer with any price in the interval $[\tilde{p}_1,\tilde{p}_2)$, $B_{1j}(\tilde{p}_1)=B_{1j}(\tilde{p}_2^-)$ \footnote{$f(x^-)=\lim_{y\uparrow x} f(y)$}. Thus $u_{1j}(\tilde{p}_1)<u_{1j}(\tilde{p}_2^-)$ which contradicts the assumption that $\tilde{p}_{1}$ is a best response for the first player when she offers $i$ units of commodity. Therefore, the first part of the property follows.

Suppose $\tilde{p}_1=\tilde{p}_2=v$. Thus, both sellers choose the price $v$ with probability $1$ regardless of the number of units they have available. Consider
seller $k$. Let $l=m_{\bar{k}}$. Since $m_1+m_2>d$, Lemma \ref{lemma:jump_asy} implies that $u_{\bar{k}m_{\bar{k}}}(v-\epsilon)>u_{\bar{k}m_{\bar{k}}}(v)$. This contradicts the assumption that $v$ is the best response for seller $k$. The result follows.
\end{proof}

\begin{terminology}
Let $\tilde{p}$ denote the minimum of lower end points of prices in the NE, i.e. $\tilde{p}_1=\tilde{p}_2=\tilde{p}$.
\end{terminology}

\subsection{The union of support sets cover $[\tilde{p}, v]$}
 \label{gap}

We show that there does not exist an interval of prices in $[\tilde{p}, v]$ which is eschewed with probability $1$ by both sellers. If such an interval existed, the cumulative distribution functions of both sellers would be flat in it, which we rule out below. Note that in Fig~\ref{figure:dist}, the NE distributions are strictly increasing throughout their support sets, and there is no flat region.

\begin{property}\label{property:contiguous_assym}
There does not exist $a$, $b$ such that $\tilde{p}\leq a<b\leq v$ and $\Phi_{ki}(b)=\Phi_{ki}(a)$ for all $i\in \{e_k+1,\dots, m_k\}$ and $k=1,2$.
\end{property}

If such $a$ and $b$ exist for seller $k$, this means that regardless of the number of available units, $k$ does not select any price in the interval $(y,z)$ where $y\leq a$, $z\geq b$, and $y$ is a best response 
when $k$ has an availability level $l$. This implies that for the competitor, $\bar{k}$, the expected utility is strictly  increasing in interval $[y,b]$. Thus $\bar{k}$ does not select any price in the interval $[y,b)$. This again implies that for seller $k$, when she offers $l$ units, price $b$ yields a strictly higher payoff than $y$, which is in contradiction with $y$ being a best response for $k$ when offering $l$ units. The formal proof is as follows:   

\begin{proof}
Let there be $a$, $b$, and $k$ such that $\tilde{p}\leq a<b\leq v$ and $\Phi_{ki}(b) = \Phi_{ki}(a)$ for all $i$. Thus for $\zeta$ such that $a<b-\zeta<b\leq v$, $\Phi_{ki}(b-\zeta) = \Phi_{ki}(a)$. Consider $y$ such that,
\begin{equation}
y=\sup\{x|x<a, x \in \text{support set of $\Phi_{kl}(.)$ for an $l$}\}
\nonumber
\end{equation}
Since support sets are closed, $y$ belongs in the support set of $\Phi_{kl}(\cdot)$ for some $l$. Thus, $y$ is a best response when the availability of player $k$ is $l$ (using Property~\ref{property:continuous_as} and $y < v$).

In addition, note that $\Phi_{ki}(y) = \Phi_{ki}(b-\zeta)$ for all $i$. Since $a < b-\zeta < v$, from Property \ref{property:continuous_as} and equation (\ref{equation:utility}), the expected number of units sold for the second seller remains constant for prices in $[y, b-\zeta]$, regardless of the number of units she offers,  i.e. $B_{\bar{k},.}(y) = B_{\bar{k},.}(b-\zeta)$. Thus, $u_{\bar{k},.}(b-\zeta)>u_{\bar{k},.}(y)$, and player $\bar{k}$ does not offer any price in the interval $[y,b-\zeta)$. Therefore $\Phi_{\bar{k},.}(y)=\Phi_{\bar{k},.}(b-\zeta)$. Since $a < b-\zeta < v$, from Property \ref{property:continuous_as} and equation (\ref{equation:utility}), $B_{kl}(y) = B_{kl}(b-\zeta)$. Thus, $u_{kl}(b-\zeta)>u_{kl}(y)$. This is in contradiction with $y$ being a best response  when the availability of player $k$ is $l$. Therefore, there does not exist $a$, $b$ such that $\tilde{p}\leq a<b\leq v$ and $\Phi_{ki}(b)=\Phi_{ki}(a)$ for all $i\in \{1,\dots, m_k\}$ and $k=1,2$. Also, note that for $i\in\{1,\dots,e_k\}$, $\Phi_{ki}(b)=\Phi_{ki}(a)$ for $\tilde{p}\leq a<b\leq v$, since support sets for these distributions only contain $v$. The result follows.
\end{proof}

\label{supportset}

\noindent \textbf{Remark:} In all the previous results, we considered $d\geq \max\{m_1,m_2\}$. In the next section, we need to consider that $d>\max\{m_1,m_2\}$.

 \subsection{Support Sets Are Mutually Disjoint and in Decreasing Order of the Number of Availabilities}\label{subsection:dgeqm} 
We start with proving a result, Lemma \ref{lemma:decreasing_random_asym}, on $A_{k,l,j}(x)$ (defined in Section~\ref{system_model}, Terminology~\ref{terminology:Aklj}). Note that we use Lemma~\ref{lemma:decreasing_random_asym} in subsequent sections as well. We next prove Property~\ref{property:supportsets} using this result, which leads to the main results of this section: Corollaries~\ref{corollary:disjoint_asym} and \ref{corollary:structure}. 

 First, using \eqref{equ:sellall} and \eqref{equation:utility}.

\vspace{-10pt}
\begin{equation}\label{equation:diff_decreasing}
\begin{aligned}
B_{k,l,j}(x)=&-\frac{1}{l}\sum_{i=d-l+1}^{d-j}{\Phi_{\bar{k}i}(x)q_{\bar{k}i}(i-d+l)}\\
&+\sum_{i=d-j+1}^{m_{\bar{k}}}{\Phi_{\bar{k} i}(x)q_{\bar{k} i}(d-i)(\frac{1}{l}-\frac{1}{j})}
\end{aligned}
\end{equation}

Thus, $B_{k, l, j}(\cdot)$ is non increasing and non positive with respect to the price $x$ when $l> j$. Therefore if $l > j$ then $A_{k,l,j}(x)$ is non increasing and non positive with respect to $x$. Based on the following lemma, $A_{k,l,j}(x)$ is (strictly) decreasing for $v>x\geq\tilde{p}$ and $l>j$ if $d> \max\{m_1,m_2\}$.

\begin{lemma}\label{lemma:decreasing_random_asym}
For each seller $k\in\{1,2\}$ and every $l$ and $j$, $j<l\leq m_k$, $A_{k,l,j}(x)$ is (strictly) decreasing for $\tilde{p}\leq x<v$ when $d
> \max\{m_1,m_2\}$.
\end{lemma}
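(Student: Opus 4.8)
The plan is to first pin down the sign and monotonicity of $B_{k,l,j}(\cdot)$ directly from the closed form \eqref{equation:diff_decreasing}, and then upgrade weak monotonicity of $A_{k,l,j}=(x-c)B_{k,l,j}$ to \emph{strict} monotonicity by ruling out any flat stretch of $B_{k,l,j}$. In the first sum of \eqref{equation:diff_decreasing} every index satisfies $i-d+l\ge 1>0$, so each summand is non-positive; in the second sum the hypothesis $d>\max\{m_1,m_2\}\ge m_{\bar k}$ forces $d-i\ge 1>0$ for all $i\le m_{\bar k}$, while $\tfrac1l-\tfrac1j<0$ because $l>j$, so each summand is again non-positive. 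Since every $\Phi_{\bar k i}(\cdot)$ is non-decreasing and is continuous below $v$ by Property~\ref{property:continuous_as}, I conclude that $B_{k,l,j}(x)\le 0$ and $B_{k,l,j}$ is non-increasing on $[\tilde p,v)$; note that this is exactly where $d>\max\{m_1,m_2\}$ is used.

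For strictness I would use the telescoping identity
\[ A_{k,l,j}(x_2)-A_{k,l,j}(x_1)=(x_2-c)\big(B_{k,l,j}(x_2)-B_{k,l,j}(x_1)\big)+(x_2-x_1)\,B_{k,l,j}(x_1), \]
valid for $\tilde p\le x_1<x_2<v$. Property~\ref{property:c_asy} gives $\tilde p>c$, so $x_2-c>0$, and both bracketed factors are $\le 0$ by the previous step, whence $A_{k,l,j}$ is non-increasing; moreover the difference is \emph{strictly} negative as soon as either $B_{k,l,j}(x_1)<0$ or $B_{k,l,j}(x_2)<B_{k,l,j}(x_1)$. Thus a failure of strict monotonicity on some $[x_1,x_2]$ would force $B_{k,l,j}\equiv 0$ there, and since $\tilde p=\min_i \tilde p_{\bar k i}$ together with continuity gives $B_{k,l,j}(\tilde p)=0$, non-increasingness propagates this flat value back to an initial segment $[\tilde p,x_0]$. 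The whole lemma therefore reduces to a single claim: $B_{k,l,j}(x)<0$ for every $x\in(\tilde p,v)$.

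Establishing that claim is the main obstacle. Unwinding $B_{k,l,j}\equiv 0$ on $[\tilde p,x_0]$ through \eqref{equation:diff_decreasing} shows that seller $\bar k$ places no price in $[\tilde p,x_0]$ whenever her availability is at least $d-l+1$; substituting this into \eqref{equation:utility} then yields $B_{kl}(x)=l$ on $[\tilde p,x_0]$, so $u_{kl}(x)=l(x-c)$ would be \emph{strictly increasing} and seller $k$ would never best-respond with a price in $[\tilde p,x_0)$ while holding $l$ units. I would play this against Property~\ref{property:equal_lower_bound}: both sellers anchor a support set at $\tilde p$, and for any level of a seller whose support starts at $\tilde p$ the best-response (indifference) requirement forces the \emph{competitor} to be actively pricing just above $\tilde p$ at some sufficiently high availability, which contradicts the exclusion just derived once the anchoring index is matched against the threshold $d-l+1$. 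Invoking the no-gap Property~\ref{property:contiguous_assym} to forbid a common flat interval then closes the contradiction, giving $B_{k,l,j}(x)<0$ on $(\tilde p,v)$ and hence the strict decrease of $A_{k,l,j}$. The delicate part—reconciling the availability level that anchors $\tilde p$ with the cutoff $d-l+1$—is where I expect the bulk of the work to lie.
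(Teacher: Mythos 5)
Your first step (non-positivity and weak monotonicity of $B_{k,l,j}$ read off from \eqref{equation:diff_decreasing}, with $d>\max\{m_1,m_2\}$ used precisely to get $d-i>0$) and your reduction of the lemma to the single claim that $B_{k,l,j}(x)<0$ on $(\tilde p,v)$ coincide with the paper's argument. The gap is in how you propose to establish that claim. From $B_{k,l,j}\equiv 0$ on $(\tilde p,x_0]$ you correctly extract that seller $\bar k$ prices above $x_0$ whenever her availability is at least $d-l+1$, and hence that seller $k$ with $l$ units does not best-respond in $[\tilde p,x_0)$. But the contradiction you then try to draw from Property~\ref{property:equal_lower_bound} does not close. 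That property only guarantees that \emph{some} availability level of each seller anchors its support at $\tilde p$; writing $S_k=\{i:\tilde p_{ki}=\tilde p\}$ and $S_{\bar k}=\{i:\tilde p_{\bar k i}=\tilde p\}$, the indifference-at-$\tilde p$ requirement you invoke yields only $\max S_{\bar k}\ge d-\min S_k+1$ and $\max S_k\ge d-\min S_{\bar k}+1$, while your exclusion gives $\max S_{\bar k}\le d-l$ and hence $\min S_k\ge l+1$. For $l<m_k$ these inequalities are mutually consistent (a configuration with $S_{\bar k}$ consisting of low levels $\le d-l$ and $S_k$ of high levels $\ge l+1$ satisfies all of them), so no contradiction follows; the matching of the anchoring index against the cutoff $d-l+1$ that you flag as "the delicate part" only works out for $l=m_k$.

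What is missing is the stronger structural fact the paper proves at exactly this point: $\tilde p_{km_k}=\tilde p$ for \emph{both} sellers, i.e., the maximum-availability distribution itself anchors at the global lower bound, so $\Phi_{\bar k m_{\bar k}}(x)>0$ for every $x>\tilde p$. Since $\Phi_{\bar k m_{\bar k}}$ appears in \eqref{equation:diff_decreasing} with a strictly negative coefficient for every pair $l>j$ with $l>e_k$, this single fact forces $B_{k,l,j}(x)<0$ for all such pairs at once. The paper obtains it by a localized contradiction: if $\tilde p_{km_k}>\tilde p$, Property~\ref{property:contiguous_assym} supplies a level $j$ whose support covers $[\tilde p_{km_k}-\epsilon,\tilde p_{km_k}]$ with $\tilde p_{kj}<\tilde p_{km_k}$; the strict decrease of $A_{k,m_k,j}$ on $(\tilde p_{kj},\cdot)$ (which needs only that \emph{some} competitor distribution appearing in the sum is positive there, guaranteed again by Property~\ref{property:contiguous_assym}) combined with the indifference $u_{kj}(\tilde p_{km_k}-\epsilon)=u_{kj}(\tilde p_{km_k})$ gives $u_{km_k}(\tilde p_{km_k}-\epsilon)>u_{km_k}(\tilde p_{km_k})$, contradicting $\tilde p_{km_k}$ lying in the support of $\Phi_{km_k}$. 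You would need to add this step (or an equivalent one) for your proof to go through.
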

Lemma~\ref{lemma:decreasing_random_asym} implies that the high-availability agent sacrifices more expected payoff  than the low-availability agent by increasing her price. 

Since $A_{k,l,j}(.)=(x-c)B_{k,l,j}(x)$, knowing that $B_{k,l,j}(x)$ is non-increasing, lemma follows if we prove that  $B_{k, l, j}(\cdot)$ is negative. We will prove that $\Phi_{km_k}(x)$, which is included in the summation of $B_{k, l, j}(\cdot)$, is positive 
for  $x> \tilde{p}$ and $k\in\{1,2\}$. In addition, the coefficient of $\Phi_{km_k}(x)$ is negative since $d> \max\{m_1,m_2\}$. Thus, the result follows.  

\begin{proof}
It is enough to prove that $B_{k,l,j}(x)$ is non-increasing for $x\geq \tilde{p}$ and negative for $x>\tilde{p}$. This yields that  $A_{k,l,j}(x)=(x-c)B_{k,l,j}(x)$ is strictly decreasing with respect to $x$. 

Note that in (\ref{equation:diff_decreasing}), $\Phi_{\overline{k}j}(.)$'s are non-negative and non-increasing since they are probability distributions. In addition, they have negative weights: $-(i-d-l)\leq -1<0$, $\frac{1}{l}-\frac{1}{j}<0$, and since $d>\max\{m_1,m_2\}$, $d-i\geq d-m_{\bar{k}}>0$. Thus $B_{k,l,j}(x)$ is non increasing and non positive with respect to the price $x$ when $l\geq j$. To prove that $B_{k,l,j}(x)$ is negative for $x>\tilde{p}$, since the distributions in (\ref{equation:diff_decreasing}) have (strictly) negative weights , it is enough to prove that at least one of the $\Phi_{\overline{k}j}(.)$'s  is included in the summation of $B_{k,l,j}(.)$ is positive, i.e. not all of them are zero.  We will prove that $\Phi_{km_k}(x)>0$ for  $x> \tilde{p}$ and $k\in\{1,2\}$. 

Suppose not and there exists $x>\tilde{p}$ such that $x\leq \tilde{p}_{km_k}$. By Property \ref{property:contiguous_assym}, there exists an $\epsilon>0$ and an availability level $j\neq \{1,\dots,e_k,m_k\}$ such that $[\tilde{p}_{km_k}-\epsilon,\tilde{p}_{km_k}]$ belongs to the support set of $\Phi_{kj}(.)$ and $\tilde{p}_{kj}<\tilde{p}_{km_k}$.  Thus $u_{kj}(\tilde{p}_{km_k})=u_{kj}(\tilde{p}_{km_k}-\epsilon)$. In addition, $B_{k,m_k,j}(x)$ is the weighted summation of $\Phi_{\bar{k}i}(.)$ for $i\in\{e_{\bar{k}}+1,\dots,m_{\bar{k}}\}$. Property \ref{property:contiguous_assym} implies that $\tilde{p}_{kj}$ belongs to at least one of the support sets of $\Phi_{\bar{k}i}(.)$ for $i\in\{e_{\bar{k}}+1,\dots,m_{\bar{k}}\}$. The distribution $\Phi_{\bar{k}i}(.)$ is included in the summation of $B_{k,m_k,j}(x)$, and its coefficient is negative.
 Thus, $A_{k,m_k,j}(x)$ is strictly decreasing with respect to $x$ for $x>\tilde{p}_{kj}$. Thus $A_{k,m_k,j}(\tilde{p}_{km_k}-\epsilon)>A_{k,m_k,j}(\tilde{p}_{km_k})$. Using $u_{kj}(\tilde{p}_{km_k})=u_{kj}(\tilde{p}_{km_k}-\epsilon)$, we can conclude that $u_{km_k}(\tilde{p}_{km_k})=u_{km_k.max}< u_{km_k}(\tilde{p}_{km_k}-\epsilon)$. This contradicts with $\tilde{p}_{k,m_k}$ belonging to the support set of $\Phi_{km_k}(.)$. The result follows.
\end{proof}

Note that in the previous lemma, we used $d> \max\{m_1,m_2\}$ to prove that $A_{k,l,j}(x)$ is decreasing for $\tilde{p}\leq x <v$. The following properties characterize the NE for price less than $v$.

\begin{property}\label{property:supportsets}
For $k\in\{1,2\}$, the support set of $\Phi_{kl}(.)$ is a subset of $[\tilde{p},\tilde{p}_{kj}]\cup [v]$ for all integers $j \in [1,l)$. 
\end{property}
For example, in Figure~\ref{figure:dist}, the support set for seller 1 and availability 3  is $[\tilde{p},\tilde{p}_{12}]$, which is a subset of the mentioned set.

\begin{proof}
First note that for $j\in\{1,\dots,e_k\}$ property follows, since $\tilde{p}_{kj}=v$ by Property \ref{property:e_asy}. Now consider $j>e_k$. Consider support sets of $\Phi_{kj}(\cdot)$, $\Phi_{kl}(\cdot)$, and $j<l$. We will show that $u_{kl}(a)< u_{kl}(\tilde{p}_{kj})$ for all $a\in (\tilde{p}_{kj}, v)$. Thus, no $a\in (\tilde{p}_{kj},v)$ is a best response for the seller $k$ with availability of $l$ units. Therefore, the support set of $\Phi_{kl}(\cdot)$ is a subset of $[c,\tilde{p}_{kj}]\cup [v]$.

We now complete the proof, by showing that $u_{kl}(a) < u_{kl}(\tilde{p}_{kj})$ for all $a \in(\tilde{p}_{kj}, v)$:
\begin{equation}
\begin{aligned}
\frac{1}{l}u_{kl}(a)-\frac{1}{j}u_{kj}(a)&=A_{k,l,j}(a)
\end{aligned}
\nonumber
\end{equation}
Since $l > j$ and $\tilde{p}\leq \tilde{p}_{kj}<a<v$, by Lemma \ref{lemma:decreasing_random_asym}, $A_{k,l,j}(a)$ is decreasing function of $a$ for $a \in [\tilde{p}_{kj}, v)$. Thus, $A_{k,j}(a) < A_{k,j}(\tilde{p}_{kj})$ for $a\in(\tilde{p}_{kj}, v)$. On the other hand $u_{kj}(a)\leq u_{kj}(\tilde{p}_{kj})$ for all $a>\tilde{p}_{kj}$ , since $\tilde{p}_{kj}$ is a best response of a seller with availability $j$, therefore $u_{kl}(\tilde{p}_{kj})>u_{kl}(a)$.
\end{proof}

Note that, in this stage, since $\Phi_{kl}(.)$ can have a jump at $v$, we cannot rule out $v$ as a member of the support set of $\Phi_{kl}(.)$.

\begin{corollary}\label{corollary:disjoint_asym}
The support sets of $\Phi_{kl}(.)$ and $\Phi_{kj}(.)$ overlap at most at one point in $[\tilde{p}, v)$.
\end{corollary}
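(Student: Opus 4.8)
The plan is to derive this directly from Property~\ref{property:supportsets}, which has just been established. Without loss of generality I would take $j < l$ (the statement is symmetric in the two indices, and the case $j = l$ is trivial). The goal is then to show that, apart from the price $v$, the support sets of $\Phi_{kl}(\cdot)$ and $\Phi_{kj}(\cdot)$ can share at most the single point $\tilde{p}_{kj}$.

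First I would localize the support set of the higher-availability distribution from above. Applying Property~\ref{property:supportsets} to this pair $(l, j)$, the support set of $\Phi_{kl}(\cdot)$ is contained in $[\tilde{p}, \tilde{p}_{kj}] \cup \{v\}$. Intersecting with $[\tilde{p}, v)$ discards the isolated point $v$, so within $[\tilde{p}, v)$ the support set of $\Phi_{kl}(\cdot)$ lies entirely in $[\tilde{p}, \tilde{p}_{kj}]$; in particular every such point is at most $\tilde{p}_{kj}$.

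Next I would localize the support set of the lower-availability distribution from below. By definition $\tilde{p}_{kj}$ is the infimum of the support set of $\Phi_{kj}(\cdot)$, so that support set is contained in $[\tilde{p}_{kj}, v]$, and every point of it is at least $\tilde{p}_{kj}$. Combining the two bounds, any common point $x$ of the two support sets lying in $[\tilde{p}, v)$ must satisfy both $x \leq \tilde{p}_{kj}$ and $x \geq \tilde{p}_{kj}$, hence $x = \tilde{p}_{kj}$. Thus the overlap in $[\tilde{p}, v)$ consists of at most the single point $\tilde{p}_{kj}$, which is exactly the claim.

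I do not expect any serious obstacle: the corollary is an immediate consequence of Property~\ref{property:supportsets} together with the definition of $\tilde{p}_{kj}$ as an infimum. The only point requiring care is the restriction to the half-open interval $[\tilde{p}, v)$. Property~\ref{property:supportsets} only controls $\Phi_{kl}(\cdot)$ up to a possible atom at $v$, and a jump at $v$ has not yet been excluded at this stage, so the price $v$ could a priori belong to both support sets simultaneously; it is precisely to exclude this endpoint that the statement is phrased over $[\tilde{p}, v)$ rather than $[\tilde{p}, v]$.
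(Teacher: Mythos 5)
Your proof is correct and uses essentially the same approach as the paper: both arguments rest on Property~\ref{property:supportsets} (which confines the support of $\Phi_{kl}(\cdot)$ below $v$ to $[\tilde{p},\tilde{p}_{kj}]$) combined with the fact that the support of $\Phi_{kj}(\cdot)$ lies in $[\tilde{p}_{kj},v]$. The paper phrases it as a contradiction from assuming two common points $x_1<x_2<v$, whereas you directly pin the only possible common point down to $\tilde{p}_{kj}$ — a cosmetic difference only.
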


For instance, note that in Figure~\ref{figure:dist}, the support sets of
$\Phi_{13}$ and $\Phi_{12}$ overlap only at $\tilde{p}_{12}$, the support sets of $\Phi_{12}$ and $\Phi_{11}$ overlap only at $v$, and there is no overlap between support sets of $\Phi_{13}$ and $\Phi_{11}$.

\begin{proof}
Suppose two points $x_1$ and $x_2$, where $x_1<x_2<v$, and both points  belong to the intersection of the support sets of $\Phi_{kj}(\cdot)$ and $\Phi_{kl}(\cdot)$.  Without loss of generality, consider $j<l$. The price $x_2>\tilde{p}_j$ belongs to the support set of $\Phi_{kl}(.)$, which is a contradiction with Property \ref{property:supportsets}.
\end{proof}

\begin{corollary}\label{corollary:structure}
For prices less than $v$ support sets are contiguous (Property~\ref{property:contiguous_assym}), disjoint (except possibly at one point) (Corollary~\ref{corollary:disjoint_asym}), and in decreasing order of the number of available units for sale (Property~\ref{property:supportsets}). Thus, there exists an increasing sequence $a_{km_k}, a_{k,m_k-1}, \ldots$ of positive real numbers in  $(c, v]$ such that the seller $k$ will randomize her price in the interval $[a_{ki},a_{k,i+1}]$ and possibly $\{v\}$ when she has $i$ units of commodity available for sale. 
\end{corollary}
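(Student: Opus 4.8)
The plan is to treat this corollary as a synthesis step: it introduces no new best-response argument but instead assembles Property~\ref{property:supportsets} (decreasing order), Corollary~\ref{corollary:disjoint_asym} (disjointness), and Property~\ref{property:contiguous_assym} (contiguity) into an explicit interval partition of $[\tilde{p},v]$ for each fixed seller $k$. First I would dispose of the low availabilities: by Property~\ref{property:e_asy}, for $i\in\{1,\dots,e_k\}$ the support set of $\Phi_{ki}(\cdot)$ is $\{v\}$, so $\tilde{p}_{ki}=v$, and these levels contribute only the atom at $v$. It therefore suffices to analyze $i\in\{e_k+1,\dots,m_k\}$, and by Property~\ref{property:continuous_as} to work on $[\tilde{p},v)$, where no distribution has an atom.

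Next I would establish that the infima $\tilde{p}_{ki}$ are strictly monotone in the availability level. Property~\ref{property:supportsets} gives that for $j<l$ the support of $\Phi_{kl}(\cdot)$ lies in $[\tilde{p},\tilde{p}_{kj}]\cup\{v\}$, so higher availability sits at lower prices; in particular $\tilde{p}_{k,m_k}$ is the minimum over all $i$, hence $\tilde{p}_{k,m_k}=\tilde{p}$ by Property~\ref{property:equal_lower_bound}. Combining this ordering with Corollary~\ref{corollary:disjoint_asym} (consecutive supports meet in at most one point of $[\tilde{p},v)$) and ruling out degenerate single-point supports via Property~\ref{property:continuous_as} (an isolated support point would be an atom at a price below $v$), I would obtain the strict chain $\tilde{p}=\tilde{p}_{k,m_k}<\tilde{p}_{k,m_k-1}<\cdots<\tilde{p}_{k,e_k+1}$. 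I then set $a_{ki}:=\tilde{p}_{ki}$ (with the convention $\tilde{p}_{k,e_k}:=v$) as the candidate breakpoints.

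The heart of the argument is to show each support set is a single interval and that consecutive intervals abut, so that together they tile $[\tilde{p},v)$. Writing $\tilde{v}_{ki}$ for the supremum of the support of $\Phi_{ki}(\cdot)$ restricted to $[\tilde{p},v)$, suppose first that this support had an interior gap $(y,z)\subseteq(\tilde{p}_{ki},\tilde{v}_{ki})$. By the ordering, every other availability level of seller $k$ lies entirely at or below $\tilde{p}_{ki}\le y$ (levels higher than $i$) or at or above $\tilde{v}_{ki}\ge z$ (levels lower than $i$), so seller $k$ would avoid $(y,z)$ with \emph{every} availability. Likewise, if $\tilde{v}_{ki}<\tilde{p}_{k,i-1}$ strictly, the interval $(\tilde{v}_{ki},\tilde{p}_{k,i-1})$ would be avoided by all of seller $k$'s availabilities, and the same reasoning forces $\tilde{v}_{k,e_k+1}=v$ so that no gap abuts $v$. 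Each case contradicts contiguity, so the support of $\Phi_{ki}(\cdot)$ below $v$ is exactly the interval $[\tilde{p}_{ki},\tilde{p}_{k,i-1}]$ between consecutive breakpoints, these intervals partition $[\tilde{p},v)$, and each level may in addition carry an atom at $v$. Setting $a_{ki}:=\tilde{p}_{ki}$ then yields the monotone breakpoint sequence and the per-level interval assignment asserted in the statement.

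The main obstacle I anticipate is that the contiguity statement, Property~\ref{property:contiguous_assym}, is phrased as ``no interval is avoided by both sellers,'' whereas every step above requires the \textbf{per-seller} form ``seller $k$ alone cannot avoid an interval'' --- otherwise a feared gap in seller $k$'s supports could in principle be filled by seller $\bar{k}$. I would resolve this by pointing out that the proof of Property~\ref{property:contiguous_assym} in fact establishes the stronger per-seller claim: it fixes a single $k$, assumes only that all of $k$'s distributions are flat on the interval, and derives a contradiction, never invoking flatness of $\bar{k}$. Making this observation explicit, together with the minor bookkeeping that availability levels realized with zero probability are immaterial and can be absorbed by merging adjacent intervals, completes the reduction.
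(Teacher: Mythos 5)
Your proposal is correct and follows exactly the route the paper intends: the paper states this corollary without a separate proof, treating it as the immediate assembly of Property~\ref{property:contiguous_assym}, Corollary~\ref{corollary:disjoint_asym}, and Property~\ref{property:supportsets}, which is precisely the synthesis you carry out. Your observation that the contiguity property must be invoked in its per-seller form---and that its proof in the paper in fact establishes that stronger form, since it assumes flatness only of seller $k$'s distributions and then deduces flatness for $\bar{k}$ before reaching the contradiction---is a genuine subtlety the paper leaves implicit, and you resolve it correctly.
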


For instance, note that in Figure~\ref{figure:dist}, the support sets of seller one are in decreasing order of the number of available units for sale, and the aforementioned increasing sequence is $\tilde{p},\tilde{p}_{12}$, and $v$.


\subsection{The Structure of Nash Equilibrium at Price $v$}
\label{subsection:difference}

We will investigate the possibility of having a jump at $v$. First, we prove Lemma~\ref{property:structure} which  complements previous results by identifying the nature of overlap between $\Phi_{kj}(.)$ and $\Phi_{\bar{k}l}(.)$ for $j\in\{1,\dots,m_k\}$ and $l\in\{1,\dots,m_{\bar{k}}\}$ for prices less than $v$. Using this lemma, we prove Property~\ref{property:difference} which is the main result of this section. 

\begin{lemma}\label{property:structure}
For every price $\tilde{p}\leq x<v$, $x$ should belong to the support sets $\Phi_{kl}(.)$ and $\Phi_{\bar{k}j}(.)$ such that $l+j>d$.
\end{lemma}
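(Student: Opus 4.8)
The plan is to combine the best-response indifference condition with the explicit form of $B_{kl}(x)$ in \eqref{equation:utility}. The key mechanism is that $B_{kl}(x)$ varies with $x$ only through the competitor's distributions $\Phi_{\bar{k}i}(\cdot)$ with $i>d-l$; so if seller $k$ is genuinely randomizing at $x$ (her utility being flat there), some such competitor distribution must be strictly increasing at $x$, which forces $l+j>d$.

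First I would record two preliminary facts. Since every interior point of a support set is a best-response price and $u_{ki}(\cdot)$ is continuous for $x<v$ (Property~\ref{property:continuous_as}), the utility $u_{kl}(\cdot)$ is constant on the interior of the support of $\Phi_{kl}(\cdot)$; as $u_{kl}(x)=(x-c)B_{kl}(x)$ with $B_{kl}(x)\geq l\,q_{\bar{k}0}>0$, this forces $B_{kl}(\cdot)$ to be strictly decreasing, hence non-constant, on that interior. Second, I would show both sellers are active at every $x\in[\tilde{p},v)$: if seller $k$ were flat on some interval $(a,b)\subseteq[\tilde{p},v)$ for all availabilities, then by Property~\ref{property:contiguous_assym} seller $\bar{k}$ would be active at some interior point $x_0\in(a,b)$, but $B_{\bar{k}j}(\cdot)$ — which by \eqref{equation:utility} depends only on the now-constant $\Phi_{ki}(\cdot)$ — would be constant on $(a,b)$, making $u_{\bar{k}j}(x)=(x-c)B_{\bar{k}j}$ strictly increasing there and contradicting indifference on the support around $x_0$. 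By Corollary~\ref{corollary:structure} each seller's supports are then the contiguous intervals $[a_{ki},a_{k,i+1}]$ covering $[\tilde{p},v)$, with higher availability occupying lower prices.

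Next comes the heart of the argument, for a point $x$ lying simultaneously in the interior of the support of $\Phi_{kl}(\cdot)$ and of $\Phi_{\bar{k}j}(\cdot)$. By the first fact, $B_{kl}$ is non-constant near $x$. Reading off \eqref{equation:utility}, the only $x$-dependence of $B_{kl}$ is through $\sum_{i=d-l+1}^{m_{\bar{k}}}\Phi_{\bar{k}i}(x)\,q_{\bar{k}i}(d-i-l)$, that is, through $\Phi_{\bar{k}i}$ with $i>d-l$. By the disjointness of $\bar{k}$'s supports (Corollary~\ref{corollary:disjoint_asym}), $\Phi_{\bar{k}i}$ is constant in a neighbourhood of $x$ for every $i\neq j$, so the non-constancy of $B_{kl}$ must arise from the single term $i=j$; this requires $j>d-l$, i.e.\ $l+j>d$, as claimed.

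Finally I would dispose of the endpoints. The boundary prices (the finitely many $a_{ki}$ and $a_{\bar{k}i'}$) are handled by a limiting argument: for $x\in(\tilde{p},v)$ choose $\delta>0$ so small that $(x-\delta,x)$ contains no boundary of either seller; then on this sub-interval both active availabilities $l,j$ are constant and every point is interior to both supports, so the previous step yields $l+j>d$, and since support sets are closed, $x$ itself lies in the supports of $\Phi_{kl}(\cdot)$ and $\Phi_{\bar{k}j}(\cdot)$. For $x=\tilde{p}$ I would instead use that, by Property~\ref{property:equal_lower_bound} together with the decreasing order of supports, $\tilde{p}=\tilde{p}_{1m_1}=\tilde{p}_{2m_2}$ lies in the supports of $\Phi_{1m_1}(\cdot)$ and $\Phi_{2m_2}(\cdot)$, while $m_1+m_2>d$ by assumption. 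The step I expect to require the most care is the single-seller coverage claim, since Property~\ref{property:contiguous_assym} only rules out \emph{simultaneous} gaps; the indifference argument above is exactly what upgrades it to per-seller coverage, and it is also the one place where the positivity $B_{\bar{k}j}>0$ is essential.
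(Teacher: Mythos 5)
Your proof is correct and rests on the same mechanism as the paper's: if $l+j\leq d$ then $\Phi_{\bar{k}j}$ does not enter $B_{kl}$ in \eqref{equation:utility}, so $B_{kl}$ is locally constant and $u_{kl}(x)=(x-c)B_{kl}(x)$ is locally strictly increasing, contradicting $x$ being a best response --- the paper packages this as a direct perturbation to $x+\epsilon$ justified by conditioning on the competitor's availability, while you phrase it contrapositively by reading the $x$-dependence off the formula. Your explicit per-seller coverage step is a sensible addition; the paper imports it from Corollary~\ref{corollary:structure} without further comment.
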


\vspace{-1mm}
A contradiction argument is used to prove the lemma. Assume that there exist $x$, $l$, and $j$ such that $x$ belongs to say $\Phi_{kl}(.)$ and $\Phi_{\bar{k}j}(.)$, and $l+j\leq d$. We show that in this case, the expected number of units sold at $x$ and $x+\epsilon$ are equal for seller $k$ when offering $l$ units, i.e. $B_{kl}(x)=B_{kl}(x+\epsilon)$,  and subsequently that $u_{kl}(x+\epsilon)>u_{kl}(x)$. Thus $x$ is not a best response for seller $k$ who offers $l$ units, which is a contradiction. 

\begin{proof}
Suppose not. There exist $x$, $l$, and $j$ such that $x$ belongs to say $\Phi_{kl}(.)$ and $\Phi_{\bar{k}j}(.)$, and $l+j\leq d$. We show that there exist $\tilde{j},\epsilon > 0$ 
 such that $x+\epsilon$ belongs in the support set of $\Phi_{\bar{k}\tilde{j}}(.)$,  and subsequently that $u_{kl}(x+\epsilon)>u_{kl}(x)$. Thus $x$ is not a best response for seller $k$ who offers $l$ units which is a contradiction.   Consider two cases:
\begin{itemize}
\item $x=\tilde{v}_{\bar{k}j}$. Using Corollary \ref{corollary:structure}, $x$ and $x+\epsilon$ belongs to the support set of $\Phi_{\bar{k},j-1}(.)$ when $\epsilon$ is small enough. Take $\tilde{j}=j-1$.
\item $x<\tilde{v}_{\bar{k}j}$. If $\epsilon$ is small enough, $x$ and $x+\epsilon$ belongs to the support set of $\Phi_{\bar{k}j}(.)$. Take $\tilde{j}=j$.
\end{itemize}

Note that since $l+j\leq d$, $l+\tilde{j}\leq d$.
We are going to argue that the expected number of units sold at $x$ and $x+\epsilon$ are equal for seller $k$, i.e. $B_{kl}(x)=B_{kl}(x+\epsilon)$. To show this, we condition on the number of available units with the seller $\bar{k}$. If  $\bar{k}$ has more than $\tilde{j}$ number of available units, say $f$,  then she will offer with price less than $x$ with probability one. Thus $\tilde{B}_{kl}(x|f)=\tilde{B}_{kl}(x+\epsilon|f)=d-f$ in which $\tilde{B}_{.}(.|.)$ is the conditional expected number of units sold. If $\bar{k}$ offers less than $\tilde{j}$ number of units, she will offer with price higher than $x+\epsilon$ with probability one. Thus  $\tilde{B}_{kl}(x|f)=\tilde{B}_{kl}(x+\epsilon|f)=l$. If $\bar{k}$ offers $\tilde{j}$ units, since $l+\tilde{j}\leq d$, $\tilde{B}_{kl}(x|\tilde{j})=\tilde{B}_{kl}(x+\epsilon|\tilde{j})=l$. Therefore the expected number of units sold at $x$ and $x+\epsilon$ are equal for seller $k$, and $u_{kl}(x+\epsilon)>u_{kl}(x)$. The proof is complete. 

\end{proof}

Finally, the following property characterizes the behavior of NE at $v$.



\begin{property}\label{property:difference} For each $k$, there exists a threshold  such that seller $k$ offers price $v$ with probability one if she has the availability level less than or equal to this threshold. We denote this threshold with $l_k$. This threshold is such that:
\begin{itemize}
\item $l_k\in\{e_k,\dots,m_k-1\}$
\item $l_1+l_2=d-1$ or $l_1+l_2=d$
\end{itemize}
The price distribution $\Phi_{kj}(.)$ does not have a jump at $v$ if $j>l_k+1$, at most one of the distributions $\Phi_{1,l_1+1}(.)$ and $\Phi_{2,l_2+1}(.)$ can have a jump at $v$, and size of such a jump
is less than $1.$ 
\end{property}


Note that in Fig~\ref{figure:dist}, $l_1=l_2=1$, and $l_1+l_2=d-1$. In addition, both sellers have a jump of magnitude one at price $v$ when they have one unit available, only seller two has a jump at price $v$ when the availability level is two, and there is no jump in the distribution functions when sellers have three units available. 

\begin{proof}                                                                                                                                                                                                                                                                                                                                 
Take $z_k$ such that  $k$ offers price $v$ with probability one if she has $i\in\{1,\dots,z_k\}$ units. Property \ref{property:e_asy} shows that $z_k \geq e_k$. We will prove that the $z_k$ should be less than $m_k$. Note that if seller $k$ has $m_k$ units of availability and she offers her units with a single price $v$, then $\tilde{p}_k=v$. By Properties \ref{property:equal_lower_bound} and \ref{property:supportsets}, the other seller, $\bar{k}$, offers her units with a single price $v$ regardless of the number of available units. This is a contradiction. The reason is because of Lemma \ref{lemma:jump_asy}. Since $m_1+m_2>d$, if $\Phi_{1,m_1}(.)$ has a jump at $v$, then $u_{2m_2}(v -\epsilon) > u_{2m_2l}(v)$, for all sufficiently small but positive $\epsilon$. Thus $v$ is not a best response for the second player when she offers $m_2$ units, which is a contradiction. Thus $z_k<m_k$. Therefore $z_k\in\{e_k,\dots,m_k-1\}$.
                                                                                                                                                                                                                                                                                                                                 

First, suppose $z_1+z_2\geq d+1$. By Lemma $\ref{lemma:jump_asy}$, $v$ is not a best response for the player $k$ when she offers $z_k$ units, which is a contradiction. Therefore $z_1+z_2\leq d$. Next, we will prove that either $z_1+z_2=d-1$ or $z_1+z_2=d$. Note that by the definition of $z_k$, seller $k$ with availability $z_k+1$ cannot choose the price $v$ with probability $1$. Thus using this fact and Corollary $\ref{corollary:structure}$, the price $x=v-\epsilon$ for $\epsilon>0$ small enough is in the support sets of $\Phi_{1,z_1+1}(\cdot)$ and $\Phi_{2,z_2+1}(\cdot)$. Thus, 
by Lemma \ref{property:structure}, $z_1+z_2\geq d-1$. Knowing that $z_1+z_2\leq d$. Take $l_k=z_k$, and the first part of the property follows.

Now we should consider the possibility of having a jump at $v$ for $\Phi_{kj}(.)$ for $j\geq l_k+1$.
  We will prove that the price distribution does not have a jump at $v$ when seller $k$ offers more than $l_k+1$ units. Suppose $\Phi_{kj}(.)$ has a jump for $j>l_k+1$. Note that $j+l_{\bar{k}}>l_k+l_{\bar{k}}+1\geq d$. By Lemma \ref{lemma:jump_asy}, $v$ is not a best response for the seller $\bar{k}$ under availability $l_{\bar{k}}$ which contradicts the definition of $l_{\bar{k}}$.

Now consider $l_k+1$. By definition of $l_k$ such a jump must have a size less than $1$, should it exist. We will prove that at most one of the distributions $\Phi_{1,l_1+1}(.)$ and $\Phi_{k,l_2+1}(.)$ can have a jump at $v$. Suppose not and both have a jump at $v$. By Lemma \ref{lemma:jump_asy}, since $(l_1+1)+(l_2+1)>d$, $v$ is not a best response for the player $k$ when she offers $l_k+1$ units. This is a contradiction. The result follows.

\end{proof}

Revisiting Equation (\ref{equation:utility}) implies that utility, $u_{ki}(.)$, is continuous not only in interval $[c,v)$, but also at price $v$, if $i\leq d-l_{\bar{k}}-1$. The reason is that for $i\leq d-l_{\bar{k}}-1$, equation (\ref{equation:utility}) depends only on $\Phi_{\bar{k}j}(.)$ where $j\geq l_{\bar{k}}+2$, which is continuous at price $v$ based on Property \ref{property:difference}. If $\Phi_{\bar{k}l_{\bar{k}}+1}(.)$ is continuous at $v$ then $u_{ki}(.)$ is continuous in $[c,v]$ for $i\leq d-l_{\bar{k}}$.

\subsection{Proof of Theorem~\ref{theorem:necessary}}
\label{summingup}

\noindent

\begin{proof}
 Part~\ref{prop:v} of Theorem~\ref{theorem:necessary} follows from Property~\ref{property:difference}. We now prove part~\ref{prop:2}. The support
set of $\Phi_{k,l_k+1}(.)$ includes at least one $x < v$ from  Property~\ref{property:difference}. Thus, Properties \ref{property:supportsets} and \ref{property:contiguous_assym} imply part~\ref{prop:2intermsupportset} of this part. Parts~\ref{prop:2intermcont} and \ref{prop:2jumpv} follow from Properties~\ref{property:continuous_as} and \ref{property:difference}, respectively.

We now prove part~\ref{prop:3}. We start with \ref{prop:2supportset}. Consider $i > l_{k+1}.$ From
Property~\ref{property:difference}, $\Phi_{k, i}(\cdot)$ does not have a jump at $v.$
From  part~\ref{prop:2intermsupportset} and Property~\ref{property:supportsets}, $v$ is not in the supports set of $\Phi_{k, i}(.)$ and $\tilde{v}_{k, i} \leq \tilde{p}_{k, i-1}.$
The result can now be proved by induction starting with $i = l_{k+2}$ using the fact that there is no gap between the support sets (Property~\ref{property:contiguous_assym}). Since $v$ is not in the support set of $\Phi_{k, i}(.)$, part \ref{prop:2cont} follows
from Property~\ref{property:continuous_as}.
Part~\ref{prop:2last}  follows from part~\ref{prop:2supportset} and Property~\ref{property:equal_lower_bound}.

Part \ref{prop:equal} follows from the fact that every price in the support set of a NE, except those on the boundaries, should be a best response for a seller. Thus they yield the same utility value. The result follows for the boundary points  of the support sets other than $v$ from Property~\ref{property:continuous_as}.
\end{proof}

\section{Arbitrary Demand}
\label{section:duopolysufficiency}

Note that the existence of the mixed strategy NE follows from Corollary 5.2 in \cite{existence}. In this section, first we present the sufficiency theorem for $d\geq \max\{m_1,m_2\}$ (Theorem~\ref{theorem:necessity_random}). Theorem~\ref{theorem:necessity_random} establishes that a strategy profile which satisfies the mentioned properties in Theorem \ref{theorem:necessary} constitutes an NE when  $d\geq \max\{m_1,m_2\}$. Note that unlike Theorem~\ref{theorem:necessary}, the sufficiency theorem holds even when $d=\max\{m_1,m_2\}$. Thus, the properties in Theorem~\ref{theorem:necessary} are both necessary and sufficient conditions for an NE when $d>\max\{m_1,m_2\}$, and only sufficient conditions when  $d=\max\{m_1,m_2\}$. The sufficiency theorem naturally leads to an algorithm for computing NE strategy profiles that satisfy the properties in Theorem~\ref{theorem:necessary} (Appendix~\ref{section:duopolyalgorithm_asym_framework}). Any strategy profile obtained by the algorithm constitutes an NE by Theorem~\ref{theorem:necessity_random}. In Section~\ref{easygeneralizationduopoly}, we argue that the computation of the NE strategies for $d<\max\{m_1,m_2\}$ can be reduced to $d=\max\{m_1,m_2\}$. This completes the entire framework.

\subsection{The Sufficiency Theorem  when $d\geq\max\{m_1,m_2\}$} \label{subsection:sufficiencydgm}

\begin{theorem}\label{theorem:necessity_random}
Consider a strategy profile that satisfies the properties enumerated in Theorem \ref{theorem:necessary}. This strategy profile is a Nash equilibrium when  $d\geq \max\{m_1,m_2\}$. 
\end{theorem}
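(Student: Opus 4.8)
The plan is to verify the Nash equilibrium condition directly: for each seller $k$ and each availability level $i$, I would show that the expected profit $u_{ki}(x)$ attains its maximum over $x\in[0,v]$ on the support of $\Phi_{ki}(\cdot)$. By Part~\ref{prop:equal} of Theorem~\ref{theorem:necessary}, $u_{ki}(\cdot)$ is already constant on this support, except possibly at $v$; call this common value $u_{ki}^\star$. Hence it suffices to prove $u_{ki}(x)\le u_{ki}^\star$ for every $x\in[0,v]$. Because the profile satisfies all the properties of Theorem~\ref{theorem:necessary}, the competitor's strategy has the support geometry of Corollary~\ref{corollary:structure}, so $B_{ki}(\cdot)$, and therefore $u_{ki}(x)=(x-c)B_{ki}(x)$, is given by~\eqref{equation:utility} for $x<v$.

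The engine of the argument is the monotonicity of $A_{k,l,j}(\cdot)$. From~\eqref{equation:diff_decreasing}, whenever $l>j$ and $d\ge\max\{m_1,m_2\}$, the function $B_{k,l,j}(\cdot)$ is non-positive and non-increasing, so $A_{k,l,j}(x)=(x-c)B_{k,l,j}(x)$ is non-increasing on $[\tilde p,v)$; Lemma~\ref{lemma:decreasing_random_asym} gives strict monotonicity when $d>\max\{m_1,m_2\}$, but the non-strict form is all I need, which is precisely what keeps the boundary case $d=\max\{m_1,m_2\}$ in scope. I would use this to chain the indifference conditions across adjacent availability levels. Fix $i$ with support $[\tilde p_{ki},\tilde p_{k,i-1}]$ and let $x^\star=\tilde p_{k,i-1}<v$ be the shared boundary with level $i-1$; by Property~\ref{property:continuous_as} and Corollary~\ref{corollary:structure} both levels attain their maxima there. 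For a higher deviation $x>x^\star$ lying in the support of some lower level $j<i$, I write the identity $\frac1i u_{ki}(x)=A_{k,i,i-1}(x)+\frac{1}{i-1}u_{k,i-1}(x)$ and combine $u_{k,i-1}(x)\le u_{k,i-1}^\star$ with $A_{k,i,i-1}(x)\le A_{k,i,i-1}(x^\star)$ to obtain $u_{ki}(x)\le u_{ki}^\star$; iterating across successive levels covers all $x\in(\tilde p_{k,i-1},v)$. The symmetric computation with $A_{k,i+1,i}$ anchored at the left endpoint $\tilde p_{ki}$ handles lower deviations $x<\tilde p_{ki}$, and for $x<\tilde p$ one simply notes $B_{ki}(x)=i$, so $u_{ki}(x)=i(x-c)$ is increasing and bounded by its value at $\tilde p$.

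The remaining pieces are the price $v$ and the threshold types. For $i\le e_k$ optimality of $v$ is immediate from Property~\ref{property:e_asy}, and for $e_k<i\le l_k$ I would anchor the same monotonicity chain at level $l_k+1$, whose support reaches $v$, to show $u_{ki}(\cdot)$ is non-decreasing up to $v$; here the sum condition $l_1+l_2\in\{d-1,d\}$ from Property~\ref{property:difference} is exactly what controls which competitor availabilities create competition near $v$. Finally, at $x=v$ I would invoke Property~\ref{property:difference} (at most one of $\Phi_{1,l_1+1}(\cdot),\Phi_{2,l_2+1}(\cdot)$ jumps at $v$, with jump size below $1$) together with the equal-split tie-breaking rule: if the competitor's $\Phi_{\bar k,l_{\bar k}+1}(\cdot)$ is continuous at $v$ then $u_{ki}(\cdot)$ is continuous at $v$ and equals $u_{ki}^\star$ there, whereas if the competitor carries the atom, then pricing at $v$ only ties with that atom and sells no more than pricing just below it, so $u_{ki}(v)\le u_{ki}(v^-)\le u_{ki}^\star$.

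I expect the analysis at $v$ to be the main obstacle. The admissible jump breaks the continuity of $u_{ki}(\cdot)$ and is explicitly excluded from the indifference in Part~\ref{prop:equal}, so establishing $u_{ki}(v)\le u_{ki}^\star$ (with equality when $\Phi_{ki}(\cdot)$ itself carries the atom, which is needed to justify the positive mass placed there) forces a careful case split on which seller jumps, driven by the tie-breaking rule and the relation $l_1+l_2\in\{d-1,d\}$. This is also the place where only the non-strict form of Lemma~\ref{lemma:decreasing_random_asym} is available when $d=\max\{m_1,m_2\}$, so I would phrase every deviation bound as a non-strict inequality, which is all that a best-response check requires.
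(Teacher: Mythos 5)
Your proposal is correct and follows essentially the same route as the paper's proof in Appendix~\ref{appendix:sufficiency}: both rest on the non-strict monotonicity and non-positivity of $A_{k,l,j}(\cdot)$ (which is exactly what survives at $d=\max\{m_1,m_2\}$), both reduce the best-response check to telescoping indifference conditions across the ordered support intervals for deviations below $\tilde{p}$, below a level's support, and above it, and both isolate the price $v$ for a separate case split driven by Property~\ref{property:difference} and Lemma~\ref{lemma:jump_asy}. The only cosmetic difference is that you chain adjacent-level identities via $A_{k,i,i-1}$ and iterate over levels, whereas the paper's induction compares level $i$ directly with the level $g$ whose support interval contains the deviation point via $A_{k,g+1,i}$ or $A_{k,i,g-1}$.
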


The proof is presented in Appendix \ref{appendix:sufficiency}.  In the proof, we use the fact that $A_{k,l,j}(.)$ is non increasing and non positive when  $d\geq \max\{m_1,m_2\}$.


\subsection{Allowing $d\leq \max\{m_1,m_2\}$}
\label{easygeneralizationduopoly}

Note that all results before equation \eqref{equation:utility} also hold when $d\leq \max\{m_1,m_2\}$. Thus \eqref{equation:utility} can be restated by replacing $e_k=d-m_{\bar{k}}$ with $e_k=(d-m_{\bar{k}})^+$:



\small
\vspace{-4mm}
\begin{equation}\label{equation:utility_generalized}
\begin{aligned}
B_{kj}(x)=j\sum_{i=0}^{(d-j)^+}{q_{\bar{k}i}}&+\min\{j,d\}\sum_{i=(d-j)^++1}^{m_{\bar{k}}}{\big{(}1-\Phi_{\bar{k}i}(x)\big{)}q_{\bar{k}i}}\\
&+\sum_{i=(d-j)^++1}^{m_{\bar{k}}}{\Phi_{\bar{k}i}(x)q_{\bar{k}i}(d-i)^+}
\end{aligned}
\end{equation}

\normalsize
Note that if $m_{k}>d$, the utilities of all number of availability levels $j\geq d$ for player $k$ are equal:

\vspace{-3mm}
\begin{equation}
\begin{aligned}
&u_{kd}=u_{k,d+1}=\dots=u_{km_k}=d\sum_{i=1}^{m_{\bar{k}}}{\big{(}1-\Phi_{\bar{k}i}(x)\big{)}q_{\bar{k}i}}
\end{aligned}
\end{equation}

Let $\tilde{q}_{\bar{k}d}=\sum_{i=d}^{m_{\bar{k}}}{q_{\bar{k}i}}$ and $\tilde{\Phi}_{\bar{k}d}(x)=\sum_{i=d}^{m_{\bar{k}}}{\frac{q_{\bar{k}i}}{\tilde{q}_{\bar{k}d}}\Phi_{\bar{k}i}(x)}$. Thus, $\tilde{q}_{\bar{k}d}$ is the probability that the availability level of seller $\bar{k}$ is greater than or equal to $d$ and  $\tilde{\Phi}_{\bar{k}d}(x)$ is the average probability distribution associated with selecting the price if seller $\bar{k}$ availability is $d$ or higher. Now, the term $\sum_{i=d}^{m_{\bar{k}}}{\big{(}1-\Phi_{\bar{k}i}(x)\big{)}q_{\bar{k}i}}$ in the expression for $u_{ki}(.)$ in  (\ref{equation:utility_generalized}) can be replaced by $\tilde{q}_{\bar{k}d}(1-\tilde{\Phi}_{\bar{k}d}(x))$. Thus the problem is reduced to finding the structure when $d=\max\{m_1,m_2\}$.  It was proved previously that a strategy profile that satisfies properties in Theorem~\ref{theorem:necessary} is a NE when $d= \max\{m_1,m_2\}$. Thus, a set of equilibria of the game when $d<\max\{m_1,m_2\}$ can be found by defining $\tilde{\Phi}_{kd}(.)$ and using the properties in Theorem~\ref{theorem:necessary}. The distribution of each individual $\Phi_{kj}(.)$ for $j\geq d$ cannot be determined uniquely and is not of significant interest.

\section{The Symmetric Setting}
\label{section:duopolyalgorithm}

We now consider the symmetric setting in which $\vec{q_1}=\vec{q_2}=\vec{q}$ (clearly $m_1=m_2=m$). In this case, it is natural to consider a symmetric NE, defined as follows,
 \begin{definition}
An NE $(\Theta_1(\cdot),\Theta_2(\cdot))$ is said to be \emph{symmetric} if $\Theta_1(\cdot)=\Theta_2(\cdot)$. 
 \end{definition}
Thus, when considering symmetric NE, in terminologies like $\Phi_{.}(\cdot), \Theta_{.}(\cdot), u_{.}(\cdot), \tilde{p}_{\cdot}$, we drop the index that represents the seller and only retain the index that represents the number of units available for sale. As a special case of the general setting (Sections~\ref{section:asymmetric_duopoly} and \ref{section:duopolysufficiency}), every symmetric NE should satisfy the properties in Theorem~\ref{theorem:necessary} when $d> m$ , and every strategy profile that satisfies these properties is a NE when $d\geq m$ (Theorem~\ref{theorem:necessity_random}).  In Section~\ref{section:sym_prop}, we extend Theorem~\ref{theorem:necessary} to the case of $d=m$. In Section~\ref{section:symm_alg}, we will present an algorithm to find symmetric Nash equilibria of the game when $d\geq m$. Using the results in Section~\ref{easygeneralizationduopoly}, the algorithm can be extended to $d<m$.  

Note that the algorithm reveals that there is only one symmetric strategy profile that satisfies the properties. It follows from  Theorems \ref{theorem:necessary} and \ref{theorem:necessity_random} that a symmetric NE strategy profile  uniquely exists when $d\geq m$. In contrast, in  Appendix \ref{section:examples}, we show that there may exist multiple Nash equilibria for an asymmetric market. It is not
clear if there exists an asymmetric NE for the symmetric market; our extensive numerical evaluations have not however led such strategy.

\subsection{Properties of a Symmetric Nash Equilibrium}\label{section:sym_prop}

\begin{theorem}\label{theorem:symmetricsuff}
Let $d=m$. A symmetric NE in a symmetric market satisfies the properties in Theorem~\ref{theorem:necessary}. 
\end{theorem}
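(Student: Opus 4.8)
The plan is to re-run the necessity argument of Section~\ref{section:asymmetric_duopoly} in the symmetric market (so $m_1=m_2=m$ and, dropping the seller index, $\Phi_{ki}=\Phi_i$), checking at each link of the chain Properties~\ref{property:c_asy}--\ref{property:difference} whether the standing hypothesis $d>\max\{m_1,m_2\}$ can be relaxed to $d=m$. First I would record that at $d=m$ we have $e_k=(d-m)^+=0$, so Property~\ref{property:e_asy} is vacuous and the ``sell everything'' regime \eqref{equ:sellall} is empty. Every step whose proof only invokes $m_1+m_2>d$ carries over verbatim, since $2m>m$ still holds: this covers Property~\ref{property:c_asy}, Lemma~\ref{lemma:jump_asy} (stated for any pair with $l+i>d$, attainable via $l=i=m$), the continuity Property~\ref{property:continuous_as}, the equal-lower-bound Property~\ref{property:equal_lower_bound}, the no-gap Property~\ref{property:contiguous_assym}, and Lemma~\ref{property:structure}. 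Thus the entire difficulty concentrates on the single link that genuinely exploits the strict inequality, namely Lemma~\ref{lemma:decreasing_random_asym}.

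I expect this to be the main obstacle. In \eqref{equation:diff_decreasing} the distribution $\Phi_{\bar k i}$ enters the second sum with coefficient $(d-i)(\frac{1}{l}-\frac{1}{j})$, and the original proof forces $B_{k,l,j}(x)<0$ by exhibiting $\Phi_{km}(x)>0$ for $x>\tilde p$ together with its strictly negative coefficient. At $d=m$ that coefficient is $(m-i)(\frac{1}{l}-\frac{1}{j})$, which vanishes at $i=m$, so the top distribution $\Phi_m$ drops out of $B_{k,l,j}$ altogether. Worse, just above $\tilde p$ only $\Phi_m$ is positive, so there $B_{k,l,j}\equiv 0$ and $A_{k,l,j}$ is flat: the \emph{global} strict monotonicity asserted by Lemma~\ref{lemma:decreasing_random_asym} is simply false when $d=m$ and must be replaced by something weaker but still sufficient.

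The fix I would pursue is to demand only as much monotonicity as the downstream proofs consume. Because all coefficients in \eqref{equation:diff_decreasing} remain non-positive at $d=m$, $B_{k,l,j}$ is still non-positive and non-increasing; hence for $\tilde p\le x_1<x_2<v$ one has $A_{k,l,j}(x_2)-A_{k,l,j}(x_1)\le (x_2-x_1)\,B_{k,l,j}(x_1)$, which is strictly negative whenever $B_{k,l,j}(x_1)<0$. So it suffices to prove $B_{k,l,j}(x)<0$ on each interval $(\tilde p_{kj},v)$ on which Property~\ref{property:supportsets} (and thence Corollaries~\ref{corollary:disjoint_asym}--\ref{corollary:structure} and Property~\ref{property:difference}) actually invoke the lemma. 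When $l+j>d$ this is immediate: the condition is equivalent to $j\ge m-l+1$, so $\Phi_j$ appears in $B_{k,l,j}$ with a strictly negative coefficient, and $\Phi_j(x)>\Phi_j(\tilde p_{kj})=0$ just above the infimum of its gap-free (Property~\ref{property:contiguous_assym}), jump-free (Property~\ref{property:continuous_as}) support, giving $B_{k,l,j}(x)<0$ on $(\tilde p_{kj},v)$.

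The delicate point I would treat most carefully is the pairs with $l+j\le d$, where this argument stalls and where naively appealing to saturation of the higher-availability distributions would circularly presuppose the very support-ordering being proved. Here the symmetric structure is what saves the day: the intervals $(\tilde p_{kj},v)$ that ever need the lemma are nonempty only when $\Phi_j$ is nondegenerate (i.e.\ $\tilde p_{kj}<v$), and I would show \emph{first}, by a direct rationing/best-response computation of the kind used in the proof of Lemma~\ref{property:structure} but carried out without the not-yet-available ordering, that in a symmetric equilibrium a nondegenerate $\Phi_j$ forces $2j>d$. Granting this, every pair $(l,j)$ with $l>j$ and a nonempty interval satisfies $l+j>2j>d$, so the problematic case $l+j\le d$ simply never arises and the clean case above applies throughout. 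Making that symmetric fact self-contained is the crux; once it is in hand, the interval-wise negativity of $B_{k,l,j}$ feeds straight into the unchanged proofs of Property~\ref{property:supportsets}, Corollary~\ref{corollary:structure}, and Property~\ref{property:difference}, and combining these with the links already seen to transfer yields every property of Theorem~\ref{theorem:necessary} for a symmetric NE at $d=m$.
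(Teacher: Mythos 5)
You have correctly located the obstruction: at $d=m$ the coefficient $d-i$ of $\Phi_{\bar k i}$ in \eqref{equation:diff_decreasing} vanishes at $i=m$, so the paper's device of exhibiting $\Phi_{km_k}(x)>0$ with a strictly negative weight fails, and $A_{k,l,j}$ is genuinely flat just above $\tilde p$. Your first repair step is also sound: whenever $l+j>d$, the index $j$ itself lies in the summation range of $B_{k,l,j}$ with a strictly negative coefficient (in the first sum if $2j\le d$, in the second if $2j>d$ since $j<m$), so $B_{k,l,j}<0$ on $(\tilde p_{kj},v)$ and the telescoping inequality $A(x_2)-A(x_1)\le (x_2-x_1)B(x_1)$ delivers exactly the interval-wise strict decrease that the proof of Property~\ref{property:supportsets} consumes. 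The difficulty is entirely in your deferred lemma -- that a nondegenerate $\Phi_j$ in a symmetric NE forces $2j>d$ -- and here the proposal has a genuine gap. You propose to prove it ``by a direct rationing/best-response computation of the kind used in the proof of Lemma~\ref{property:structure},'' but that proof is precisely the kind that presupposes the ordering: it shows $B_{kl}(x)=B_{kl}(x+\epsilon)$ by conditioning on the competitor's availability, which requires already knowing that all availabilities above $\tilde j$ are saturated below $x$ and all those below $\tilde j$ price above $x+\epsilon$ -- i.e.\ Corollary~\ref{corollary:structure}. Without the ordering you cannot exclude that some high-availability $\Phi_i$ with $i+l>d$ is increasing on $(x,x+\epsilon)$, which destroys the equality. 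The lemma is in fact true and provable non-circularly, but by a different mechanism: if $2j\le d$ then $B_{kj}$ depends only on $\Phi_{\bar k i}$ with $i\ge d-j+1$, so for $\tilde p_{kj}<v$ to be a best response some such $\Phi_{\bar k i^*}$ (with $i^*+j>d$) must increase immediately to the right of $\tilde p_{kj}$; then the comparison function $A_{\bar k,i^*,j}$ contains $\Phi_{kj}$ with a strictly negative coefficient and is therefore strictly decreasing above $\tilde p_{kj}=\tilde p_{\bar k j}$ (this is where symmetry enters), forcing $u_{\bar k i^*}$ strictly below its value at $\tilde p_{kj}$ on $(\tilde p_{kj},v)$ and contradicting the accumulation of $\mathrm{supp}(\Phi_{\bar k i^*})$ there. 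That argument needs to be supplied; the gesture toward Lemma~\ref{property:structure} does not do it.

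For comparison, the paper avoids your deferred lemma altogether by a different decomposition. It proves (Lemma~\ref{lemma:decreasing_random_asym2}) that $A_{l,j}$ is strictly decreasing on $[\tilde p_{m-1},v)$ -- the point being that $\Phi_{m-1}$ always appears in $B_{l,j}$ (since $l\ge 2$ puts $d-l+1\le m-1$ in range) with coefficient governed by $d-(m-1)=1>0$, so no vanishing occurs -- and then handles the bottom region separately, showing $\tilde p=\tilde p_m$ (Lemma~\ref{lemma:minimumtildep}) and that no $\tilde p_i$ with $i\le m-2$ falls in $[\tilde p_m,\tilde p_{m-1})$ (Lemma~\ref{lemma:oderm}), so that only $\Phi_m$ lives where monotonicity can fail. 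Your route, once the crux lemma is actually proved, is arguably more economical (one structural fact about symmetric equilibria replaces three lemmas and a re-proved ordering property), but as written the central step is asserted rather than established, and the method you point to for it would fail.
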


The proof is technical and is relegated to the Appendix. It implies that properties in Theorem~\ref{theorem:necessary} are necessary and sufficient conditions for a symmetric NE when $d\geq m$. 

Since NE is symmetric, $l^*=l_1=l_2$. Thus,  $l^*=\frac{d-1}{2}$ or $l^*=\frac{d}{2}$, whichever is an integer. Since at most one seller can have a jump at $v$ at $l^*+1$, in a symmetric NE, none of them do. Thus, the
 properties in Theorem \ref{theorem:necessary} transform to the following in the symmetric context.

\begin{enumerate}
\item \label{prop:S1} Sellers offer price $v$ with probability $1$, if they have $i\in\{1,\dots,l^*\}$ available units.
\item  There exists an increasing sequence $a_{m}, a_{m-1}, \ldots, a_{l^*+1}, a_{l^*}$ of positive real numbers in  $(c, v]$ with $a_{l^*} = v$  such that each seller randomizes her price in the interval $[a_{i},a_{i-1}]$ when she has $i$ units of commodity available for sale for $i\in\{l^*+1,\dots,m\}$. Thus,
\begin{enumerate}
\item \label{prop:SCont} Support sets are contiguous.
\item \label{prop:Sdisj} Support sets are disjoint (except possibly at one point).
\item \label{prop:Sdec} Support sets are in decreasing order of the number of available units for sale.
\end{enumerate}
\item \label{prop:Scont} Price distribution is continuous for $i\geq l^*$.
\item \label{prop:Sequal} The utility of a seller when she offers $i$ units is equal for all  prices in the support set of $\Phi_{i}(.)$, except possibly at price $v$ (if it belongs to her support set).
\end{enumerate}

\subsection{Algorithm for computing a symmetric  NE for the symmetric setting}\label{section:symm_alg}

We will now identify an algorithm to compute strategies that exhibit the  properties in the previous subsection. The algorithm reveals that there is only one symmetric strategy profile that satisfies the same. It follows from  Theorem \ref{theorem:necessary} and \ref{theorem:necessity_random} that a symmetric NE strategy profile uniquely exists when $d\geq m$. Note that the algorithm is developed for $d\geq m$. However, with the method presented in Section \ref{easygeneralizationduopoly}, the algorithm can be used to find the equilibrium for $d\leq m$.

Since $\Phi_j(\cdot)$ is completely characterized for $j < \frac{d+1}{2}$, we should characterize  $\Phi_j(\cdot)$ for $j \geq \frac{d+1}{2}$, and outline a framework for computing the same. We proceed in an increasing order of $j$ starting  with $j = \lceil\frac{d+1}{2}\rceil$. Then moving to $j=\lceil\frac{d+1}{2}\rceil+1$, etc.

Now, let $\lceil\frac{d+1}{2}\rceil$. Note that $\tilde{v}_{\lceil\frac{d+1}{2}\rceil} = v$ and $\tilde{p}_k = v$ for $k <\lceil\frac{d+1}{2}\rceil$,  and $\tilde{v}_k \leq \tilde{p}_{\lceil\frac{d+1}{2}\rceil}$ for $k >\lceil\frac{d+1}{2}\rceil$ (Properties~\ref{prop:S1} and \ref{prop:Sdec}). Since support sets are ordered (Property~\ref{prop:Sdec}) and disjoint (Property~\ref{prop:Sdisj}), the expression for $u_{\lceil \frac{d+1}{2}\rceil}(x)$ for $x\in [\tilde{p}_{\lceil\frac{d+1}{2}\rceil},v)$ only depends on $\Phi_{\lceil\frac{d+1}{2}\rceil}(x)$(Equation~\eqref{equation:utility}). In particular,
  $u_{\lceil \frac{d+1}{2}\rceil}(v^{-})$ can be obtained using the fact that
  $\Phi_{\lceil\frac{d+1}{2}\rceil}(v^{-}) = 1$ which follows from the continuity of $\Phi_{\lceil\frac{d+1}{2}\rceil}(.)$
  (Properties~\ref{prop:Scont}).
    Next, $u_{\lceil \frac{d+1}{2}\rceil}(x)=u_{\lceil \frac{d+1}{2}\rceil}(v^{-})$ for every $x\in [\tilde{p}_{\lceil \frac{d+1}{2}\rceil}, v)$.
Thus having $u_{\lceil \frac{d+1}{2}\rceil}(v^{-})$, and using continuity, we can find a unique expression for $\Phi_{\lceil\frac{d+1}{2}\rceil}(x)$. Using $\Phi_{\lceil \frac{d+1}{2}\rceil}(\tilde{p}_{\lceil \frac{d+1}{2}\rceil})=0$, $\tilde{p}_{\lceil \frac{d+1}{2}\rceil}$ can be found uniquely.

We now compute the structure of $\Phi_i(\cdot)$, $\forall i > \lceil\frac{d+1}{2}\rceil$ using $\Phi_{i-1}(.), \Phi_{i-2}(.),\cdots, \Phi_{\lceil \frac{d+1}{2}\rceil}(.)$ that are computed before $\Phi_i(\cdot)$. We utilize the facts that,

\begin{enumerate}
\item $\Phi_j(x) = 1$ for $j > i$, $x \in [\tilde{p}_i, \tilde{v}_i]$
\item  $\Phi_j(x) = 0$ for $j < i$, $x \in [\tilde{p}_i, \tilde{v}_i]$
\item $\tilde{v}_i < v$
\end{enumerate}

Thus, from (\ref{equation:utility}),
\begin{equation}\label{equ:payval}
u_i(\tilde{v}_i) =(\tilde{v}_i-c)\bigg{(}i\sum_{g=0}^{i-1}{q_g}
+\sum_{i}^m{q_g(d-g)}\bigg{)}
\end{equation}
Since $\tilde{v}_i = \tilde{p}_{i-1}$, and $\tilde{p}_{i-1}$ is computed during the computation of $\Phi_{i-1}(\cdot)$, which precedes that of $\Phi_{i}(\cdot)$, (\ref{equ:payval}) fully specifies $u_i(\tilde{v}_i)$. Furthermore, for $x \in [\tilde{p}_i,\tilde{v}_i]$ the only unknown variable in the expression of $u_{i}(x)$ is $\Phi_{i}(x)$. Since $u_i(x) = u_i(v_i)$ for $x \in [\tilde{p}_i,\tilde{v}_i]$,
\begin{equation}\label{equ:dist2}
\Phi_{i}(x)=\frac{i\sum_{g=0}^{i-1}{q_g}+i q_{i}+\sum_{g=i+1}^m{q_g(d-g)}-\frac{u_i(\tilde{v}_i)}{x-c}}{q_{i}(2i-d)}
\end{equation}

From (\ref{equ:dist2}), $\Phi_i(\tilde{v}_i) = 1$. Thus, for $x \geq \tilde{v}_i$, $\Phi_i(x) = 1$. Now, $\tilde{p}_i$ can be uniquely identified using the fact that $\Phi_i(\tilde{p}_i) = 0$,

\vspace{-18pt}
\begin{equation}\label{equ:lower2}
\tilde{p}_{i}=c+\frac{(\tilde{v}_i-c)\bigg{(}i\sum_{g=0}^{i-1}{q_g}
+\sum_{i}^m{q_g(d-g)}\bigg{)}}{i\sum_{g=0}^{i-1}{q_g}+i q_{i}+\sum_{g=i+1}^m{q_g(d-g)}}
\end{equation}
Therefore $\Phi_i(x) = 0$ for $x \leq \tilde{p}_i$. Clearly, $\Phi_i(\cdot)$ has been characterized uniquely. Note that the denominator of \eqref{equ:lower2} is positive since $d\geq m$ and $q_m<1$ (uncertainty assumption in Section~\ref{system_model}).  In addition, $\tilde{p}_{i}>c$. This is because of the fact that the second term of RHS of (\ref{equ:lower2}) is positive.


We now prove that $\Phi_i(\cdot)$ is a valid probability distribution. Clearly, $\Phi_{i}(\cdot)$ is continuous. Note that in (\ref{equ:dist2}) for $x\in [\tilde{p}_{i}, \tilde{v}_i)$, by increasing $x$, the term $\frac{u_{i}(v_i)}{x-c}$ will strictly decrease (since $u_{i}(\tilde{v}_i)>0$), and we can say that $\Phi_{i}(x)$ is strictly increasing. Also, $\Phi_{i}(\tilde{p}_{i}) = 0$ and $\Phi_{i}(\tilde{v}_i) = 1$. Thus, $0 \leq \Phi_{i}(x) \leq 1$ for $x \in [\tilde{p}_{i},\tilde{v}_i)$. Therefore, $\Phi_{i}(\cdot)$ is non-decreasing and assumes values in $[0, 1]$ for all $x$. The claim follows. Thus we have uniquely identified a symmetric strategy that satisfies the properties required by a Nash equilibrium.


\section{Random Demand}\label{section:random}
We have so far assumed that the  demand $d$ is deterministic. In this section, we will generalize the results to a random demand, $D.$ Let $r_d$ denote the probability that the demand is $d$, $B_{kld}(x)$ be the expected number of units that seller $k$ sells if she offers $l$ units for sale and quotes $x$ as the price per unit when the total demand is $d$, and $u_{kld}(x)$ be the expected utility in this case. Clearly,
\begin{equation}
u_{kl}(x)=\sum_{d}{r_du_{kld}(x)}=\sum_{d}{r_d B_{kld}(x)(x - c)}
\nonumber
\end{equation}
We introduce $\underline{d}=\min\{d: d > 0\ and\  r_d > 0\}$. Utilizing similar proofs, we can show that all the previous results about the structure of NE are valid for the random demand, once $d$ is replaced with $\underline{d}$. This is but expected as each seller now chooses her price knowing that she is assured of an overall demand of at least $\underline{d}$ (instead of $d$ in the deterministic demand case).  Algorithms similar to those in the deterministic case can be developed for computation of the NE in both symmetric and general cases.

\section{The Asymptotic Behavior}\label{section:simulations}

In this section, through numerical evaluations, we investigate the asymptotic behavior of the symmetric NE of a symmetric duopoly market when the number of available units with a seller increases to infinity. In asymptotic scenario, many of availability probability distributions that arise naturally concentrate around the mean. Thus, $q_k\rightarrow 0$, when $k$ is far from the mean.    First, we show  that the length of the support set for availability of $k$ units approaches zero as $q_k\rightarrow 0$: From equation~\eqref{equ:lower2},

\vspace{-7pt}
$$
\begin{aligned}
\tilde{p}_i&=c+\frac{(\tilde{p}_{i-1}-c)(i\sum_{g=0}^{i-1}{q_g}+\sum_{g=i}^m{q_g(d-g)})}{i\sum_{g=0}^{i}{q_g}+\sum_{g=i+1}^m{q_g(d-g)}}\\
&=\tilde{p}_{i-1}+(\tilde{p}_{i-1}-c)\frac{q_i(d-2i)}{i\sum_{g=0}^{i}{q_g}+\sum_{g=i+1}^m{q_g(d-g)}}
\end{aligned}
$$
It is immediate that if $q_i\rightarrow 0$, then $\tilde{p}_{i}\rightarrow \tilde{p}_{i-1}$\footnote{Note that the denominator is positive since $d\geq m$, and we assume uncertainty in competition, i.e. $q_m<1$.}. This implies that the length of the support set for the availability level $i$ units approaches zero.

We investigate the asymptotic behavior using numerical simulations when the availability of each seller follows a binomial distribution $(m,r<1)$. With this distribution, as $m\rightarrow \infty$,  the binomial distribution can be approximated by a normal distribution with mean $mr$ and variance $mr(1-r)$. Thus $m\rightarrow \infty$ yields that $\tilde{p}_i\rightarrow \tilde{p}_{i-1}$ when $|i-mr|$ is large enough. In other words, the length of the support set for the availability level $i$ units approaches zero if $i$ is far from the mean. Other parameters are considered to be  $v=10$, $c=1$, and $d=m$.

\begin{figure}[t]
\begin{center}
\includegraphics[scale=0.32]{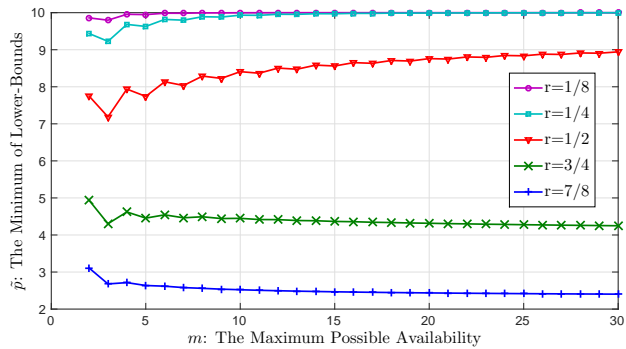}
\end{center}
\caption{$\tilde{p}$ versus $m$ for when availability level is binomial with probability $p$ and demand is $m$}
\label{figure:tilde_p}
\end{figure}

In Figure~\ref{figure:tilde_p}, the value of  $\tilde{p}$, i.e. the lowest lower-bound is plotted versus $m$, i.e. the highest possible level of availability. As you can see, the larger the probability $r$, the smaller $\tilde{p}$. Note that when $r$ is large, the seller is more likely to offer with higher levels of availability. Therefore the competition is more intense. In addition, when $m$ is increased, the distribution $\vec{q}$ of the availability levels concentrates around the mean, $mr$. If $r > \frac{1}{2}$, when a seller offers $k=mr$, knowing that the other seller offers $mr>\frac{m}{2}$ with positive probability, she will offer price less than $v$ (note that $d=m$). Furthermore, the higher $m$, the more intense the competition, and consequently $\tilde{p}$ is decreasing. On the other hand, when $r\leq \frac{1}{2}$, if a seller offers  around $mr$ units, there is no competition between sellers knowing that $2mr\leq d=m$. Furthermore, the availability probability $q_k$, when $k$ is far from $mr$, tends to zero when $m$ is large. Thus the associated support sets shrink to zero. This explains the increasing behavior of $\tilde{p}$. We notice oscillation in the figure, since $m$ alternates between odd and even.

\section{Applications and Discussion}\label{application}

The framework we described in this paper can be used to model three different applications in which uncertainty in competition naturally emerges: secondary spectrum access networks, a non-neutral Internet market, and micro grid networks.

Pricing in secondary spectrum access networks \cite{CRN} is one of the applications of our model. Recent developments in wireless devices have resulted in a significant growth in demand for the radio spectrum. This leads to spectrum congestion. On the other hand, the available radio spectrum is greatly under-utilized \cite{survey}. Spectrum congestion and under-utilization have directed researchers to adopt new techniques in order to use the available spectrum more efficiently and to decrease congestion. Secondary spectrum access is an example of these techniques. In these networks, there are two types of users: (i) Primary/licensed users, who lease a number of frequency bands (channels) directly from the regulator, and (ii) Secondary/unlicensed users, who lease frequency bands from primary users for a certain amount of time in exchange for money or other types of credit. Note that primary and secondary users correspond to sellers and buyers in our model, respectively. Each primary user may have multiple vacant frequency bands available for sale, and a secondary user can lease a channel only if it is not in use by the primary user who owns it. The usage of subscribers of primary users is random and different for different primaries.  Thus primaries are uncertain about the competition, and they need to select prices for the frequency bands they offer for sale, without knowing the number of frequency bands available for sale with their competitors. 

The next application of our model is the interaction between Internet Service Providers (ISPs) and Content Providers (CPs) in a \emph{non-neutral} Internet. Net neutrality on the Internet is perceived as the policy that mandates ISPs to treat all data equally, regardless of the source, destination, and type of the data \cite{progressive}. This precludes ISPs from charging CPs to carry their data to the end-users in the last-mile. In January 2014, a federal appeals court struck down parts of the Federal Communication Commission's (FCC) rules for Net-Neutrality \cite{NYtimes1}. The new changes in the Internet policies enable ISPs to change their policies and provide differential treatment of traffic to generate additional revenue streams from CPs. This is called a \emph{non-neutral} regime for the Internet in which ISPs can offer resources to CPs for sale or rent. Our framework can capture the pricing in this type of the Internet market. Here, ISPs represent the sellers that offer resources for reservation/sponsorship, and CPs are the customers that shop around for the lowest available prices. Therefore ISPs seek to set prices that will ensure that their resources are reserved/sponsored and also fetch adequate profit. Note that ISPs determine the number of resources available for sponsoring based on the demand of their end-users. The more congested an ISP, the higher the demand of end-users, and therefore the lower the number of available resources for sponsoring. Since the demand of end-users is not a priori known, the ISPs are not aware of the number of units of resources available to her competitor before quoting her price. Thus, the competition that each ISP faces is uncertain.

The third example scenario pertains to pricing in micro grids\cite{microgrids2}. A micro grid network is a network of distributed power generating systems connected to local subscribers, and also to the central macro power grid. The distributed generation of power at small on-site stations is a promising alternative to the traditional generation at large stations. Decreasing the loss of transmission by reducing the distance to consumption units \footnote{In microgrid networks, the power can be sold to or bought from other local micro grids. This reduces the distance the power should be transmitted via the macro grid from a generation to a consumption site.}, utilizing renewable energy sources, decreasing the risk of blackout, and increasing security are some of the advantages of distributed power generating scheme \cite{microgrids}.  In these networks, a microgrid equipped with a distributed power generating system can sell its excess power to other microgrids as well as the macro grid. Since micro grids are emerging technologies \footnote{Microgrids are emerging  in different countries such as United States \cite{US_micro} and India\cite{India_micro}.}, their  market structure has not been finalized yet. Thus, different market structures needs to be investigated. One possible scenario is a centralized market in which micro grids sell their excess power to the macro grid or a local utility at a feed-in tariff \footnote{A feed-in tariff is an offer by the macro grid to purchase some or all of the output of a micro grid at a fixed or formula rate.}\cite{feed-in}. Another scenario, which is investigated in this paper, is a distributed market in which micro grids trade the power among themselves as also with macro grid at a price quoted by them in a competitive market. Our model captures the second scenario in which each micro grid  with excess power (seller) sells its excess power to micro grids with deficient power or the macro grid (buyers)\footnote{Note that each microgrid can be a seller or a buyer depending on the number of power units generated and the demand of its subscribers. However, at a fixed time, the identity of a micro grid as a seller or a buyer is fixed.}. The amount of power generated by a power generating system is not a priori known and is different for different sellers. Thus, the sellers need to select prices for the excess power they offer for sale, without knowing  the number of power units available for sale with their competitors (uncertainty in competition).



Note that in these applications, we considered the case that sellers sell their surplus supply. The original supply is allocated to their subscribers, i.e. contracted customers, using either usage-based or flat-rate pricing.

We now discuss about some details of the applications that arise in practice. Note that one unit of commodity might be valued differently by different buyers in the above mentioned applications. For instance, different secondary users receive different rates for the same frequency band, depending on their location.  Similarly, different microgrids receive different amounts of power owing to differences in power loss.  Hence, different buyers have different utilities even when they buy the same amount of commodity. However, in our formulations, we assumed that the pricing structure is the same for all buyers, regardless of the differences in the utilities.   We justify this assumption as follow. 

First note that in microgrid networks, the  transmission loss is typically negligible, due to the proximity of generators and consumers. Thus, all consumers receive approximately the same utility for a unit of power they purchase.   For Primary/Secondary markets and a Non-Neutral Internet market, the utility of secondary users and CPs (as buyers) depends on the utility of their end-users, and subsequently is different for different secondaries and CPs, depending on the characteristics of their end-users. Sellers would not in general know the characteristics and identities of the subscribers of potential buyers. Hence, prices quoted by the sellers cannot depend on the utility of buyers. In addition,  note that introducing a \emph{differential pricing} for customers complicates the  pricing structure for them, and prevents an easy cost prediction and management. For instance, in wireless settings, the channel quality of end-users and the rate perceived by them are time and location dependent\cite{ISIT}. Thus, in a differential pricing scheme, customers know the current pricing only when they use the service. But, customers are usually reluctant to adopt differential pricing schemes, owing to the rapid variability of prices which is not usually well-received by them \cite{soumya_survey}. In addition, sellers are also reluctant using a differential pricing scheme for their end-users, as they are usually computationally complex. Therefore, we did not consider different valuations for different customers in determining the pricing strategy of sellers. However, differential pricing for users with different valuation might arise for other applications; this constitutes a topic of future research.

\section{Conclusion}
We investigated  price competition in a duopoly market with uncertain competition when different sellers may have different number of units available for sale. We modelled the interactions among sellers as a non-cooperative game and listed a set of properties that are sufficient conditions for a strategy profile to be an NE. We proved that these properties are also necessary conditions for an NE in a symmetric market, or for some values of demand values in an asymmetric market. We showed that a symmetric NE uniquely exists and presented an algorithm for computing the same. In Appendix~\ref{appendix:oligopoly}, using the results proved for a duopoly, we proposed a heuristic pricing strategy for sellers in a symmetric oligopoly market. Numerical results reveal that the proposed pricing strategies are good approximations of NE when sellers are not too concerned about optimizing over small gains. A Direction for future work is to consider different pricing for different types of demand.

\appendices

\section{Proof of Lemma~\ref{lemma:jump_asy}}\label{appendix:lemma1}

\vspace{-3pt}
\begin{proof}
First consider the tuple $<l,y>$ associated with the seller $\bar{k}$ in which the first element is the number of units she offers and the second one is the price she chooses. We introduce $D_{kl}^{(1)}(y,i,x)$ as the expected number of units sold by the seller $k$ who wants to offer $l$ units with price $y$ when her competitor's tuple $<g,z>\neq <i,x>$, and $D_{kl}^{(2)}(y,i,x)$ as the expected number of units sold by the seller who wants to offer $l$ units with price $y$ when her competitor's tuple $<g,z>=<i,x>$. The expected number of units sold by a seller can be written as,

\vspace{-3mm}
\begin{equation}
\begin{aligned}
B_{kl}(y)&=D_{kl}^{(1)}(y,i,x)Pr\{<g,z>\neq <i,x>\}\\
&+D_{kl}^{(2)}(y,i,x)Pr\{<g,z>=<i,x>\}
\end{aligned}
\nonumber
\end{equation}

Note that $D_{kl}^{(1)}(a,i,x)\leq D_{kl}^{(1)}(x,i,x)$ and $D_{kl}^{(2)}(a,i,x)\leq D_{kl}^{(2)}(x,i,x)$ for $a\geq x$ because the number of units a seller sells is a non-increasing function of her price for any given amounts offered by both sellers and any given price chosen by the competitor. Thus $B_{kl}(a)\leq B_{kl}(x)$. In addition, 

\small
\vspace{-3mm}
\begin{equation}\label{equ:non_vanishing2}
\begin{aligned}
&B_{kl}(x-\epsilon')-B_{kl}(x)=(D_{kl}^{(1)}(x-\epsilon',i,x)\\
&\qquad-D_{kl}^{(1)}(x,i,x))Pr\{<g,z>\neq <i,x>\}\\&\qquad +(D_{kl}^{(2)}(x-\epsilon',i,x)-D_{kl}^{(2)}(x,i,x))Pr\{<g,z>=<i,x>\}
\end{aligned}
\end{equation}

\normalsize
As we discussed $D_{kl}^{(1)}(x,i,x)\leq D_{kl}^{(1)}(x-\epsilon',i,x)$. For $D_{kl}^{(2)}(x,i,x)$, we should consider ties. Since each buyer is equally likely to buy a unit from both sellers if both select equal prices, we can say that $D_{kl}^{(2)}(x,i,x)=l\frac{d}{i+l}<l$ (since $i+l>d$) and $D_{kl}^{(2)}(x-\epsilon,i,x)=l$. Note that $Pr\{\text{other seller's tuple}<g,z>=<i,x>\}=q_i\times \text{Jump Size of $\Phi_{ki}(.)$ at $x$}$. Thus, for all positive $\epsilon'$, RHS of (\ref{equ:non_vanishing2}) is greater than or equal to $\theta(x)$, where $\theta(x)$ is a positive number that does not depend on $\epsilon$. Therefore since $B_{kl}(a) \leq B_{kl}(x)$, $\forall a \geq x$, $B_{kl}(x-\epsilon')\geq  B_{kl}(a)+ \theta(x)$, for all $a\geq x$. Thus,

\vspace{-3mm}
\begin{equation}
u_{kl}(x-\epsilon')-u_{kl}(a)\geq (x-\epsilon'-a){B_{kl}(a)} + \theta(x)(x-\epsilon'-c)
\nonumber
\end{equation}

Since $x > c$, for all sufficiently small $\epsilon'$, $ x-\epsilon'-c > 0$. In addition, since $a\leq x+\epsilon$ by the statement of the lemma, the lowest value for $x-\epsilon'-a$ is $-\epsilon-\epsilon'$, and ${B_{kl}(a)}\leq l$. Therefore $(x-\epsilon'-a){B_{kl}(a)} + \theta(x)(x-\epsilon'-c)\geq (-\epsilon-\epsilon')l+\theta(x)$.  Therefore, for all sufficiently small but positive $\epsilon$ and $\epsilon'$,

\vspace{-2mm}
\begin{equation}
u_{kl}(x-\epsilon')>u_{kl}(a) \qquad a\in [x,\min\{x+\epsilon,v\}]
\nonumber
\end{equation}
\end{proof}

\section{Proof of Theorem~\ref{theorem:necessity_random}}\label{appendix:sufficiency}
\begin{proof}
The goal is to show that for each $i$ and $k$ all $x \in [\tilde{p}_{ki}, \tilde{v}_{ki})$ constitutes a best response for the seller $k$ who offers $i$ units. That is, for each $x \in [\tilde{p}_{ki}, \tilde{v}_{ki})$ and for all $y$, $u_{ki}(x) \geq u_{ki}(y)$. In addition, if $\Phi_{ki}(\cdot)$ associates positive probability with $\tilde{v}_{ki}$, then  $u_{ki}(\tilde{v}_{ki}) \geq u_{ki}(y)$ for all $y$, i.e., $v_{ki}$ is a best response when the seller $k$ offers $i$ units.
Note that the distributions, $\Phi_{ki}(\cdot)$'s, should satisfy Property \ref{property:continuous_as}. Thus, equations (\ref{equation:utility}) and (\ref{equation:diff_decreasing}) holds for $x<v$, and $A_{k,l,j}(x)$ is non increasing and non positive with respect to $x$ for $l>j > e_{\bar{k}}$. 

We consider the case $j \leq e_{\bar{k}}$ here. Thus, $B_{k,j}(x) = j$ and $B_{k,l,j}(x) = \frac{1}{l}B_{k,l}(x) - 1$.
Note that the expected number of units $ B_{k,l}(x)$ sold at price $x$ when $l$ units are offered  is a non-increasing function of $x$ and $ B_{k,l}(x) \leq l.$ Thus, $B_{k,l,j}(x)$ and therefore $A_{k,l,j}(x)$ is non increasing and non positive with respect to $x$ for $l>j$ regardless of how $j$ compares with $e_{\bar{k}}$.

Consider $x < \tilde{p}$. $u_{ki}(x) \leq i(x-c) < i(\tilde{p} - c) = u_{ki}(\tilde{p})$. The last equality follows from (\ref{equation:utility}), since $\Phi_{\overline{k}j}(\tilde{p}) = 0$ for all $j$. Therefore we consider $x\geq \tilde{p}$ throughout the proof.


Suppose $l_k \in \{0,1,\dots,m_k-1\}$ in Property \ref{property:difference} is fixed. We first start with $i \geq l_k+1$. From the assumption in Theorem~\ref{theorem:necessity_random}, we know that $u_{ki}(x) = u_{ki}(y)$ for any $x, y$ in
the interior of the support set of $\Phi_{ki}(\cdot)$, the support set of $\Phi_{ki}(\cdot)$ is $[\tilde{p}_{ki}, \tilde{v}_{ki}]$, $\Phi_{ki}(\cdot)$ is continuous at all $x < v$, $\tilde{v}_{ki} < v $ for $i>l_k+1$, and $\tilde{v}_{ki} = v $ for $i=l_k+1$. Thus,  if $i>l_k+1$ $u_{ki}(x) = u_{ki}(y)$ for all $x, y \in [\tilde{p}_{ki}, \tilde{v}_{ki}]$, and for $i = l_k+1$, $u_{ki}(x) = u_{ki}(y)$ for all $x, y \in [\tilde{p}_{ki}, \tilde{v}_{ki})$. We consider the last case in detail.  Here, $\tilde{v}_{ki} = v$. If $\bar{k}$ has a jump at $v$ when she offers $l_{\bar{k}}+1$ units, by Lemma \ref{lemma:jump_asy},  $u_{ki}(v) < u_{ki}(v-\epsilon)$ for arbitrary small but positive $\epsilon$. \footnote{Note that Lemma \ref{lemma:jump_asy} holds for any arbitrary price distributions and not only those that are NE.} If not, using equation (\ref{equation:utility}) and continuity of the price distributions included in that equation, it follows that $u_{ki}(v)=u_{ki}(\tilde{p}_{ki})$. Thus, we only need to prove that for all $x$, $u_{ki}(\tilde{p}_{ki}) \geq u_{ki}(x)$. We do so by separately 
considering three cases: 1. $i\geq l_k+1$ and $x \in [\tilde{p}, \tilde{p}_{ki})$ 2. $i\geq l_k+1$ and $x \in (\tilde{v}_{ki}, v]$ 3. $i\leq l_k$.

1) {$i\geq l_k+1$ and $x \in [\tilde{p}, \tilde{p}_{ki})$}: The claim follows by vacuity for $i = m_k$. We therefore consider $i < m_k$. Since $\tilde{v}_{kj} = \tilde{p}_{k,j-1}$ for $j\geq l_k+1$, any such $x$ is in $[\tilde{p}_{kg}, \tilde{p}_{k,g-1})$ for some $g > i$. We prove this claim by induction on $g$, starting with the base case of $g = i+1$. For $x\in[\tilde{p}_{k,i+1},\tilde{p}_{ki})$,
\begin{equation}
\begin{aligned}
\frac{1}{i+1}u_{k,i+1}(x)-\frac{1}{i}u_{ki}(x)&=A_{k,i+1,i}(x)\\
\frac{1}{i+1}u_{k,i+1}(\tilde{p}_{ki})-\frac{1}{i}u_{ki}(\tilde{p}_{ki})&=A_{k,i+1,i}(\tilde{p}_{ki})\\
u_{k,i+1}(x)&=u_{k,i+1}(\tilde{p}_{ki}) 
\end{aligned}
\nonumber
\end{equation}
Note that  $\tilde{p}_{ki} = \tilde{v}_{k,i+1}$. Subtracting the first and the second equation, we get,

\footnotesize
\begin{equation}
\frac{1}{i}(u_{ki}(x)-u_{ki}(\tilde{p}_{ki}))=A_{k,i+1,i}(\tilde{p}_{ki})-A_{k,i+1,i}(x)\leq0
\nonumber
\end{equation}

\normalsize
 Since $A_{k,l,j}(x)$ is non increasing and non positive with respect to $x$ for $l>j$. Therefore $u_{ki}(x)\leq u_{ki}({\tilde{p}_{ki}})$ for $x \in [\tilde{p}_{k,i+1},\tilde{p}_{ki})$. We want to prove that $u_{ki}(x)\leq u_{ki}(\tilde{p}_{ki})$ for $x\in [\tilde{p}_{k,g+1},\tilde{p}_{kg})$, knowing that  $u_{ki}(x)\leq u_{ki}(\tilde{p}_{ki})$ for $x\in [\tilde{p}_{kg},\tilde{p}_{k,g-1})$ and $m_k-1\geq g\geq i+1$ (at the base we had $g=i+1$).
\begin{equation}
\begin{aligned}
\frac{1}{g+1}u_{k,g+1}(x)-\frac{1}{i}u_{ki}(x)&=A_{k,g+1,i}(x)\\
\frac{1}{g+1}u_{k,g+1}(\tilde{p}_{kg})-\frac{1}{i}u_{ki}(\tilde{p}_{kg})&=A_{k,g+1,i}(\tilde{p}_{kg})\\
u_{k,g+1}(x)&=u_{k,g+1}(\tilde{p}_{kg})
\end{aligned}
\nonumber
\end{equation}
Note that $\tilde{p}_{kg} = \tilde{v}_{k,g+1}$. Subtracting the first and the second equation, we get,

\footnotesize
\begin{equation}
\frac{1}{i}(u_{ki}(x)-u_{ki}(\tilde{p}_{kg}))=A_{k,g+1,i}(\tilde{p}_{kg})-A_{k,g+1,i}(x)\leq0
\nonumber
\end{equation}
\normalsize
Thus, $u_{ki}(x)\leq u_{ki}({\tilde{p}_{kg}})$ for $x \in [\tilde{p}_{k,g+1},\tilde{p}_{kg})$. The induction hypothesis yields  $u_{ki}(x)\leq u_{ki}(\tilde{p}_{ki})$ for $x\in [\tilde{p}_{k,g+1},\tilde{p}_{kg})$.

2) {$i\geq l_k+1$ and $x \in (\tilde{v}_{ki}, v]$}: We have just shown that $u_{ki}(x) \leq u_{ki}(\tilde{p}_{ki})$ for all $x \in [\tilde{p}, \tilde{p}_{ki})$. We now show the same for all $x \in (\tilde{v}_{ki}, v]$. The claim follows by vacuity for $i = l_k+1$, since $\tilde{v}_{ki} = v$. We therefore consider $i > l_k+1$. Since $\tilde{v}_{kj} = \tilde{p}_{k,j-1}$ for $ l_k+1 \leq j \leq m_k$, and $\tilde{v}_{k,l_k+1}=v$, any such $x$ is in $(\tilde{p}_{kg}, \tilde{p}_{k,g-1}]$ for some  $l_k+1< g < i$. We prove this claim by induction on $g$, starting with the base case of $g = i-1$. Let $x < v$.

\vspace{-7pt}
\begin{equation}
\begin{aligned}
\frac{1}{i}u_{ki}(x)-\frac{1}{i-1}u_{k,i-1}(x)&=A_{k,i,i-1}(x)\\
\frac{1}{i}u_{ki}(\tilde{p}_{k,i-1})-\frac{1}{i-1}u_{k,i-1}(\tilde{p}_{k,i-1})&=A_{k,i,i-1}(\tilde{p}_{k,i-1})\\
u_{k,i-1}(x)&=u_{k,i-1}(\tilde{p}_{k,i-1}) \end{aligned}
\nonumber
\end{equation}
Subtracting the first and the second equation, we get,

\vspace{-7pt}
\small
\begin{equation}
\frac{1}{i}(u_{ki}(x)-u_{ki}(\tilde{p}_{k,i-1}))=A_{k,i,i-1}(x)-A_{k,i,i-1}(\tilde{p}_{k,i-1})\leq 0
\nonumber
\end{equation}

\normalsize
Therefore $u_{ki}(x)\leq u_{ki}({\tilde{p}_{k,i-1}})$ for $x \in (\tilde{p}_{k,i-1}, \tilde{p}_{k,i-2}]\setminus {v}$. The claim is established in the base case if $\tilde{p}_{k,i-2} < v$. Else, if $\tilde{p}_{k,i-2} = v$, the claim has been shown only for $x \in (\tilde{p}_{k,i-1}, v)$ and we still need to show that $u_{ki}(v) \leq  u_{ki}(\tilde{p}_{k,i-1})$, which we proceed to do. Now, let $x = v$. if the seller $\bar{k}$ has a jump when it offers $l_{\bar{k}}+1$ units, since $i > l_k+1$,  for all sufficiently small but positive $\epsilon$, $u_{ki}(v) < u_{ki}(v-\epsilon)$,
and for sufficiently small but positive $\epsilon$, $v-\epsilon \in (\tilde{p}_{k,i-1}, v)$. Since $u_{ki}(v-\epsilon) \leq u_{ki}(\tilde{p}_{k,i-1})$, the base case follows. If not, that is seller $\bar{k}$ does not have a jump when it offers $l_{\bar{k}}+1$ units, using equation (\ref{equation:utility}) and continuity, we can deduce that $u_{ki}(v) \leq  u_{ki}(\tilde{p}_{k,i-1})$. The base case follows.

Now we want to prove that $u_{ki}(x)\leq u_{ki}(\tilde{p}_{k,i-1})$ for $x\in (\tilde{p}_{k,g-1},\tilde{p}_{k,g-2}]$, knowing that  $u_{ki}(x)\leq u_{ki}(\tilde{p}_{k,i-1})$ for $x\in (\tilde{p}_{kg},\tilde{p}_{k,g-1}]$ and $g\leq i-1$ and $g-1\geq l_k+1 $. First, let $x < v$.

\vspace{-7pt}
\small
\begin{equation}
\begin{aligned}
\frac{1}{i}u_{ki}(x)-\frac{1}{g-1}u_{k,g-1}(x)&=A_{k,i,g-1}(x)\\
\frac{1}{i}u_{ki}(\tilde{p}_{k,g-1})-\frac{1}{g-1}u_{k,g-1}(\tilde{p}_{k,g-1})&=A_{k,i,g-1}(\tilde{p}_{k,g-1})\\
u_{k,g-1}(x)&=u_{k,g-1}(\tilde{p}_{k,g-1})
\end{aligned}
\nonumber
\end{equation}

\normalsize
Subtracting the first and the second equation, we get,

\vspace{-7pt}
\small
\begin{equation}
\frac{1}{i}(u_{ki}(x)-u_{ki}(\tilde{p}_{k,g-1}))=A_{k,i,g-1}(x)-A_{k,i,g-1}(\tilde{p}_{k,g-1})\leq 0
\nonumber
\end{equation}

\normalsize
The inequality is because of the fact that $A_{k,l,j}(x)$ is non increasing and non positive with respect to $x$ if $l > j.$ Therefore $u_{ki}(x)\leq u_{ki}({\tilde{p}_{k,g-1}})$. Furthermore we know from the assumption of induction that $u_{ki}(\tilde{p}_{k,g-1})\leq u_{ki}(\tilde{p}_{k,i-1})$, thus  $u_{ki}(x)\leq u_{ki}(\tilde{p}_{k,i-1})$ for $x\in (\tilde{p}_{k,g-1},\tilde{p}_{k,g-2}]\setminus {v}$. We can show that $u_{ki}(v) \leq u_{ki}(\tilde{p}_{k,i-1})$ if $v \in (\tilde{p}_{k,g-1}, \tilde{p}_{k,g-2}]$ exactly as in the base case. The proof that for each $i \geq l_k+1$ each $x \in [\tilde{p}_{ki}, \tilde{v}_{ki})$ is a best response when a seller offers $i$ units is therefore complete. 

3) {$i\leq l_k$}: Now let $i \leq l_k$. Thus, $l_k > 0.$  Consider two cases:
\begin{itemize}
\item $l_k+l_{\bar{k}}= d-1$. Therefore $i\leq l_k= d-l_{\bar{k}}-1$. As we previously mentioned, utility $u_{ki}(.)$, is continuous not only in interval $[c,v)$, but also at price $v$, if $i\leq d-l_{\bar{k}}-1$. Using (\ref{equation:diff_decreasing}), and the fact that $A_{k,l,j}(x)$ is non increasing and non positive with respect to $x$, for $l > j$ and a similar argument to  case 1, we can get $u_{ki}(x) \leq u_{ki}(v)$ for all $x \in  [\tilde{p}, v)$. The result follows.

\item $l_k+l_{\bar{k}}= d$. Therefore $i\leq l_k= d-l_{\bar{k}}$. Since $l_k+l_{\bar{k}}+1>d$, neither $\Phi_{kl_k+1}(.)$ nor $\Phi_{\bar{k}l_{\bar{k}}+1}(.)$ have a jump at $v$, and $u_{ki}(.)$ is continuous in $[c,v]$. The result follows by a similar argument to that of in the previous case.
\end{itemize}

\end{proof}

\section{Computation of NE Strategies in an Asymmetric Setting}\label{section:duopolyalgorithm_asym}

In this section, we consider the general case in which the setting may not be symmetric. First we develop a framework to obtain the strategy profiles that satisfy the properties listed in Theorem~\ref{theorem:necessary} (Section~\ref{section:duopolyalgorithm_asym_framework}). Then, we compute these strategies for a simple case of an asymmetric market in which $m_1=m_2=d=3$ (Section~\ref{section:duopolyalgorithm_asym_example}). In Section~\ref{section:examples}, we show that the system may have multiple Nash equilibria.

\subsection{Framework for computation}\label{section:duopolyalgorithm_asym_framework}

In Theorem~\ref{theorem:necessity_random}, it has been proved that the properties listed in Theorem~\ref{theorem:necessary} are sufficient properties for a NE whether $d> \{m_1,m_1\}$ or $d=\max\{m_1,m_2\}$. In this section, we use Theorem~\ref{theorem:necessary} to obtain a framework to identify a set of Nash equilibria for the game. 

First, fix $l_1$ and $l_2$ (refer to Property \ref{property:difference}). In addition, note that Theorem~\ref{theorem:necessary} specifies the ordering of support sets for a seller and not the relative ordering of support sets of the two sellers. Thus, we fix an ordering of $\tilde{p}_{ki}$'s and $\tilde{p}_{\bar{k}j}$'s for $i\in\{l_k+1,\dots,m_k\}$ and $j\in\{l_{\bar{k}}+1,\dots,m_{\bar{k}}\}$ such that for seller $k$ and $\bar{k}$ the lower bounds are ordered with a decreasing relation with $i$ and $j$ respectively, and $\tilde{p}_{km_k}=\tilde{p}_{\bar{k}m_{\bar{k}}}=\tilde{p}$.
The unknowns that we should determine for a NE are  $\tilde{p}$, $m_k-l_k-1$ and $m_{\bar{k}}-l_{\bar{k}}-1$ number of lower bounds other than $\tilde{p}$ for seller $k$ and $\bar{k}$ respectively, and the distribution of price over each support set.

For these particular $l_1$, $l_2$, and relative ordering of support sets, based on Theorem~\ref{theorem:necessary}, the NE is the solution of:

\small
\begin{eqnarray}\label{equation:system}
\begin{aligned}
u_{ki}(\tilde{p}_{ki})&=u_{ki}(\tilde{p}^-_{k,i-1})\qquad \qquad  i\in A\\
u_{\bar{k}j}(\tilde{p}_{\bar{k}j})&=u_{\bar{k}j}(\tilde{p}^-_{\bar{k},j-1})\qquad \qquad  j\in A\\
u_{ki}(\tilde{p}_{ki})&=u_{ki}(\tilde{p}^-_{\bar{k}j})\qquad  i\in A,j:\tilde{p}_{{\bar{k}j}}\in (\tilde{p}_{ki},\tilde{p}_{k,i-1})\\
u_{\bar{k}j}(\tilde{p}_{\bar{k}j})&=u_{\bar{k}j}(\tilde{p}^-_{ki}) \qquad  j\in A,i:\tilde{p}_{ki}\in (\tilde{p}_{\bar{k}j},\tilde{p}_{\bar{k},j-1})\\
f_1f_2&=0
\end{aligned}
\end{eqnarray}

\normalsize
where $A=\{l_k+1,\dots,m_k\}$. In addition, $f_1$ and $f_2$ are the magnitude of jump at $v$ for the first and second seller when they offer $l_k+1$ and $l_{\bar{k}}+1$ units, respectively. Note that the first four sets of equations are derived using the fact that the utility of a seller should be equal over the entire support set. The fifth equation ensures that only one seller can have a positive jump at $v$.

In equation \eqref{equation:system}, the unknowns are $\tilde{p}$, $m_1+m_2-l_1-l_2-2$ number of lower-bounds other than $\tilde{p}$, $p_1$, $p_2$, and $m_1+m_2-l_1-l_2-2$ number of probability distributions at some specific points. That is $\Phi_{ki}(\tilde{p}_{\bar{k}j})$ for $i\in\{l_k+1,\dots,m_k\}$ and $j\text{ such that }\tilde{p}_{{\bar{k}j}}\in (\tilde{p}_{ki},\tilde{p}_{k,i-1})$. By solving the system of equations \eqref{equation:system}, we can get a candidate NE.

Using the solution, $\Phi_{ki}(.)$ for $k\in\{1,2\}$ and $i\in\{1,\dots, m_k\}$ can be found. To find the distributions of price for prices less than $v$, first note that each price $x\in[\tilde{p},v)$ which is not a lower bound for the support set belongs to exactly one of the support sets of each seller.  Therefore, by \eqref{equation:utility}, the expression of utility of player $k$ when it offers $i$ units depends only on $x$ and $\Phi_{\overline{k}j}(x)$, i.e. $u_{ki}(x)=(x-c)G(\Phi_{\overline{k}j}(x))$, where $G(\Phi_.(.))$ is a decreasing function of $\Phi_.(.)$, and therefore its inverse exists. On the other hand,  the utilities at the lower bounds are obtained from \eqref{equation:system} for both sellers. Using Property \ref{prop:equal}, $\Phi_{\bar{k}j}(x)=G^{-1}(\frac{u_{ki}(\tilde{p}_{kj})}{x-c})$. If the resulting $\Phi_{\bar{k}j}(\cdot)$ are valid probability  distribution functions, using Theorem \ref{theorem:necessity_random} we can conclude that they constitute a NE for the given $l_1$, $l_2$, and the fixed ordering of lower bounds.

We have shown how to obtain a Nash equilibrium given one exists for a particular choice of $l_1$, $l_2$, and a relative ordering between the support sets of the two sellers. 
Note that by changing the choices of the above we can possibly obtain multiple Nash equilibria. In the next sections, we present an example in which there exist at least two equilibria. 


\subsection{Example illustration of computation of Nash Equilibria} \label{section:duopolyalgorithm_asym_example}

Consider the case in which each seller offers up to three units and the total demand is exactly three units, i.e. $d=3$.  Without loss of generality we assume that $l_1 \geq l_2$; the strategy profiles in the other case $l_1 < l_2$ can be obtained by swapping the indices of the sellers.

1) First we focus on the case in which $l_1+l_2=d-1=2$. In this case,  $l_1=l_2=1$ or $l_1=2, l_2=0$.  If $l_1=l_2=1$, then sellers choose $v$ with probability $1$, if they offer $1$ unit of commodity. In order to specify the NE, we should find the lower bounds $\tilde{p}_{13}=\tilde{p}_{23}=\tilde{p}$, $\tilde{p}_{12}$, $\tilde{p}_{22}$, jumps at price $v$ ($f_1$ and $f_2$), and each distribution $\Phi_{kj}(.)$ for all $k=1,2$, and $j=2,3$.

First consider the ordering of lower bounds in which $\tilde{p}_{22}\geq \tilde{p}_{12}$ (Figure~\ref{figure:example10}). The system of equations is presented in the next page. Using equations (\ref{equ:1}), (\ref{equ:3}), (\ref{equ:5}), and (\ref{equ:6}), we can find $\tilde{p}_{22}$ as,

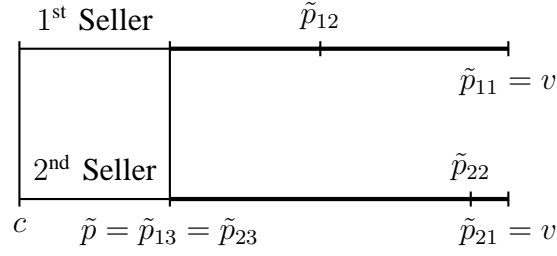
\begin{figure}
\begin{center}
\begin{tikzpicture}
\draw [thick] (0,2) -- (2,2);
\draw [ultra thick] (2,2) -- (6.5,2);
\draw [thick] (0,0) -- (2,0);
\draw [ultra thick] (2,0) -- (6.5,0);
\draw [thick] (0,-.1) node[below]{$c$} -- (0,2.1);
\draw [thick] (6.5,1.9) node[align=left,below]{$\tilde{p}_{11}=v$} -- (6.5,2.1);
\draw [thick] (6.5,-.1) node[align=left,below]{$\tilde{p}_{21}=v$} -- (6.5,.1);
\draw [thick] (2,-.1) node[below]{$\tilde{p}=\tilde{p}_{13}=\tilde{p}_{23}$} -- (2,2.1);
\draw [thick] (4,1.9) -- (4,2.1)  node[above]{$\tilde{p}_{12}$};
\draw [thick] (6,-.1)  -- (6,.1)node[above]{$\tilde{p}_{22}$};
\node[align=right, above] at (1,0.1)%
{$2^{\text{nd}}$ Seller};
\node[align=right, above] at (1,2.1)%
{$1^{\text{st}}$ Seller};
\end{tikzpicture}
\end{center}
\caption{Structure corresponding to $l_1=1$ and $l_2=1$}\label{figure:example10}
\end{figure}

\begin{figure*}[t]
\centering
\setlength\fboxsep{0pt}
\setlength\fboxrule{0.25pt}
\fbox{
 \addtolength{\linewidth}{-2\fboxsep}%
 \addtolength{\linewidth}{-2\fboxrule}%
 \begin{minipage}{\linewidth}
\scriptsize
\begin{equation} \label{equ:1}
u_{13}(\tilde{p})=u_{13}(\tilde{p}_{12})\Rightarrow 3(\tilde{p}-c)=(3-3q_{23}\Phi_{23}(\tilde{p}_{12}))(\tilde{p}_{12}-c)
\end{equation}
\begin{equation} \label{equ:2}
u_{23}(\tilde{p})=u_{23}(\tilde{p}_{12})\Rightarrow 3(\tilde{p}-c)=(3-3q_{13})(\tilde{p}_{12}-c)
\end{equation}
\begin{equation} \label{equ:3}
u_{23}(\tilde{p})=u_{23}(\tilde{p}_{22})\Rightarrow 3(\tilde{p}-c)=(3-3q_{13}-2q_{12}\Phi_{12}(\tilde{p}_{22}))(\tilde{p}_{22}-c)
\end{equation}
\begin{equation} \label{equ:4}
u_{12}(v^-)=u_{12}(\tilde{p}_{22})\Rightarrow (v-c)(2q_{20}+2q_{21}+2q_{22}f_2+q_{22}(1-f_2))=(\tilde{p}_{22}-c)(2-2q_{23})
\end{equation}
\begin{equation}\label{equ:5}
u_{12}(v^-)= u_{12}(\tilde{p}_{12})\Rightarrow (v-c)(2q_{20}+2q_{21}+2q_{22}f_2+q_{22}(1-f_2))=(\tilde{p}_{12}-c)(2-2q_{23}\Phi_{23}(\tilde{p}_{12}))
\end{equation}
\begin{equation} \label{equ:6}
u_{22}(v^-)=u_{22}(\tilde{p}_{22})\Rightarrow (v-c)(2q_{10}+2q_{11}+2q_{12}f_1+q_{12}(1-f_1))=(\tilde{p}_{22}-c)(2-2q_{13}-q_{12}\Phi_{12}(\tilde{p}_{22}))
\end{equation}
\begin{equation} \label{equ:7}
f_1f_2=0 \qquad \text{(At most one seller can have a jump at $v$ )}
\end{equation}
\end{minipage}
}
\caption*{System of equations for $l_1=l_2=1$ and $\tilde{p}_{22}\geq \tilde{p}_{12}$}
\end{figure*}

\normalsize

\vspace{-2mm}
\small
\begin{equation}\label{equ:p_22_1}
\begin{aligned}
\tilde{p}_{22}&=\frac{(v-c)A}{\frac{1}{2}-\frac{1}{2}q_{13}}+c\\
A&=\bigg{(}2q_{10}+2q_{11}+q_{12}(1+f_1)-\frac{3}{2}q_{20}-\frac{3}{2}q_{21}\\
&\qquad \qquad \qquad \qquad -\frac{3}{4}q_{22}(1+f_2)\bigg{)}
\end{aligned}
\end{equation}

\normalsize
On the other hand, from (\ref{equ:4}),\begin{equation}\label{equ:p_22_2}
\tilde{p}_{22}=\frac{(v-c)(2q_{20}+2q_{21}+q_{22}(1+f_2))}{2-2q_{23}}+c
\end{equation}

The values of $\tilde{p}_{22}$ in (\ref{equ:p_22_1}) and (\ref{equ:p_22_2}) should be equal. Utilizing this and (\ref{equ:7}),
\begin{equation}
\begin{aligned}
\frac{2f_1q_{12}}{1-q_{13}}-\frac{1}{2}q_{22}f_2A=(q_{20}&+q_{21}+\frac{1}{2}q_{22})A\\
&-\frac{4q_{10}+4q_{11}+2q_{12}}{1-q_{13}}=B
\end{aligned}
\end{equation}
where $A=\frac{1}{1-q_{23}}+\frac{3}{1-q_{13}}$. Therefore,

\begin{equation}
\left\{
	\begin{array}{ll}
		f_1=f_2=0  & \mbox{if } B=0 \\
		f_1>0 \& f_2=0 & \mbox{if } B>0 \\
		f_2>0 \& f_1=0 & \mbox{if } B<0 \\
	\end{array}
\right.
\end{equation}
Therefore $f_1$, $f_2$, and $\tilde{p}_{22}$ are uniquely determined. Using (\ref{equ:6}), $\Phi_{12}(\tilde{p}_{22})$ can be derived uniquely,

\vspace{-4mm}
\begin{equation}
\Phi_{12}(\tilde{p}_{22})=\frac{1}{q_{12}}\big{(}2-2q_{13}-\frac{v-c}{(\tilde{p}_{22}-c)}(2q_{10}+2q_{11}+q_{12}(1+f_1))\big{)}\\
\end{equation}

\normalsize
By (\ref{equ:3}), $\tilde{p}$ can be derived uniquely, (\ref{equ:2}) determines $\tilde{p}_{12}$ uniquely, and (\ref{equ:1}) provides us $\Phi_{23}(\tilde{p}_{12})$ uniquely. However, we should check whether $\Phi_{23}(\tilde{p}_{12})$ and $\Phi_{12}(\tilde{p}_{22})$ are between zero and one or not. If not, then this NE candidate is not valid. The distributions can be found by the process explained previously.

Another possible ordering of lower bounds is when $\tilde{p}_{22}\leq \tilde{p}_{21}$. The system of equations corresponding to this case can be obtained by swapping the index of sellers.

In the case of $l_1=2$ and $l_2=0$, Figure \ref{figure:example2} illustrates a schematic view of the support sets for the unique relative ordering of support sets. Equations can be obtained with a similar approach to the previous case.

\begin{figure}
\begin{center}
\begin{tikzpicture}
\draw [thick] (0,2) -- (2,2);
\draw [ultra thick] (2,2) -- (6,2);
\draw [thick] (0,0) -- (2,0);
\draw [ultra thick] (2,0) -- (6,0);
\draw [thick] (0,-.1) node[below]{$c$} -- (0,2.1);
\draw [thick] (6,1.9) node[align=left,below]{$\tilde{p}_{11}=\tilde{p}_{12}=v$} -- (6,2.1);
\draw [thick] (6,-.1) node[align=left,below]{$v$} -- (6,.1);
\draw [thick] (2,-.1) node[below]{$\tilde{p}=\tilde{p}_{13}=\tilde{p}_{23}$} -- (2,2.1);
\draw [thick] (4,-.1) -- (4,.1)  node[above]{$\tilde{p}_{22}$};
\draw [thick] (5,-.1) node[below]{$\tilde{p}_{21}$} -- (5,.1);
\node[align=right, above] at (1,0.1)%
{$2^{\text{nd}}$ Seller};
\node[align=right, above] at (1,2.1)%
{$1^{\text{st}}$ Seller};
\end{tikzpicture}
\end{center}
\caption{Structure corresponding to $l_1=2$ and $l_2=0$}\label{figure:example2}
\end{figure}


2) $l_1+l_2=3=d$. Note that $l_k=3$ and $l_{\bar{k}}=0$ can be ruled out since $l_k$ should be less than $m_k=3$. Thus, $l_1=2$ and $l_2=1$ (Figure~\ref{figure:example3}).  The approach to find the equilibria is similar to the previous cases.

\begin{figure}
\begin{center}
\begin{tikzpicture}
\draw [thick] (0,2) -- (2,2);
\draw [ultra thick] (2,2) -- (6,2);
\draw [thick] (0,0) -- (2,0);
\draw [ultra thick] (2,0) -- (6,0);
\draw [thick] (0,-.1) node[below]{$c$} -- (0,2.1);
\draw [thick] (6,1.9) node[align=left,below]{$\tilde{p}_{11}=\tilde{p}_{12}=v$} -- (6,2.1);
\draw [thick] (6,-.1) node[align=left,below]{$\tilde{p}_{21}=v$} -- (6,.1);
\draw [thick] (2,-.1) node[below]{$\tilde{p}=\tilde{p}_{13}=\tilde{p}_{23}$} -- (2,2.1);
\draw [thick] (5,-.1) node[below]{$\tilde{p}_{22}$} -- (5,.1);
\node[align=right, above] at (1,0.1)%
{$2^{\text{nd}}$ Seller};
\node[align=right, above] at (1,2.1)%
{$1^{\text{st}}$ Seller};
\end{tikzpicture}
\end{center}
\caption{Structure corresponding to $l_1=2$ and $l_2=1$}\label{figure:example3}
\end{figure}


\subsection{Multiple Nash Equilibria} \label{section:examples}
In Section~\ref{section:duopolyalgorithm}, we proved that the symmetric NE uniquely exists. In this section, we show that an asymmetric market allows for multiple Nash equilibria. Nash equilibria are computed using the above framework with  $v=10$ and $c=1$ and for different values of $\vec{q}_1$ and $\vec{q}_2$.  Some lead to a unique NE and some others to multiple Nash equilibria. For instance, the NE is unique, if
 \vspace{-2mm}
$$
\begin{aligned}
\vec{q}_1&=[0.45,0.1,0.4,0.05] \qquad
\vec{q}_2=[0.2,0.2,0.45,0.15]
\end{aligned}
$$
In this case, in the NE strategy,  $l_1=1$, $l_2=2$,  $\tilde{p}_{12}=9.0526$, $\tilde{p}=8.65$, and $\Phi_{23}(\tilde{p}_{12})=0.3333$, and the second seller has a jump of size $0.625$ at price $v=10$.
However, there are two Nash equilibria if:
\vspace{-2mm}
$$
\begin{aligned}
\vec{q}_1&=[0.05,0.1,0.4,0.45]\qquad \vec{q}_2&=[0.2,0.2,0.4,0.2]
\end{aligned}
$$
In both NE, $l_1=2$, $l_2=1$, and $\Phi_{13}(\tilde{p}_{22})=0.4444$. In the first NE, $f_2=0.06525$, $f_1=0$, $\tilde{p}=5.95$, and $\tilde{p}_{22}=7.1875$. In the second NE, $f_2=0$, $f_1=0.7778$, $\tilde{p}=5.8$, and $\tilde{p}_{22}=7$.


\section{Proof of Theorem~\ref{theorem:symmetricsuff}}

Before going to the proof of Theorem~\ref{theorem:symmetricsuff}, we need to prove some lemmas and theorems. First we prove that  $A_{l,j}(x)$ is (strictly) decreasing for $v>x\geq\tilde{p}_{m-1}$ when $d=m$ (Lemma~\ref{lemma:decreasing_random_asym2}). Then, in Lemma~\ref{lemma:minimumtildep}, we prove that the minimum of the lower end points is the lower end point of $\Phi_m(x)$, i.e., $\tilde{p}=\tilde{p}_m$. Next, using Lemmas~\ref{lemma:decreasing_random_asym2} and \ref{lemma:minimumtildep}, we prove that $\tilde{p}_i\notin [\tilde{p}_m,\tilde{p}_{m-1}) $ for $i\in\{1,\dots,m-2\}$.  This establishes the ordering for $\Phi_{m}(.)$ and $\Phi_{m-1}(.)$. After that we proceed to establish the ordering for the remaining support sets $\Phi_{j}(.)$ for $j\in\{1,\dots,m-2\}$, knowing that for them $\tilde{p}_j\geq \tilde{p}_{m-1}$. A similar result to the Property~\ref{property:supportsets} is proved in Property~\ref{property:supportsets2}. Finally, we prove Theorem~\ref{theorem:symmetricsuff}.

Note that a symmetric NE in a symmetric market is considered in this section. Let us define $A_{l,j}(x)=\frac{1}{l}u_{l}(x)-\frac{1}{j}u_{j}(x)$. $B_{l,j}(x)$ is defined such that,
$$A_{l,j}(x)=(x-c)B_{l,j}(x)$$
where,

\vspace{-10pt}
\small
\begin{equation}\label{equation:diff_decreasing2}
\begin{aligned}
B_{l,j}(x)=-\frac{1}{l}&\sum_{i=d-l+1}^{d-j}{\Phi_{i}(x)q_{i}(i-d+l)} +\\&\qquad \sum_{i=d-j+1}^{m}{\Phi_{i}(x)q_{i}(d-i)(\frac{1}{l}-\frac{1}{j})}
\end{aligned}
\end{equation}
\normalsize
 Based on the following lemma, $A_{l,j}(x)$ is (strictly) decreasing for $v>x\geq\tilde{p}_{m-1}$ and $l>j$, when $d=m$.

\begin{lemma}\label{lemma:decreasing_random_asym2}
For every $l$ and $j$, $l>j\geq 1$, $A_{l,j}(x)$ is (strictly) decreasing for $v>x\geq\tilde{p}_{m-1}$ when $d=m$.
\end{lemma}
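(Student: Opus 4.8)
The plan is to follow the template of Lemma~\ref{lemma:decreasing_random_asym}: write $A_{l,j}(x)=(x-c)B_{l,j}(x)$ with $B_{l,j}$ as in \eqref{equation:diff_decreasing2}, and show that on the stated range $B_{l,j}(x)$ is non-increasing and strictly negative. Strict monotonicity of $A_{l,j}$ then follows automatically, since $A_{l,j}$ is the product of the strictly increasing positive factor $(x-c)$ (note $x\geq\tilde{p}_{m-1}>c$) and a non-increasing strictly negative factor; such a product is strictly decreasing even at the left endpoint. The only, but essential, difference from Lemma~\ref{lemma:decreasing_random_asym} is that here $d=m$ instead of $d>\max\{m_1,m_2\}$, and I expect this to be the crux.

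First I would do the sign bookkeeping on \eqref{equation:diff_decreasing2}, which requires neither $d=m$ nor any ordering of support sets. In the first sum, for $d-l+1\leq i\leq d-j$ the factor $i-d+l$ ranges between $1$ and $l-j\geq 1$, so each weight $-\frac1l q_i(i-d+l)$ is non-positive; in the second sum, for $d-j+1\leq i\leq m=d$ the factor $d-i$ is non-negative and $\frac1l-\frac1j<0$, so each weight is again non-positive. Since every $\Phi_i(\cdot)$ is a non-decreasing CDF and every weight is non-positive, $B_{l,j}(\cdot)$ is non-increasing in $x$.

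The key obstacle is establishing strict negativity. When $d=m$, the weight of the top term $i=m$ in the second sum is $q_m(d-m)(\frac1l-\frac1j)=0$, so, unlike the case $d>\max\{m_1,m_2\}$, positivity of the maximal-availability distribution $\Phi_m$ no longer certifies $B_{l,j}(x)<0$. I would therefore descend one level and isolate the $i=m-1$ term: its weight equals $-\frac1l q_{m-1}(l-1)$ when $j=1$ (it then sits in the first sum) and $q_{m-1}(\frac1l-\frac1j)$ when $j\geq 2$ (it then sits in the second sum, with $d-(m-1)=1$), and in both cases $l\geq 2$ (because $l>j\geq 1$) makes this weight strictly negative provided $q_{m-1}>0$. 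This vanishing of the top coefficient is precisely why the admissible range contracts from $x\geq\tilde{p}$ in Lemma~\ref{lemma:decreasing_random_asym} to $x\geq\tilde{p}_{m-1}$ here.

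Finally I would supply the positivity input. By the definition of $\tilde{p}_{m-1}$ as the infimum of the support set of $\Phi_{m-1}$, together with continuity of $\Phi_{m-1}$ below $v$ (Property~\ref{property:continuous_as}), we have $\Phi_{m-1}(x)>0$ for every $x\in(\tilde{p}_{m-1},v)$; crucially this uses only the definition of the lower endpoint and does not presuppose the ordering of support sets, which is established only later in this appendix (so no circularity arises). Combining the three ingredients, namely that all weights in \eqref{equation:diff_decreasing2} are non-positive, that the weight of $\Phi_{m-1}$ is strictly negative, and that $\Phi_{m-1}(x)>0$ for $x>\tilde{p}_{m-1}$, yields $B_{l,j}(x)<0$ on that range. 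Since $B_{l,j}$ is also non-increasing and $(x-c)>0$ is strictly increasing, $A_{l,j}(x)=(x-c)B_{l,j}(x)$ is strictly decreasing for $\tilde{p}_{m-1}\leq x<v$, which is the claim. The main difficulty, as emphasized, is recognizing that $d=m$ annihilates the top coefficient, forcing the fallback to $\Phi_{m-1}$ and hence the restricted range.
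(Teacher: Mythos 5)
Your proposal is correct and follows essentially the same route as the paper's proof: decompose $A_{l,j}(x)=(x-c)B_{l,j}(x)$, check all weights in \eqref{equation:diff_decreasing2} are non-positive so $B_{l,j}$ is non-increasing, observe that $d=m$ kills the coefficient of $\Phi_m$ so one must instead use the $\Phi_{m-1}$ term (which lies in the summation because $l\geq 2$ forces $d-l+1\leq m-1$), and conclude strict negativity for $x>\tilde{p}_{m-1}$ from $\Phi_{m-1}(x)>0$ there. Your explicit case split on whether the $i=m-1$ term lands in the first or second sum, and your remark about handling the left endpoint, are just slightly more detailed versions of the same argument.
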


We argued that $B_{l, j}(\cdot)$ is non increasing and non positive with respect to the price $x$. To prove that  $A_{l,j}(.)=(x-c)B_{l,j}(x)$ is strictly decreasing, it is enough to prove that  $B_{l, j}(\cdot)$ is negative. We will prove that $\Phi_{m-1}(x)$ is included in the summation of $B_{l, j}(\cdot)$ and obviously positive for $x> \tilde{p}_{m-1}$. In addition, its coefficient is negative since $d=m>m-1$. Thus, the result follows.

\begin{proof}
It is enough to prove that $B_{l,j}(x)$ is non-increasing for $x\geq \tilde{p}_{m-1}$ and negative for $x>\tilde{p}_{m-1}$ when demand is $m$. This yields that  $A_{l,j}(x)=(x-c)B_{l,j}(x)$ is strictly decreasing with respect to $x$. 

Note that in (\ref{equation:diff_decreasing2}), $\Phi_{i}(.)$'s are non-negative and non-increasing since they are probability distributions. In addition, they have non-positive weights: $-(i-d-l)\leq -1<0$, $\frac{1}{l}-\frac{1}{j}<0$, and $d-i\geq d-m=0$ (note that $d=m$). Thus $B_{l,j}(x)$ is non increasing and non positive with respect to the price $x$ when $l\geq j$. To prove that $B_{l,j}(x)$ is negative for $x>\tilde{p}_{m-1}$, since $d-(m-1)=1>0$ and $-(i-d-l)\leq -1<0$ (possible coefficients of $\Phi_{m-1}(x)$) , it is enough to prove that $\Phi_{m-1}(.)$ is included in the summation of $B_{l,j}(.)$  and it is positive, i.e. $\Phi_{m-1}(x)>0$ for  $x> \tilde{p}_{m-1}$. The later follows from the definition of $\tilde{p}_{m-1}$.

Now we prove that $\Phi_{m-1}(.)$ is included in the summation of $B_{l,j}(.)$. Note that $l>j\geq 1$. Thus $l\geq 2$, and the lowest index of the \eqref{equation:diff_decreasing2} is $d-l+1\leq m-2+1=m-1$. The result follows.
\end{proof}

To prove the ordering and disjoint properties in the symmetric setting we should alter the proofs. First we will prove that $\tilde{p}=\tilde{p}_m$, i.e. the minimum of lower bounds is the lower bound of $\Phi_{m}(x)$. Then we will prove that $\tilde{p}_j\notin [\tilde{p}_m,\tilde{p}_{m-1}) $ for $j\in\{1,\dots,m-2\}$. This proves that the next lowest support set is the support set of $\Phi_{m-1}(.)$. After that using Lemma \ref{lemma:decreasing_random_asym} will prove that the support set of $\Phi_l(.)$ for $l<m$ is a subset of $[\tilde{p}_{m-1},p_j]$ for all integers $j\in[1,l)$.   These three all together establishes the ordering.

\begin{lemma}\label{lemma:minimumtildep}
$\tilde{p}=\tilde{p}_m$, i.e. the minimum of lower end points is the lower end point of $\Phi_{m}(x)$.
\end{lemma}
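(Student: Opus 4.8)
The plan is to argue by contradiction, mirroring the structure of the proof of Lemma~\ref{lemma:decreasing_random_asym}: I will suppose the minimum lower end point is achieved by some availability level other than $m$ and show that this violates optimality for the $m$-unit seller. Concretely, suppose $\tilde{p}<\tilde{p}_m$, i.e.\ $\Phi_m(x)=0$ for some $x\in(\tilde{p},\tilde{p}_m]$. First I record two preliminary facts. Since the threshold $l_k$ of Property~\ref{property:difference} satisfies $l_k\le m-1$, the $m$-unit seller cannot quote $v$ with probability one, so $\tilde{p}_m<v$. Moreover, because the union of the support sets covers $[\tilde{p},v]$ without gaps (Property~\ref{property:contiguous_assym}) and $\tilde{p}<\tilde{p}_m$, there exist an availability level $j<m$ (necessarily with $q_j>0$, since level $j$ is actually offered) and an $\epsilon>0$ such that $[\tilde{p}_m-\epsilon,\tilde{p}_m]$ lies in the support set of $\Phi_j$ with $\tilde{p}_j<\tilde{p}_m$. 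The index $j$ differs from $m$ because the support of $\Phi_m$ does not extend below its lower end point $\tilde{p}_m$.

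Next I would establish that $u_j$ is constant on this left neighborhood and that $A_{m,j}$ is strictly decreasing on it. Since $[\tilde{p}_m-\epsilon,\tilde{p}_m]$ consists of best-response prices for the $j$-unit seller lying strictly below $v$, continuity of the price distribution for $x<v$ (Property~\ref{property:continuous_as}) gives $u_j(\tilde{p}_m-\epsilon)=u_j(\tilde{p}_m)$. For the monotonicity of $A_{m,j}(x)=(x-c)B_{m,j}(x)$, I would invoke the mechanism already used in Lemmas~\ref{lemma:decreasing_random_asym} and \ref{lemma:decreasing_random_asym2}: the distribution $\Phi_j$ appears in the summation~\eqref{equation:diff_decreasing2} defining $B_{m,j}$ with a strictly negative coefficient (for $j<m=d$ the index $j$ falls in one of the two sums, with weight $-\tfrac{1}{m}q_j\,j<0$ or $q_j(m-j)(\tfrac1m-\tfrac1j)<0$), and $\Phi_j(x)>0$ for $x\in(\tilde{p}_j,\tilde{p}_m]$. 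Hence $B_{m,j}(x)<0$ throughout $[\tilde{p}_m-\epsilon,\tilde{p}_m]$; combined with the fact that $B_{m,j}$ is non-increasing, this makes $A_{m,j}$ strictly decreasing there, so $A_{m,j}(\tilde{p}_m-\epsilon)>A_{m,j}(\tilde{p}_m)$.

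Finally I would close the contradiction. Subtracting and using $u_j(\tilde{p}_m-\epsilon)=u_j(\tilde{p}_m)$,
\begin{equation}
\frac{1}{m}\bigl(u_m(\tilde{p}_m-\epsilon)-u_m(\tilde{p}_m)\bigr)=A_{m,j}(\tilde{p}_m-\epsilon)-A_{m,j}(\tilde{p}_m)>0,
\nonumber
\end{equation}
so $u_m(\tilde{p}_m-\epsilon)>u_m(\tilde{p}_m)$. But $\tilde{p}_m$ is the lower end point of the support set of $\Phi_m$, hence a best response for the $m$-unit seller, giving $u_m(\tilde{p}_m)\ge u_m(\tilde{p}_m-\epsilon)$, a contradiction. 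Therefore $\tilde{p}_m=\tilde{p}$, as claimed.

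I anticipate the main obstacle to be the rigorous justification of the middle paragraph: verifying that the covering distribution $\Phi_j$ genuinely enters $B_{m,j}$ with a strictly negative coefficient (which requires the index bookkeeping $j<m=d$ together with $q_j>0$) and that this alone suffices to force strict monotonicity of $A_{m,j}$ on the relevant interval. This is delicate because a direct appeal to Lemma~\ref{lemma:decreasing_random_asym2} is not available here: its guarantee is only stated for $x\ge\tilde{p}_{m-1}$, and the relation between $\tilde{p}_{m-1}$ and $\tilde{p}_m$ has not yet been established at this stage of the argument.
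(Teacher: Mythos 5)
Your proof is correct and follows essentially the same route as the paper's: contradiction via a covering distribution $\Phi_j$ obtained from Property~\ref{property:contiguous_assym}, constancy of $u_j$ on a left neighborhood of $\tilde{p}_m$, strict decrease of $A_{m,j}$ there because $\Phi_j$ enters the summation for $B_{m,j}$ with a strictly negative weight, and the resulting inequality $u_m(\tilde{p}_m-\epsilon)>u_m(\tilde{p}_m)$ contradicting optimality at $\tilde{p}_m$. The one adjustment needed is that your preliminary fact $\tilde{p}_m<v$ should be obtained directly from Lemma~\ref{lemma:jump_asy} (a unit jump of $\Phi_m$ at $v$ with $m+m>d$ would make $v$ a non--best response) rather than from Property~\ref{property:difference}, which is not yet established for $d=m$ at this point in the appendix's logical ordering.
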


\begin{proof}
Suppose not and there exists $x>\tilde{p}$ such that $x\leq \tilde{p}_{m}$. By Property \ref{property:contiguous_assym}, there exists an $\epsilon>0$ and an availability level $j\neq m$ such that $[\tilde{p}_{m}-\epsilon,\tilde{p}_{m}]$ belongs to the support set of $\Phi_{j}(.)$ and $\tilde{p}_{j}<\tilde{p}_{m}$.  Thus $u_{j}(\tilde{p}_{m})=u_{j}(\tilde{p}_{m}-\epsilon)$. In addition, $B_{m,j}(x)$ is the weighted summation of $\Phi_{i}(.)$ for $i\in\{1,\dots,m\}$. Thus, the distribution $\Phi_{j}(.)$ is included in the summation of $B_{m,j}(x)$, and its coefficient is negative. In addition, $\Phi_j(x)>0$ for $x>\tilde{p}_j$.   
 Thus, $A_{m,j}(x)$ is strictly decreasing with respect to $x$ for $x>\tilde{p}_{j}$. Thus $A_{m,j}(\tilde{p}_{m}-\epsilon)>A_{m,j}(\tilde{p}_{m})$. Note that $u_{j}(\tilde{p}_{m})=u_{j}(\tilde{p}_{m}-\epsilon)$. Thus, $u_{m}(\tilde{p}_{m})=u_{m,max}< u_{m}(\tilde{p}_{m}-\epsilon)$. This contradicts with $\tilde{p}_{m}$ belonging to the support set of $\Phi_{m}(.)$. The result follows.


\end{proof}

\begin{lemma}\label{lemma:oderm}
$\tilde{p}_i\notin [\tilde{p}_m,\tilde{p}_{m-1}) $ for $i\in\{1,\dots,m-2\}$.
\end{lemma}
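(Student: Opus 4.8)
The plan is to argue by contradiction, following the template of Lemma~\ref{lemma:minimumtildep} but now comparing availability level $m-1$ against the offending lower level. Suppose some $i\in\{1,\dots,m-2\}$ has $\tilde{p}_i\in[\tilde{p}_m,\tilde{p}_{m-1})$. If $\tilde{p}_{m-1}=\tilde{p}_m$ the interval is empty and there is nothing to prove, so I assume $\tilde{p}_{m-1}>\tilde{p}_m=\tilde{p}$ (the last equality is Lemma~\ref{lemma:minimumtildep}). The claim is vacuous for $m\le 2$, so take $m\ge 3$; then $m-1\ge l^*+1$, so level $m-1$ randomizes over a non-degenerate support with $\tilde{p}_{m-1}<v$ (a point mass at a price below $v$ would be a jump, contradicting Property~\ref{property:continuous_as}). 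Hence $\tilde{p}_i<\tilde{p}_{m-1}<v$. Since neither $\tilde{p}_i$ nor $\tilde{p}_{m-1}$ equals $v$ and the distributions are continuous below $v$, both lower end points are best-response prices, so $u_i(\tilde{p}_i)=u_{i,\max}$ and $u_{m-1}(\tilde{p}_{m-1})=u_{m-1,\max}$.

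The key difficulty is that, unlike in the asymmetric setting, Lemma~\ref{lemma:decreasing_random_asym2} only guarantees that $A_{m-1,i}(\cdot)$ is \emph{strictly} decreasing on $[\tilde{p}_{m-1},v)$, whereas on the gap region $[\tilde{p},\tilde{p}_{m-1})$ one can only rely on the weaker fact that $A_{m-1,i}(x)=(x-c)B_{m-1,i}(x)$ is non-increasing (since $m-1>i$, the weights in \eqref{equation:diff_decreasing2} are non-positive, forcing $B_{m-1,i}\le 0$ and non-increasing). I would therefore bridge $[\tilde{p}_i,\tilde{p}_{m-1})$ using only weak monotonicity. Evaluating the non-increasing $A_{m-1,i}(\cdot)$ at the two points $\tilde{p}_i<\tilde{p}_{m-1}$ gives $A_{m-1,i}(\tilde{p}_i)\ge A_{m-1,i}(\tilde{p}_{m-1})$; substituting $u_i(\tilde{p}_i)=u_{i,\max}$ and $u_{m-1}(\tilde{p}_{m-1})=u_{m-1,\max}$ rearranges this into $\frac{1}{m-1}\big(u_{m-1}(\tilde{p}_i)-u_{m-1,\max}\big)\ge \frac{1}{i}\big(u_{i,\max}-u_i(\tilde{p}_{m-1})\big)$, whose left side is $\le 0$ and whose right side is $\ge 0$. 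Both sides must therefore vanish, and in particular $u_i(\tilde{p}_{m-1})=u_{i,\max}$; that is, $\tilde{p}_{m-1}$ is also a best-response price for level $i$.

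The contradiction then comes from the strict half of Lemma~\ref{lemma:decreasing_random_asym2}. Choose $\delta>0$ small enough that $\tilde{p}_{m-1}+\delta$ lies in the (non-degenerate) support of $\Phi_{m-1}(\cdot)$ and below $v$, so $u_{m-1}(\tilde{p}_{m-1}+\delta)=u_{m-1,\max}$. Since $A_{m-1,i}(\cdot)$ is strictly decreasing on $[\tilde{p}_{m-1},v)$, we get $A_{m-1,i}(\tilde{p}_{m-1})>A_{m-1,i}(\tilde{p}_{m-1}+\delta)$; plugging in $u_{m-1,\max}$ at both arguments together with $u_i(\tilde{p}_{m-1})=u_{i,\max}$ collapses this to $u_i(\tilde{p}_{m-1}+\delta)>u_{i,\max}$, which is impossible. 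This contradiction shows that no such $i$ exists, proving the lemma; as the surrounding text notes, it identifies $\Phi_{m-1}(\cdot)$ as the next-lowest support set after $\Phi_m(\cdot)$.

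The step I expect to be the main obstacle is precisely the transition across the gap region $[\tilde{p},\tilde{p}_{m-1})$, where the difference function is only weakly monotone, so one cannot directly push a low price of level $i$ into the strictly-decreasing regime. Overcoming this is what forces the two-stage structure: first use weak monotonicity to pin both utilities simultaneously to their maxima at $\tilde{p}_{m-1}$, and only then invoke strict monotonicity just to the right of $\tilde{p}_{m-1}$ to extract the contradiction. A secondary point requiring care is the verification that $\tilde{p}_{m-1}<v$ and that $\Phi_{m-1}(\cdot)$ has an interval of support to the right of $\tilde{p}_{m-1}$, both of which follow from $m-1\ge l^*+1$ and the continuity Property~\ref{property:continuous_as}.
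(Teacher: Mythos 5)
Your proof is correct, but it takes a genuinely different route from the paper's. The paper argues topologically: it first shows (via an infimum construction and the strict decrease of $A_{m,f}$ once some lower-level $\Phi_f$ becomes active) that the support of $\Phi_m(\cdot)$ cannot extend into $(\tilde{p}_j,\tilde{p}_{m-1}]$, then invokes Property~\ref{property:contiguous_assym} to find some $u\in\{1,\dots,m-2\}$ whose support contains $\tilde{p}_{m-1}$, and only then applies the strict decrease of $A_{m-1,u}$ just above $\tilde{p}_{m-1}$. Your two-point inequality on the weakly monotone $A_{m-1,i}$ evaluated at $\tilde{p}_i$ and $\tilde{p}_{m-1}$ --- forcing both bracketed terms to vanish and hence $u_i(\tilde{p}_{m-1})=u_{i,\max}$ --- replaces that entire sub-argument, shares only the final strict-decrease step with the paper, and actually delivers the stronger conclusion $\tilde{p}_i\geq\tilde{p}_{m-1}$ for all $i\leq m-2$ without touching the support of $\Phi_m(\cdot)$ or the contiguity property. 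The one point you should repair is the justification that $\tilde{p}_{m-1}<v$: appealing to $m-1\geq l^*+1$ forward-references the threshold structure of Property~\ref{property:difference}, which in the symmetric $d=m$ development is established only \emph{after} this lemma, and your parenthetical about jumps below $v$ does not cover the offending case $\tilde{p}_{m-1}=v$ itself. The paper closes this directly: if $\Phi_{m-1}(\cdot)$ were a point mass at $v$, then since $2(m-1)>d=m$ for $m>2$, Lemma~\ref{lemma:jump_asy} gives $u_{m-1}(v-\epsilon)>u_{m-1}(v)$, contradicting $v$ being a best response at availability $m-1$. With that substitution your argument is complete.
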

 
To prove this, we use a contradiction argument. Suppose that there exists $\tilde{p}_j\in [\tilde{p}_m,\tilde{p}_{m-1})$ such that $j\in\{1,\dots,m-2\}$. We will prove that no $x\in(\tilde{p}_j,\tilde{p}_{m-1}]$ is in the support of $\Phi_{m}(.)$. Thus there exists  $u\in\{1,\dots,m-2\}$ such that $\tilde{p}_{m-1}$ is in the support set of $\Phi_u(.)$. We prove that the payoff of the seller when she offers $u$ units with price  $\tilde{p}_{m-1}+\epsilon$ is strictly greater than the payoff when offering  with price $\tilde{p}_{m-1}$.
 This is in contradiction with $\tilde{p}_{m-1}$  being the best response for player with availability $u$.

\begin{proof}
The lemma follows by vacuity if $m\leq 2$. Take $m>2$. Note that $\tilde{p}_{m-1}<v$. If not there is a jump of size $1$ at price $v$ when the seller offers $m-1$ units. Since $2m-2>d=m$ for $m>2$, using Lemma \ref{lemma:jump_asy}, $u_{m-1}(v-\epsilon)>u_{m-1}(v)$ for $\epsilon$ small enough. This is in contradiction with assigning a positive probability to price $v$ in the equilibrium when seller offers $m-1$ units. Thus $\tilde{p}_{m-1}< v$. 

Suppose there exists $\tilde{p}_j\in [\tilde{p}_m,\tilde{p}_{m-1})$ such that $j\in\{1,\dots,m-2\}$. We will prove that no $x\in(\tilde{p}_j,\tilde{p}_{m-1}]$ is in the support of $\Phi_{m}(.)$. Thus (using this and Property \ref{property:contiguous_assym}), there exists  $u\in\{1,\dots,m-2\}$ such that $\tilde{p}_{m-1}$ is in the support set of $\Phi_u(.)$. Consider $B_{m-1,u}(x)$ which is the summation of weighted distributions $\Phi_{i}(x)$ when $i\in \{2,\dots,m-1\}$. Thus, the distribution $\Phi_{m-1}(.)$ is included in the summation of $B_{m-1,u}(x)$ (note that $m>2$), and its coefficient is negative (Note that $d>0$).
 Thus, $A_{m-1,u}(x)$ is strictly decreasing with respect to $x$ for $x>\tilde{p}_{m-1}$. Thus $A_{m-1,u}(\tilde{p}_{m-1}+\epsilon)<A_{m-1,u}(\tilde{p}_{m-1})$. Using $u_{m-1}(\tilde{p}_{m-1})=u_{m-1}(\tilde{p}_{m-1}+\epsilon)$, we can conclude that $u_{u}(\tilde{p}_{m-1})=u_{u,max}< u_{u}(\tilde{p}_{m-1}+\epsilon)$. This is in contradiction with $\tilde{p}_{m-1}$  being the best response for player with availability $u$. Note that $\tilde{p}_{m-1}<v$, and every price less than $v$ which belongs to the support set of a distribution $\Phi_i(.)$ should be a best response for players when offering $i$ units. The lemma follows.

Now we complete the proof by proving that no $x\in(\tilde{p}_j,\tilde{p}_{m-1}]$ is in the support of $\Phi_{m}(.)$. Suppose not. We will show that  
there exist an availability level $f$ and two prices $y_1$ and $y_2$, such that $\tilde{p}_j<y_1<\tilde{p}_{m-1}$, belongs to the support set of $\Phi_m(.)$, and both $y_1$ and $y_2$ belong to the support set of $\Phi_f(.)$. Then we will show that $u_m(y_1)<u_m(y_2)$, which contradicts with $y_1$ being in the support set of $\Phi_m(.)$. 

Using the contradiction assumption, $w$ is defined as,
\vspace{-2mm}
$$
w=\inf_{x\in (\tilde{p}_j,\tilde{p}_{m-1}] \text { \& $x\in$ Supp($\Phi_m(.)$)}}{x}
$$
Note that $w$ is in the support set of $\Phi_m(.)$. Now consider two cases:

\begin{enumerate}
\item $w>\tilde{p}_j$: Using continuity, the definition of support sets, and Property~\ref{property:contiguous_assym}, there exist $\epsilon$ and $f\in\{1,\dots,m-2\}$ such that $w$ and $w-\epsilon$ is in the support set of  $\Phi_f(.)$. Take $y_1=w$ and $y_2=w-\epsilon$. 

\item $w=\tilde{p}_j$: Using continuity and the definition of infimum, there exists $\epsilon$ such that every $w+\epsilon$ belong to the support set of  $\Phi_m(.)$ and $\Phi_j(.)$. Take $f=j$, $y_1=w+\epsilon$, and $y_2=w$.
\end{enumerate}

Next, we will prove that $u_m(y_1)<u_m(y_2)$, which contradicts with $y_1$ being in the support set of $\Phi_m(.)$. Note that $y_1<v$, and every price less than $v$ which belongs to the support set of a distribution $\Phi_i(.)$ should be a best response for players when offering $i$ units. This completes the proof.

Consider $B_{m,f}(x)$ which is the summation of weighted distributions $\Phi_{i}(x)$ when $i\in \{1,\dots,m-1\}$. Thus, the distribution $\Phi_{f}(.)$ is included in the summation of $B_{m,f}(x)$, and its coefficient is negative. Thus, $A_{m,f}(x)$ is strictly decreasing with respect to $x$ for $x\geq\tilde{p}_{f}$. Thus $A_{m,f}(y_2)>A_{m,f}(y_1)$. Using $u_{f}(y_1)=u_{f}(y_2)$, we can conclude that $u_{m}(y_1)< u_{m}(y_2)$. The contradiction argument is complete.

\end{proof}

Therefore we established the ordering for $\Phi_{m}(.)$ and $\Phi_{m-1}(.)$. Now we are set to establish the ordering for the remaining support sets $\Phi_{j}(.)$ for $j\in\{1,\dots,m-2\}$, knowing that for them $\tilde{p}_j\geq \tilde{p}_{m-1}$. The next is the counterpart of the Property~\ref{property:supportsets2} in symmetric setting.
\begin{property}\label{property:supportsets2}
The support set of $\Phi_{l}(.)$ is a subset of $[\tilde{p},\tilde{p}_{j}]\cup [v]$ for all integers $j \in [1,l)$.
\end{property}

\begin{proof}
Consider support sets of $\Phi_{j}(\cdot)$, $\Phi_{l}(\cdot)$, and $j<l$. We will show that $u_{l}(a)< u_{l}(\tilde{p}_{j})$ for all $a\in (\tilde{p}_{j}, v)$. Thus, no $a\in (\tilde{p}_{j},v)$ is a best response for the seller with availability of $l$ units. Therefore, the support set of $\Phi_{l}(\cdot)$ is a subset of $[\tilde{p},\tilde{p}_{j}]\cup [v]$.

We now complete the proof, by showing that $u_{l}(a) < u_{l}(\tilde{p}_{j})$ for all $a \in(\tilde{p}_{j}, v)$:
\vspace{-2mm}
\begin{equation}
\begin{aligned}
\frac{1}{l}u_{l}(a)-\frac{1}{j}u_{j}(a)&=A_{l,j}(a)
\end{aligned}
\nonumber
\end{equation}

Note that if $\tilde{p}_j\geq v$, property follows by vacuity. Now we consider $\tilde{p}_j<v$. Since $j<l\leq m$, $j\leq m-1$. By Lemma~\ref{lemma:oderm},  $\tilde{p}_{m-1}\leq \tilde{p}_{j}<a<v$, by Lemma \ref{lemma:decreasing_random_asym2}, $A_{l,j}(a)$ is decreasing function of $a$ for $a \in [\tilde{p}_{m-1}, v)$. Thus, $A_{l,j}(a) < A_{l,j}(\tilde{p}_{j})$ for $a\in(\tilde{p}_{j}, v)$. On the other hand $u_{j}(a)\leq u_{j}(\tilde{p_{j}})$ for all $a>\tilde{p}_{j}$ , since $\tilde{p}_{j}$ is a best response of a seller with availability $j$, therefore $u_{l}(\tilde{p}_{j})>u_{l}(a)$.
\end{proof}

Now we will prove the Theorem~\ref{theorem:symmetricsuff}:


\begin{proof}
Note that the first place that we used the condition  $d>\max\{m_1,m_2\}$ (in symmetric setting $d>m$) instead of  $d=\max\{m_1,m_2\}$ ($d=m$) was in Section~\ref{subsection:dgeqm}. Thus all of the results before that apply also to the case that  $d=m$. Property~\ref{property:supportsets2} provides exactly the same property in the Property~\ref{property:supportsets} for the symmetric scenario. Thus the corollaries after the property follows. In addition, results in the Section~\ref{subsection:difference} follows, since they are based on results before the Section~\ref{subsection:dgeqm} and Property~\ref{property:supportsets} and its corollaries. Thus Theorem~\ref{theorem:necessary} goes through in the case of a symmetric NE and $d=m$. 
\end{proof}

\section{Oligopoly Market}\label{appendix:oligopoly}

Suppose that the setting is symmetric and there exist $n$ sellers in the market. We consider a strategy that satisfies the properties identified for a symmetric NE in Section~\ref{section:duopolyalgorithm} with the difference that in our proposed strategies the threshold $l^*=\lfloor \frac{d}{n}\rfloor$. Note that the algorithm for finding such a strategy is similar to what is presented in Section~\ref{section:symm_alg}, but the results would be different. We now investigate how well this strategy approximates an NE strategy in an oligopoly market. 

We numerically compute the maximum expected utility for a particular seller, when all other sellers choose the proposed strategy (best response utility, $U_{\text{Best Response}}$). We observe that over a large set of parameters for all possible availability levels, the best response utility is either the same as the expected utility obtained by following the proposed strategy ($U_{\text{Proposed Strategy}}$), or is fairly close to this value \footnote{For large sets of parameters, the difference is at most $5$ percent of the value of the expected utility resulted by the proposed strategy.}.

For instance, consider a market in which the availability of each seller follows a binomial distribution, $\mathcal{B}(m,p)$, with binomial probability $p=0.4$ and $m=3$ ($m$ is the maximum possible available units with each seller). In addition, in this market the demand is $d=\max\{n,m\}$, $v=10$, and $c=1$. We plot the relative difference, described as follows, between the best response utility and the expected utility of the proposed strategy versus different number of sellers, i.e. $n$, for different availability levels in Figure~\ref{figure:best_response}.
\vspace{-2mm}
$$
\text{Relative Difference}=\frac{U_{\text{Best Response}}-U_{\text{Proposed Strategy}}}{U_{\text{Proposed Strategy}}}
$$


\begin{figure}[t]
\begin{center}
\includegraphics[scale=0.36]{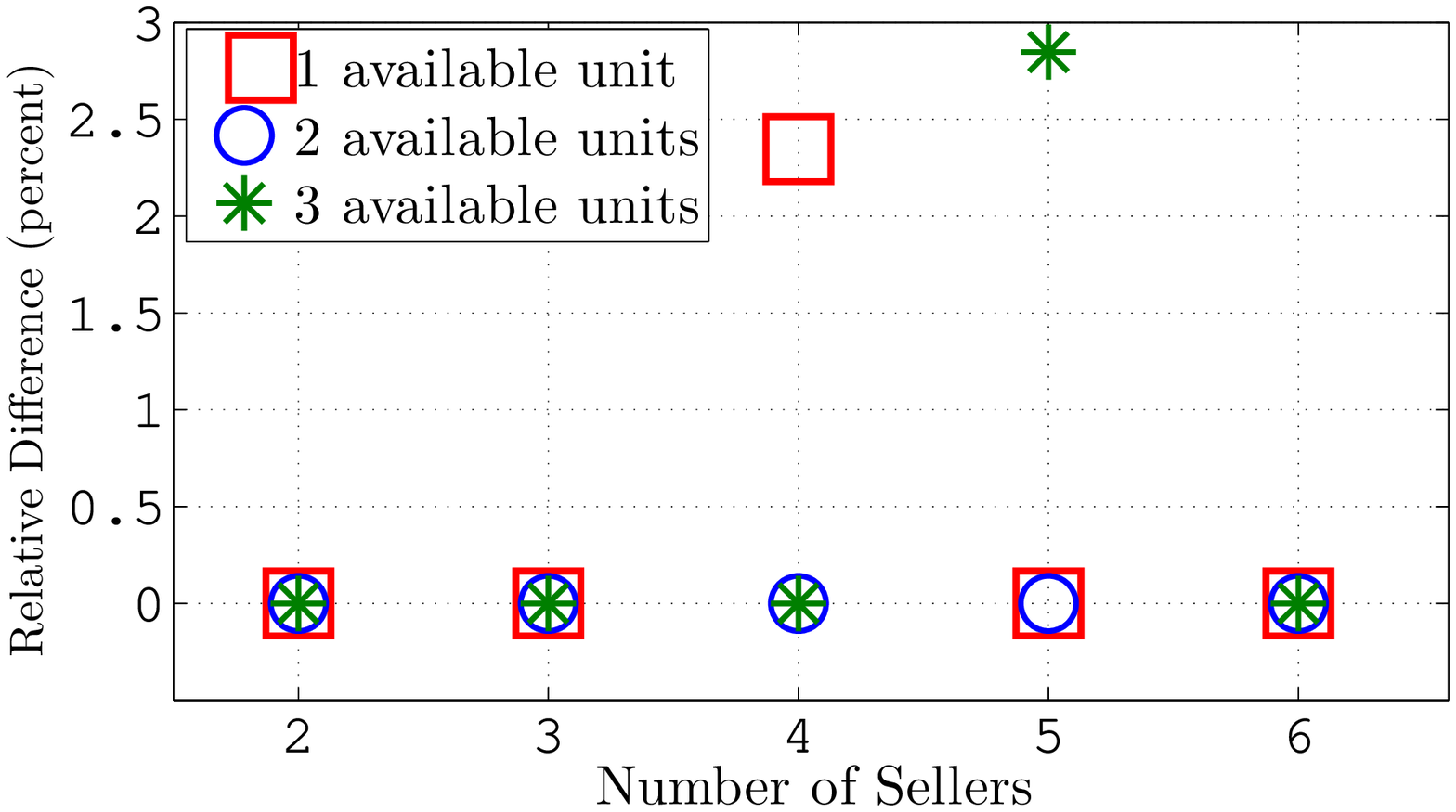}
\end{center}
\caption{The relative difference of the best response expected utility and the expected utility of the proposed strategy versus different number of sellers}
\label{figure:best_response}
\end{figure}

Note that the relative difference is zero for all availability levels when there exist $2$, $3$, and $6$ sellers in the market. Thus, the proposed strategy is a NE of the market in these cases. Although, in the case of 4 and 5 sellers the proposed strategy is not an NE when a seller has 1 and 2 units of commodity available, respectively, the relative difference in these cases is less than 3 percent. Thus, overall, we can say that the proposed strategy is a good approximations of NE when sellers are not too concerned about optimizing over small gains.

\bibliographystyle{IEEEtran}
\bibliography{bmc_article}

\end{document}